\definecolor{light-blue}{rgb}{0.8,0.85,1}
\definecolor{light-red}{rgb}{1,.4,.4}
\definecolor{purp}{rgb}{.7,.3,1}
\definecolor{yel}{rgb}{1,1,.5}
\definecolor{cy}{rgb}{0,1,1}
\newtheorem{proposition}{Proposition}
\theoremstyle{definition}
\newcommand{\co}{\colon\,}
\newcommand{\bT}{\mathbb T}
\newcommand{\bR}{\mathbb R}
\newcommand{\bC}{\mathbb C}
\newcommand{\bZ}{\mathbb Z}
\newcommand{\cK}{\mathcal K}
\newcommand{\cS}{\mathcal S}
\newcommand{\SL}{\mathop{\rm SL}}
\newcommand{\PSL}{\mathop{\rm PSL}}
\newcommand{\pt}{\text{pt}}
\newcommand{\lp}{\textup{(}}
\newcommand{\rp}{\textup{)}}
\newcommand{\im}{\operatorname{Im}}
\newcommand{\re}{\operatorname{Re}}
\newcommand{\tKR}{\widetilde{KR}}
\renewcommand\Re{\operatorname{Re}}
\newcommand{\BrR}{\operatorname{BrR}}
\newcommand{\Inv}{\operatorname{Inv}}
\newcommand{\sn}{\operatorname{sn}}
\newcommand{\cn}{\operatorname{cn}}
\newcommand{\dn}{\operatorname{dn}}
\newcommand{\jsc}{\operatorname{sc}}
\newcommand{\sd}{\operatorname{sd}}
\title[Elliptic curve orientifolds and $KR$]{String theory on 
elliptic curve orientifolds\\ and $KR$-theory}
\author{Charles Doran \and Stefan M\'{e}ndez-Diez}
\address{Department of Mathematical and Statistical Sciences,
University of Alberta,
Edmonton, AB T6G 2G1, Canada}
\curraddr[Stefan Mendez-Diez]{Department of Mathematics \& Statistics,
Utah State University,
Logan, UT 84322-3900, USA}
\email[Charles Doran]{doran@math.ualberta.ca}
\email[Stefan Mendez-Diez]{sdiez@math.ualberta.ca}
\thanks{CD and SMD supported by the
  Natural Sciences and Engineering Resource 
  Council of Canada, the Pacific Institute for the Mathematical
  Sciences, and the McCalla Professorship at the University of Alberta.}
\author{Jonathan Rosenberg}
\address{Department of Mathematics,
University of Maryland,
College Park, MD 20742-4015, USA} 
\email[Jonathan Rosenberg]{jmr@math.umd.edu}
\thanks{JR partially supported by NSF grant DMS-1206159.}
\begin{document}

\begin{abstract}
We analyze the brane content and charges in
all of the orientifold string theories on space-times of the
form $E\times \bR^8$, where $E$ is an elliptic curve with
holomorphic or anti-holomorphic involution. Many of these theories
involve ``twistings'' coming from the $B$-field and/or sign choices
on the orientifold planes. A description of these
theories from the point of view of algebraic geometry, using the
Legendre normal form, naturally divides them
into three groupings. The physical theories within each grouping
are related to one another via sequences of $T$-dualities. Our approach
agrees with both previous topological calculations of twisted
$KR$-theory and known physics arguments, and
explains how the twistings originate from both a mathematical and a
physical perspective. 
\end{abstract}
\keywords{orientifold, $O$-plane, $KR$-theory, twisting, $T$-duality,
  Jacobi function, elliptic curve, Legendre normal form}

\maketitle
\tableofcontents
\section{Introduction}
\label{sec:intro}

The purpose of this paper is to study type IIA and IIB string theories
on all possible orientifold backgrounds for which the underlying
spacetime manifold $X$ is $\bT^2 \times \bR^8$. The $\bT^2$ factor should
be equipped with a complex structure, making it into an elliptic
curve (over $\bC$), as well as with a holomorphic or anti-holomorphic
involution $\iota$, which defines the orientifold
structure.\footnote{Note that in some of the literature, the word 
  ``orientifold'' is used to denote the quotient space $X/\iota$, but
  it is really essential to keep track of the \emph{pair} $(X,\iota)$
  and not just the quotient.}  (We extend the involution
$\iota$ to $X$ by making it trivial on the $\bR^8$ factor.) 
We discover that, and also explain why, the orientifold theories on
elliptic curves are naturally divided into three groupings, with
the theories in each grouping related to one another by sequences of
$T$-dualities. 

This is quite a natural problem for a variety of reasons. Compactifying
string theories
on elliptic curves is motivated by the fact that they are the simplest
compact Calabi-Yau manifolds (complex manifolds with a global
non-vanishing holomorphic volume form). Working with orientifolds is
natural also --- the orientifold construction generalizes the GSO
(Gliozzi-Scherk-Olive)
projection and encompasses most of the standard supersymmetric string theories.

\subsection{Motivation}
\label{sec:motivation}
The sigma-model of orientifold string theory on a spacetime $X$ with
involution $\iota$
involves \emph{equivariant} maps $\varphi\co \Sigma\to X$, so
that $\iota\circ\varphi = \varphi\circ\Omega$. Here $\Sigma$ is an
oriented $2$-manifold, possibly with boundary (the case of open strings),
called the \emph{string worldsheet}, and $\Omega$, called the
\emph{worldsheet parity operator}, is an orientation-reversing
involution on $\Sigma$. We require $\Sigma/\Omega$, though not
necessarily $\Sigma$ itself, to be connected.  (Thus an allowable
possibility is $\Sigma = \Sigma_0 \amalg \overline{\Sigma}_0$, where
$\Sigma_0$ is a connected oriented surface, $\overline{\Sigma}_0$ is the same
surface with orientation reversed, and $\Omega$ interchanges the two.)
See for example \cite{Distler:2009ri}; there some extra twisting data,
which we are ignoring for the moment, is also taken into account, and
the notation is slightly different.

Orientifold string theories include all of the standard theories of
types IIA, IIB, and I, as well as a number of variants sometimes
denoted IA, \~I and $\widetilde{\mathrm IA}$. We analyze all possible
$T$-duality relationships between these theories when $X$ is the
product of an elliptic curve with flat $8$-space. It will be apparent from
the results below that all of these theories should be considered 
together, since they are linked to one another by $T$-duality.
 
For orientifold theories, as explained in \cite[\S5.2]{Witten:1998},
\cite{Hori:1999me} and 
\cite{Gukov:1999}, D-brane charges are given by
$KR$-theory in the sense of Atiyah \cite{MR0206940}. We compute
the relevant $KR$-groups in all cases, and relate these groups to the
actual branes that arise. We also study how the $KR$-groups and branes
are related under $T$-duality and mirror symmetry. In this context it is
useful to quote from \cite[\S6]{MR2116734}: ``Since $T$-duality is
related to the Fourier transform, and since the Fourier 
transform of a real function is not necessarily real, a theory of
$T$-duality in type I string theory necessarily involves $KR$-theory, or
Real $K$-theory in the sense of Atiyah.'' 

There is already
a fair body of literature on orientifold compactifications on $S^1$,
and there is even some literature on $T^2$ orientifolds (e.g.,
\cite[\S7.2]{Gao:2010ava}). However, to our knowledge, this is the
first attempt at a systematic study of all type II orientifold string
theories on $\bT^2\times \bR^8$ that includes a calculation of all the
$KR$ groups and a study of all possible $T$-dualities. We also take into
account all possibilities for the complex structure, using the
classification in \cite{Bates:2006}. Considering the complex structure
is important, since elliptic curves are the simplest case for checking
predictions of mirror symmetry.  Understanding elliptic curve
orientifolds will also be the first step in attacking orientifolds on
higher-dimensional Calabi-Yau manifolds, such as abelian varieties, K3
surfaces, and most of all, Calabi-Yau 3-folds.  For example, a large
class of interesting K3 surfaces come with elliptic curve fibrations.

\subsection{Outline of the paper}
\label{sec:outline}

This paper begins in Section \ref{sec:holoandantiholo}
with a review of the classification of holomorphic
and anti-holomorphic involutions on elliptic curves, taken from
\cite{Bates:2006}. The classification of anti-holomorphic involutions
is equivalent to the classification of elliptic curves defined over
$\bR$, found in \cite{MR640091}. Next, in Section
\ref{sec:KRcalcs} we review the $KR$-theory of Atiyah and all its twisted
versions (including those coming from a sign choice on the components
of the fixed set). We then record all the groups that occur for the
various possible involutions and twistings. Most of these calculations
are taken from \cite{Doran:2013sxa}, but we also 
relate the results to earlier calculations
made in \cite{MR1936583} and \cite{Karoubi:2005} and to
classifications of twistings of $KR$ by Moutuou
\cite{MoutuouThesis,2011arXiv1110.6836M}. 

The heart of this paper consists of Sections \ref{sec:Tduality},
\ref{sec:geoTduality}, and \ref{sec:branes}. We begin by describing
the $T$-dualities that relate the various orientifold string theories on
elliptic curves (with a holomorphic or anti-holomorphic
involution). Most of these theories only live on a certain portion of
the moduli space of elliptic curves with K\"ahler structure and
$B$-field.  This moduli space is described by two parameters $\tau$
(describing the complex structure) and $\rho$ (describing the K\"ahler
form and $B$-field), which are interchanged under $T$-duality.
It turns out that the orientifold theories break into three groupings,
and iterated $T$-dualities relate all of the theories in a single
grouping. This fact was known before (e.g., in \cite{Gao:2010ava},
though some cases go back to \cite{MR1075783}, \cite{Hori:1999me} and
\cite{Witten:1998-02}), but our description of what happens to the
involutions is more explicit. In Section \ref{sec:geoTduality}, we
attack the problem of how to explain the three $T$-duality groupings
in purely geometric terms, without recourse to physical arguments.
Here it turns out that algebraic and complex geometry plays a crucial
role; the $T$-duality groupings can be explained perfectly in terms of
the Legendre normal forms of real elliptic curves, and the
uniformization of these curves in terms of Jacobi elliptic
functions. Finally, in Section \ref{sec:branes}, we give a complete
description of the $D$-brane and $O$-plane charges in the various
theories, and explain how these transform under $T$-duality.

\section{The classification of holomorphic and anti-holomorphic involutions}
\label{sec:holoandantiholo}

A torus $\bT^2$ with a complex structure  can be identified with 
$\bC/\Lambda$
for some lattice $\Lambda$. The holomorphic maps $\bC/\Lambda \to
\bC/\Lambda'$ 
are given by complex affine maps $z\mapsto \gamma z
+ \delta$ sending $\Lambda$ into $\Lambda'$. Thus we can rotate and scale so
that the lattice $\Lambda$ is generated by $1$ and a complex number
$\tau$ with $\im{\tau}>0$. Note that
$$\tau\mapsto\frac{a\tau+b}{c\tau+d},\mbox{    }\left(\begin{array}{cc} a
  & b \\ c & d \end{array}\right)\in\PSL(2,\bZ)$$ 
leaves the torus invariant up to holomorphic isomorphism. For
applications to string theory we want 
our torus to be equipped with a K\"ahler form $J\sim\sqrt{G}dx\wedge
dy$ and the NS-NS $2$-form $B$-field $B$, which combine to give an
invariant $\rho=\int_{\bT^2}(B+iJ)$ in the upper half-plane.
$T$-duality along with the gauge
invariance $\rho\mapsto \rho+1$ implies 
$$\rho\mapsto\frac{a\rho+b}{c\rho+d},\mbox{    }\left(\begin{array}{cc} a
  & b \\ c & d \end{array}\right)\in\PSL(2,\bZ)$$ 
also leaves the torus invariant.  
Therefore, the quantum moduli space
of $\bT^2$ (with its geometry as given by $\rho$)
is given by a product of
two copies of the quotient of the upper half-plane by $\PSL(2,\bZ)$.
In this context, mirror symmetry \cite{MR1358624}
corresponds to the interchange $(\tau,\rho)\mapsto(\rho,\tau)$.

In \cite{Bates:2006}, the authors look at all holomorphic and
anti-holomorphic involutions 
of $\bT^2$ combined with the worldsheet parity operator, which
correspond to the possible orientifold structures for type IIB and
type IIA theories, respectively.

The fixed set of a holomorphic involution on a complex elliptic curve
$E$ is a closed complex
submanifold, hence is either empty, a finite non-empty set, or
everything.  Holomorphic involutions are always of the form $z\mapsto
\pm z + \delta$. When we choose the $+$ sign, $\delta$ is a
$2$-torsion point in $E$, hence is $0$ (giving the trivial involution)
or an element of $E$ of order precisely $2$ (giving a free
involution). When we choose the $-$ sign, $\delta$ can be any point in
$E$ and there are exactly $4$ fixed points (the $2$-torsion points in
$E$ shifted by $\delta/2$). 

An anti-holomorphic involution $\varphi$ of $\bT^2$ must be induced by
a self-map $z\mapsto\alpha\bar z+\beta$ of $\bC$ preserving $\Lambda$ and
of order $2$ modulo translation by an element of $\Lambda$. 
All of the anti-holomorphic involutions of $\bT^2$ were worked out in \cite{MR640091} and they are given by Table
\ref{Table:class_invo}.  A necessary and sufficient condition for an
elliptic curve to admit an anti-holomorphic involution is for its
$j$-invariant to be real, $j(\tau)\in \bR$. 

\begin{table}[htb]
\noindent\makebox[\textwidth]{\begin{tabular}{|| c | c | c | c | c | c | c ||}
\hline
Case & $\tau$ & $j(\tau)$ & $\alpha$ & $\beta$ & $s$ & Fixed pts\\
\hline\hline
\multirow{4}{*}{(a)} & \multirow{4}{*}{\small $i\tau_2$ with $\tau_2>1$} &
\multirow{4}{*}{$j>1$} & $1$ & $0$ &$2$ & $\im(z)=0$;
$\im(z)=\tau_2/2$ \\ \cline{4-7} 
& & & $-1$ & $0$ &$2$ & $\re(z)=0$; $\re(z)=1/2$ \\ \cline{4-7}
& & & $1$ & $1/2$ &$0$ &  \\ \cline{4-7}
& & & $-1$ & $\tau/2$ &$0$ & \\ \hline
\multirow{3}{*}{(b)} & \multirow{3}{*}{$i$} & \multirow{3}{*}{$1$} &
$1\sim -1$ & $0$ &$2$ & $\im(z)=0$; $\im(z)=1/2$ \\ \cline{4-7} 
& & & $i\sim -i$ & $0$ &$1$ & $z=re^{i\pi/4}$, $r\in\bR$ \\ \cline{4-7}
& & & $1\sim -1$ & $1/2$ &$0$ &  \\ \hline
\multirow{2}{*}{(c)} & \multirow{2}{*}{\small $e^{i\theta}$ with
  $\pi/3<\theta<\pi/2$} & \multirow{2}{*}{$(0,1)$} & $\tau$ & $0$ &$1$
& $z=re^{i\theta/2}$, $r\in\bR$ \\ \cline{4-7} 
& & & $-\tau$ & $0$ &$1$ & $z=ire^{i\theta/2}$, $r\in\bR$ \\ \hline
\multirow{2}{*}{(d)} & \multirow{2}{*}{$e^{i\pi/3}$} &
\multirow{2}{*}{$0$} & $1\sim e^{2i\pi/3}\sim e^{4i\pi/2}$ & $0$ &$1$
& $\im(z)=0,\sqrt{3}/2$ \\ \cline{4-7} 
& & & $e^{i\pi/3}\sim -1\sim e^{5i\pi/3}$ & $0$ &$1$ & $\re(z)=0,1/2$ \\ \hline
\multirow{2}{*}{(e)} & \multirow{2}{*}{\small $\frac{1}{2}+i\tau_2$ with
  $\tau_2>\frac{1}{2}\sqrt{3}$} & \multirow{2}{*}{$j<0$} & $1$ & $0$
&$1$ & $\im(z)=0,\tau_2$ \\ \cline{4-7} 
& & & $-1$ & $0$ &$1$ & $\re(z)=0,1/2$ \\ \hline
\end{tabular}}
\smallskip
\caption{Table of anti-holomorphic involutions}
\label{Table:class_invo}
\end{table}

Table \ref{Table:class_invo} gives the invariant known as the 
\textit{species}, $s$, of each involution. The species gives the number of
components of the fixed point locus of the involution. The authors of
\cite{Bates:2006} show that the species also gives the charges of the
$O$-planes present. The classification in Table \ref{Table:class_invo}
also has an interpretation in terms of algebraic geometry. Any complex
torus of complex dimension $1$ is automatically a smooth projective
variety and an elliptic curve $E$ defined over $\bC$. An
anti-holomorphic involution $\iota$ makes this into a \emph{real} elliptic
curve; i.e., $E$ is defined over $\bR$ and $\iota$ corresponds to the
action of $\text{Gal}(\bC/\bR)$ on $E(\bC)$. The fixed set $E^\iota$ is the
set of real points $E(\bR)$; topologically it is just a disjoint union
of $s$ circles.  The fact that $s\le 2$ is just a special case (since
elliptic curves have genus $1$) of Harnack's curve theorem, and the
classification by species is familiar from the theory of real elliptic
curves \cite{MR640091}. The classification of IIA orientifold theories
by species was pointed out by Sagnotti in \cite{MR1008294}.

As we said earlier, when we combine the involutions in Table
\ref{Table:class_invo} with the worldsheet parity operator, $D$-brane
and $O$-plane charges should be classified by $KR$-theory. In
\cite[Example A.5]{Karoubi:2005} the authors calculate the $KR$-theory for
involutions with non-trivial species, i.e., $s=1$ or $2$. They show 
$$\begin{aligned}
KR^0(\bT^2)&\cong\bZ^2\oplus\bZ_2^{s-1},\\
KR^{-1}(\bT^2)&\cong\bZ\oplus\bZ_2^{s+1}.
\end{aligned}$$

Note that when the fixed locus has $2$ components,
$KR^{-1}(\bT^2\times\bR^8)$ is isomorphic to
$KO(\bT^2\times\bR^8)$. At the moment, this might appear accidental,
but we will 
see that this can be explained by a chain of $T$-duality isomorphisms.

\section{$KR$ with a sign choice and calculations for tori}
\label{sec:KRcalcs}

It was proposed in \cite{Minasian:1997} and \cite{Witten:1998}, 
and is now generally accepted, that D-brane charges in string theory
should be classified by some variant of $K$-theory. In 
\emph{orientifold} theories, charges should be classified by some variant
of $KR$-theory, as described by Witten in \cite{Witten:1998}. 
However, classical $KR$-theory can only apply when all $O$-planes
have the same charge.  When $O$-planes with opposite charges are
present, the appropriate substitute is \emph{$KR$-theory with a sign
  choice}, which we described in the companion paper
\cite{Doran:2013sxa}. In this section, we will briefly review
$KR$-theory and $KR$-theory with a sign choice, as well as certain
twisted variants.  All these twistings of $KR$-theory were discussed
and classified by Moutuou 
\cite{MoutuouThesis,2011arXiv1110.6836M,2012arXiv1202.2057M},
though this may not be readily apparent because of the great
generality of Moutuou's framework. (Moutuou deals with $\bZ_2$-graded
algebras over
Real groupoids, but here we only need the case where the grading is
trivial and the groupoid reduces to a Real space.)

We will also discuss some of the different notations
appearing in the literature and the relations between them, and review
and further amplify the calculations from \cite{Doran:2013sxa} for the
case of $2$-torus orientifolds.  This section is purely topological;
we temporarily ignore geometrical structures such as Riemannian
metrics, complex structures, and K{\"a}hler forms, except insofar as
they illuminate the topology.

$KR$-theory, in the sense of Atiyah \cite{MR0206940}, is the cohomology
theory that classifies stable isomorphism classes of virtual Real vector
bundles on a ``Real'' space $(X,\iota)$. A Real space is a locally
compact (Hausdorff) space $X$, together with 
a self-homeomorphism $\iota$ of $X$ of period $2$. A Real vector
bundle on such a space is a complex vector bundle $E$, together with a
conjugate-linear bundle automorphism of $E$ of period $2$, covering $\iota$.
If $X$ is compact, $KR(X)$ is the group of formal differences
$[E]-[F]$, where $E$ and $F$ are Real vector bundles over $X$, and we
identify $[E]-[F]$ with $[E']-[F']$ if there is an isomorphism of Real
bundles $E\oplus F'\oplus G \cong E'\oplus F\oplus G$ for some Real
bundle $G$ over $X$. When $X$ is only \emph{locally} compact, $KR(X)$
is defined similarly, but with $E$ and $F$ required to be trivialized
and isomorphic in a neighborhood of infinity.

For string theory on a smooth manifold $X$, the charges of
$D$-branes are classified by 
pairs of vector bundles $(E,F)$, the Chan-Paton bundles on the branes
and anti-branes, modulo the equivalence $(E,F)\sim(E\oplus H, F\oplus
H)$. $D$-branes on orientifolds of the form $X/(\iota\cdot\Omega)$,
where $X$ is a smooth manifold, $\iota$ is an involution on $X$, and
$\Omega$ is the world sheet parity operator, are classified by vector
bundles on $X$ that are equivariant under the action of
$\iota\cdot\Omega$. $\Omega$ sends a vector bundle $E$ to its complex
conjugate $\bar{E}$. Therefore, a vector bundle $E$ is
$\iota\cdot\Omega$-equivariant if there exists an isomorphism,
$\varphi$, from the pullback $\iota^*E$ to $\bar{E}$ such that
$(\varphi\iota^*)^2=1$, which is exactly the Reality condition of
Atiyah. Thus we naturally arrive at 
the group $KR(X)$ (the spacetime involution $\iota$ being understood). 

More generally, $D$-brane charges are classified by $KR^{-j}(X)$, where the
index $j$ depends on the dimension of the brane. To define
the higher $KR$-groups we must first introduce some notation. Let
$\bR^{p,q}=\bR^p+i\bR^q$ with the involution $\iota$ given by complex
conjugation, and let $S^{p,q}$ be the unit sphere (of
dimension $p+q-1$) in $\bR^{p,q}$. (In
this notation, the roles of $p$ and $q$ are the reverse of those in the
notation used by Atiyah in \cite{MR0206940}, but the same as the
notation in \cite{MR1031992}, \cite{Bergman:1999} and
\cite{Olsen:1999}.) We define 
$$KR^{p,q}(X)=KR(X\times\bR^{p,q}).$$
This obeys the periodicity condition \cite[Theorem 2.3]{MR0206940}
$$KR^{p,q}(X)\cong KR^{p+1,q+1}(X),$$
where the isomorphism is given by cup product with the Bott class.
Since $KR^{p,q}$ only depends on the difference $p-q$, we can define
$$KR^{q-p}(X)=KR^{p,q}(X).$$
$KR^j(X)$ is periodic with period $8$.

When we compactify string theory on on an $m$-dimensional
space $M$, so that the spacetime manifold is $\bR^{10-m,0}\times M$,
we are interested in the charges of $D$-branes in the non-compact
dimensions. So we want to consider $Dp$-branes of codimension $9-m-p$
in $\bR^{9-m,0}$. These can arise from both $Dp$-branes located at a
particular point in $M$ or higher dimensional $D$-branes that wrap
non-trivial cycles in M. Furthermore, we only want to consider
systems with finite energy, so we only want to classify systems that are
asymptotically equivalent to the vacuum in the transverse space
$\bR^{9-m-p,0}$. That means that the system must be equivalent to the
vacuum on an entire copy of $M$ at infinity. Mathematically this means
we want to add a copy of $M$ at infinity (i.e., take the 
product with $M$ of the one-point
compactification of $\bR^{9-m-p,0}$) and consider bundles on
$S^{10-m-p,0}\times M$ that are trivialized on the copy of $M$ at
infinity. Such bundles are classified by $KR^{-i}(S^{10-m-p,0}\times
M, M)$. This can be related to the $KR$-theory of $M$ via the
isomorphism 
\begin{equation}
\label{eqn:relvabs}
KR^{-i}(S^{10-m-p,0}\times M, M)\cong KR^{p+m-9-i}(M).
\end{equation}
$Dp$-brane charges are classified by $KR^{p+m-9-i}(M)$ where $i$ will
depend on the string theory and $M$. We are considering the case when
$M$ is a Real elliptic curve, so $m=2$.  

This classification of $D$-brane charges includes the usual
classification of type I brane charges by $KO$-theory and type II
brane charges by complex $K$-theory. The type I theory is obtained by letting
$\iota=1$. This corresponds to the well known fact that the type I
theory is
the type IIB theory divided out by the action of $\Omega$. In terms of
the $KR$-theory classification, being equivariant means that $E$ is
isomorphic (in a way fixing the base $X$)
to $\bar{E}$, or that $E$ is real. The classification of
equivariant Real bundles on $X$ is thus the same as that of real
bundles on $X$, giving the well known mathematical result \cite{MR0206940} 
$$KR(X)\cong KO(X)$$
when $\iota$ is trivial. To obtain the usual type II classification of
$D$-branes in a spacetime $X$, we use the result from
\cite[Proposition 3.3]{MR0206940} 
$$KR\left(\xymatrix@C=0pt@M-1pt{X \ar@{<->}@/^1pc/[rr]&\amalg&X}
  \right)\cong K(X),$$
where the involution exchanges the two copies of $X$.

Often, when studying the $K$-theory classification of $D$-branes for
the type II theories on a smooth manifold, the full indexing of
$K^{-i}(X)$ is ignored, since it has period $2$. While this is often
most convenient for the purposes of mathematical calculations, to
determine the brane content it is often more useful to use the
relative $K$-theory given by the isomorphism \eqref{eqn:relvabs}. For
the trivial case of type IIB $D$-branes in Minkowski spacetime, the
distinction between $K^0(\pt)$ to classify $D9$-branes and
$K^{-2}(\pt)$ is inconsequential. However, for our current purposes,
the distinction is very important. So we will want to keep track of
the full $\bZ/8$-graded group $KR^*(X)$.

$KR$-theory with a sign choice, introduced in \cite{Doran:2013sxa}, is
a variant of $KR$-theory for a Real space $(X,\iota)$ with a choice 
$\alpha$ of $\pm$
signs, one for each component of the fixed set $X^\iota$. This theory
needs to be defined via noncommutative geometry, and we refer the
reader to \cite{Doran:2013sxa} for the precise definition, but it has
the property that on a component $F$ of $X^\iota$ with positive sign
choice, $KR_\alpha^*(F) = KO^*(F)$, the usual $K$-theory of real
vector bundles, whereas on a component $F$ of $X^\iota$ with negative sign
choice, $KR_\alpha^*(F) = KSp^*(F)$, the $K$-theory of quaternionic
vector bundles.  This is precisely what is appropriate if $F$ is an
$O^+$- (resp., $O^-$-) plane. (Note that there is some
disagreement in the literature about what should be called an $O^+$-plane
and what should be called an $O^-$-plane, but we are following the
convention in \cite[\S2.3]{Witten:1998-02}.  As Witten points out, the
associated tadpoles have \emph{opposite} sign.)

The basic facts about $KR$-theory 
can be found in \cite{MR0206940} or in \cite[\S1.10]{MR1031992} ---
note that these sources use opposite indexing conventions and that we are
following Lawson-Michelsohn, not Atiyah, so that $\bR^{p,q}
=\bR^p\oplus i\bR^q$ with involution fixing the $\bR^p$ summand 
and multiplying by $-1$ on the $\bR^q$ summand. For locally compact but
non-compact Real spaces, we always use $KR$-theory with compact supports. 
For any real space $(X,\iota)$ (often we will suppress the involution
in the notation), $KR^j(\bR^{1,0}\times X) \cong KR^{j-1}(X)$
and $KR^j(\bR^{0,1}\times X) \cong KR^{j+1}(X)$. If $X$ is compact and
has an $\iota$-fixed point $x_0$, then the inclusion
$\{x_0\}\hookrightarrow X$ is equivariant and equivariantly split, so
$KR^j(X) \cong \tKR^j(X) \oplus KO^j$, where $KO^j$ means $KO^j(\pt)$
and $\tKR^j(X) = KR^j(X\smallsetminus \{x_0\})$. Thus
$KR^j(S^{1,1})\cong  KR^j(\bR^{0,1}) \oplus KO^j \cong KO^{j+1} \oplus KO^j$, and
$KR^j(S^{2,0}) \cong  KR^j(\bR^{1,0}) \oplus KO^j \cong KO^{j-1} \oplus KO^j$. We
also have $KR^j(S^{0,2}\times X) \cong KSC^j(X)$, the self-conjugate
$K$-theory of Anderson \cite{Anderson}
and Green \cite{MR0164347}, by \cite[Proposition 3.5]{MR0206940}.

Note that since our spacetime manifolds will always be of the form
$X\times \bR^{8,0}$, where $X$ is a two-dimensional Real space, and
since $KR$-theory has Bott periodicity of period $8$, there is a
natural isomorphism $KR^j(X\times \bR^{8,0}) \cong KR^j(X)$, and we
can ignore the $\bR^8$ factor for purposes of this section. (However,
it will be needed in Section \ref{sec:Tduality} when we talk about
specific branes.)

In \cite{Doran:2013sxa}, we computed the $KR$ with a sign choice for
all possible holomorphic or antiholomorphic involutions on complex
elliptic curves $X$. In fact there are not that many different
topological possibilities. 

\subsection{Holomorphic involutions}
\label{sec:KRholo}

If the involution is holomorphic, either it
is trivial, or $X$ is homeomorphic to $S^{1,1}\times
S^{1,1}$ as a Real space, or the involution is free and $X$ is homeomorphic to
$S^{0,2}\times S^{2,0}$. 

\subsubsection{Trivial involutions}
\label{sec:KRtriv}

For spaces with trivial involution, $KR$-theory reduces to
$KO$-theory. Topologically, a $\bT^2$ with trivial involution is just
the Real space $S^{2,0}\times S^{2,0}$, and $KR^j(S^{2,0}\times
S^{2,0}) \cong KO^j(S^1\times S^1) \cong KO^j(S^1) \oplus
KO^{j-1}(S^1) \cong KO^j \oplus KO^{j-1} \oplus KO^{j-1} \oplus
KO^{j-2}$. The associated physical theory is the type I string theory on $\bT^2$.

Just for completeness, note that if $E$ is an elliptic curve with
trivial involution, we can put a holomorphic involution on $E\amalg E$
that simply interchanges the two factors. This space is $\bT^2\times
S^{0,1}$ as a Real space, and $KR^j(\bT^2\times S^{0,1}) \cong
K^j(\bT^2)$, which is $\cong \bZ^2$ in each degree.  The associated
physical theory is ordinary Type IIB theory on $E$ (with no
involution). 

\subsubsection{Four fixed points}
\label{sec:KR4fixed}

A $\bT^2$ with a holomorphic involution with four fixed points is
topologically just $S^{1,1}\times S^{1,1}$. And we obtain
\[
\begin{aligned}
KR^j(S^{1,1}\times
S^{1,1}) &\cong KR^j(S^{1,1}) \oplus KR^j(\bR^{0,1}\times S^{1,1})\\
&\cong KR^j(S^{1,1}) \oplus KR^{j+1}(S^{1,1})\\
&\cong KO^j \oplus KO^{j+1} \oplus KO^{j+1} \oplus KO^{j+2}.
\end{aligned}
\]

When there are four fixed points, there are two other interesting
possible assignments of signs. When the sign choice is $(+,+,-,-)$, we
can identify $X$ with 
\[
S^{1,1}_{(+,-)}\times S^{1,1}\cong (S^{1,1}_{(+,-)}\times \{\pt\}) \amalg
(S^{1,1}_{(+,-)}\times \bR^{0,1})
\]
and we obtain
\[
KR^j_{(+,+,-,-)}(S^{1,1}\times S^{1,1})\cong KSC^{j+2} \oplus KSC^{j+1},
\]
as was shown in \cite{Doran:2013sxa}.

The sign choice $(+,+,+,-)$ requires a more complicated calculation
which was done in \cite{Doran:2013sxa}; the result appears in Table
\ref{table:twistedKO}.

\begin{table}[htb]
\begin{center}
\begin{tabular}{||c||c||c||c||}
\hline
$j$ mod $8$ & $KO^j(T^2,\widetilde w_2)$ & $KR^j(\text{species }1)$
& $KR^j_{(+,+,+,-)}(S^{1,1}\times S^{1,1})$\\
\hline\hline
$0$ &$\bZ\oplus\bZ_2^2$ &$\bZ^2$ & $\bZ$\\
$-1$&$\bZ_2^2$ &$\bZ\oplus\bZ_2^2$ & $\bZ^2$ \\
$-2$&$\bZ$ &$\bZ_2^2$ & $\bZ\oplus\bZ_2^2$\\
$-3$&$\bZ^2$ &$\bZ$ & $\bZ_2^2$\\
$-4$&$\bZ$ &$\bZ^2$& $\bZ$ \\
$-5$&$0$ &$\bZ$ & $\bZ^2$ \\
$-6$&$\bZ$ &$0$ & $\bZ$ \\
$-7$&$\bZ^2$ &$\bZ$  & $0$\\
\hline\hline
\end{tabular}
\caption{$KO^j(T^2,\widetilde w_2)$, $KR^j(\text{species }1)$, and
  $KR^j_{(+,+,+,-)}(S^{1,1}\times S^{1,1})$}  
\label{table:twistedKO}
\end{center}
\end{table}

Note that mathematically we could also consider the sign choice
$(-,-,-,+)$. This however does not make physical sense. If the net
O-plane charge is negative, then tadpole cancellation would require
adding anti-branes, which would violate supersymmetry. For mathematical
completeness, we note that the relevant $KR$-groups can be obtained from
$KR_{(+,+,+,-)}^j(S^{1,1}\times S^{1,1})$ by shifting the index by
$4$. 

\subsubsection{Free involutions}
\label{sec:KRfreeholo}

A $\bT^2$ with a holomorphic involution with no fixed points is
topologically just $S^{0,2}\times S^{2,0}$ (which is equivalent to $S^{0,2}\times S^{0,2}$ as will be discussed later). And we obtain 
\[
\begin{aligned}
KR^j(S^{0,2}\times
S^{2,0}) &\cong KSC^j(S^1)\\
&\cong KSC^j \oplus KSC^{j-1}.
\end{aligned}
\]
Note that in this case the groups are periodic with period $4$, which
is in accordance with \cite[Proposition 1.8]{Karoubi:2005}, though in
general that statement is false ($S^{0,4}$ provides a counterexample,
as one can see from \cite{MR0206940}).

\subsection{Antiholomorphic involutions}
\label{sec:KRantiholo}

The study of $KR$-theory for antiholomorphic involutions is a special
case of the study of $KR$-theory for real algebraic curves. This has been
studied extensively in \cite{Karoubi:2005} and \cite{MR1936583},
which provide methods of calculation, though we will need to correct
two misprints in those papers. We can also take the antiholomorphic
involution on 
$E\amalg \bar E$ that interchanges the two factors, and we again get
complex $K$-theory $K^j(\bT^2)$, but this time with a focus on odd-dimensional
D-branes. The associated physical theory is ordinary Type IIA theory
on $E$ (with no involution).

\subsubsection{Species $2$}
\label{sec:KRspecies2}

A $\bT^2$ with an antiholomorphic involution of species $2$ is
topologically just $S^{1,1}\times S^{2,0}$. And we obtain 
\[
\begin{aligned}
KR^j(S^{1,1}\times S^{2,0}) &\cong KO^j(S^1) \oplus KO^{j+1}(S^1)\\
&\cong KO^j \oplus KO^{j-1}  \oplus KO^{j+1} \oplus  KO^j.
\end{aligned}
\]

In case of species $2$, there is also the sign choice $(+,-)$, in
which case we obtain 
\[
\begin{aligned}
KR^j_{(+,-)}(S^{1,1}\times S^{2,0}) &\cong KR^j_{(+,-)}(S^{1,1})
\oplus KR_{(+,-)}^{j-1}(S^{1,1})\\ 
&\cong KSC^{j+1} \oplus KSC^j.
\end{aligned}
\]

\subsubsection{Species $0$}
\label{sec:KRspecies0}

A $\bT^2$ with an antiholomorphic involution of species $0$ is
topologically just $S^{0,2}\times S^{1,1}$. And we obtain 
\[
\begin{aligned}
KR^j(S^{0,2}\times S^{1,1}) &\cong KSC^j(S^{1,1}) \\
&\cong KSC^j \oplus KSC^j(\bR^{0,1}) \cong  KSC^j \oplus KSC^{j+1}.
\end{aligned}
\]
Note that in this case the groups are periodic with period $4$.
Furthermore, the final result is in accordance with \cite[Example
  A.3]{Karoubi:2005} with genus $g=1$. (There is a small misprint in
  \cite{Karoubi:2005}; the calculation of $KR^{-*}(X)$ is correct and
  does follow from collapse of the spectral sequence $H^p_{G}(X; KR^q)
  \Rightarrow KR^{p+q}(X)$, but $E_2^{2,-2}=H^2(X/G; \bZ(-1))\cong
  \bZ$, not $0$. For purposes of our present application, $G=\bZ_2$
  and $X=\bT^2$, $X/G$ is a Klein bottle, and $H^2(X/G; \bZ(-1))\cong
  H_0(X/G; \bZ)\cong \bZ$ by (twisted) Poincar\'e duality.)

\subsubsection{Species $1$}
\label{sec:KRspecies1}

The calculation of $KR^j(X)$ when $X$ is a real elliptic curve of
species $1$ is a bit tricky and was done in \cite[Theorem
4]{Doran:2013sxa}. The result is that
\[
KR^j(X)\cong \left(KO^j\right)^2 \oplus K^{j-1},
\]
and also appears in Table \ref{table:twistedKO}.

It is interesting to compare this calculation with
\cite[Corollary 4.2]{Karoubi:2005}, that says that the natural map
$K_j(X;\bZ_2) \to KR^{-j}(X;\bZ_2)$ sending algebraic to topological
$K$-theory is an isomorphism for $j$ sufficiently large ($j\ge 1$ in
fact will do). Here $K$-theory or $KR$-theory with
$\bZ_2$ coefficients is related to the integral theory by a universal
coefficient or Bockstein exact sequence
\begin{equation}
\begin{aligned}
0 &\to KR^{-j}(X)/2 \to KR^{-j}(X;\bZ_2) \to {}_2KR^{-j+1}(X)\to 0,\\
0 &\to K_j(X)/2 \to KR_j(X; \bZ_2)  \to {}_2K_{j-1}(X)\to 0,
\end{aligned}
\label{eq:UCT}
\end{equation}
where ${}_2KR^{-j+1}(X)$ denotes the $2$-torsion in
$KR^{-j+1}(X)$, and similarly for $K_j$. The torsion subgroup
of $K_j(X)$ was computed in \cite{MR1936583}, but there
is a  small typo in the statement of \cite[Main Theorem
0.1]{MR1936583}. $K_2(X)_{\text{tors}}$ should contain $\nu+1$ copies of
$\bZ_2$ (here $\nu$ is the species), not $\nu$ copies as written. 
(This result was miscopied from \cite[Theorem 4.6]{MR1936583}.)
The $K$-theory with $\bZ_2$ coefficients, or the $KR$ theory with
$\bZ_2$ coefficients, is then as given in Table 
\ref{table:KZtwo}.

\begin{table}[htb]
\begin{center}
\begin{tabular}{||c||c||}
\hline
$j$ mod $8$ & order of $K_j(X;\bZ_2)\cong KR^{-j}(X;\bZ_2)$\\
\hline\hline
$0$&$2^2 $\\
$1$&$2^3 $ \\
$2$&$2^4 $\\
$3$&$2^3 $\\
$4$&$2^2 $ \\
$5$&$2 $ \\
$6$&$0 $ \\
$7$&$2 $\\
\hline\hline
\end{tabular}
\caption{algebraic $K$-theory mod $2$ for a real elliptic curve of
  species $1$}  
\label{table:KZtwo}
\end{center}
\end{table}

\subsection{Twisted groups}
\label{sec:KRtwist}

Finally, in the case of the trivial involution, we also have twisted
groups with a non-zero twist $\widetilde w_2 \in H^2(\bT^2, \bZ_2)$. Such
twisted $KO$-theory was introduced in \cite{MR0282363}, and can be
identified with the topological $K$-theory of a noncommutative algebra
that is locally, but not globally, isomorphic to continuous functions
on $\bT^2$ with values in a matrix algebra over $\bR$, since the
automorphism group of $M_n(\bR)$ has the homotopy type of $PO(n)$ and
$BPSO(2n)$ approximates $K(\bZ_2, 2)$ in low dimensions. The twisted
$KO$-groups also appear in Table \ref{table:twistedKO} and in Witten's
``theory with no vector structure'' \cite{Witten:1998-02}.

Twistings and sign choices in $KR$-theory
have been unified in work of {Moutuou} 
\cite{MoutuouThesis,2011arXiv1110.6836M}. He constructs and
computes a \emph{graded Brauer group} \cite{2012arXiv1202.2057M} of
graded real  continuous-trace algebras over
a Real space $(X,\iota)$. The equivalence relation is Morita
equivalence over $X$ and the group operation is graded tensor product
(over $X$). For our purposes we don't need the grading,
so we get a \emph{Brauer group} of (ungraded) real continuous-trace
algebras, which turns out to be
\begin{equation}
\BrR(X,\iota)\cong H^0(X^\iota, \bZ_2)\oplus H^2_\iota(X, \cS),
\label{eq:Mout}
\end{equation}
where the first summand is the group of \emph{sign choices} and the second
group is \emph{equivariant} sheaf cohomology (this is discussed in
greater detail in \cite{2012arXiv1202.0155M}) for the Real sheaf $\cS$
of germs of $S^1$-valued continuous functions and we use the complex
conjugation involution on $S^1$. The second summand encodes the
\emph{{\lp}Real{\rp} Dixmier-Douady class}; in the notation of
\cite[Proposition 4.4.9]{MoutuouThesis}, this is the ungraded analogue
$\BrR_0(X)$ of $\widehat{\BrR}_0(X)$.  In the same notation, the first
summand is $H^0(X, \Inv \cK)$, where  $\Inv \cK$ is the
ungraded analogue of the sheaf
$\Inv \widehat{\cK}$.  But it is easy to see that this sheaf is
supported on the fixed set, where it has stalks $\pm 1$ corresponding
to the two possible local sign choices (orthogonal type or symplectic
type), thus giving \eqref{eq:Mout}.

\section{$T$-duality}
\label{sec:Tduality}

In this section we will discuss how the various orientifolds
classified in Section \ref{sec:holoandantiholo} are related via
$T$-duality. These relationships have already been discussed in
\cite{Gao:2010ava}. In \cite{Doran:2013sxa} we showed that you need to
include a sign choice and that twisting $KR$-theory by this sign choice
correctly classifies charges in $T$-dual theories. While the need for
using $KR$-theory and the geometric meaning of the twisting caused by
the $B$-field are well understood, there is no purely geometric explanation
of why $T$-duality requires a sign choice. In this section we will
simply review the various $T$-duality relationships. We will give a
geometric description for all of the possible $T$-dualities between
elliptic curve orientifolds, including an explanation for all sources
of twisting (both sign choice and $B$-field) in the following
section. In Section \ref{sec:branes} we will describe the brane
content in the different theories using the $K$-theoretic analysis of
\cite{Doran:2013sxa} together with this geometric description. 

Since the right- and left-movers have the same chirality in the type
IIB theory, only holomorphic involutions are compatible with the type
IIB theory. Similarly the type IIA theory is only compatible with
antiholomorphic involutions since the left- and right-movers have
opposite chirality. Since $T$-duality (on a single circle factor)
interchanges the type IIA string theory with the type IIB theory, 
it also exchanges holomorphic and antiholomorphic involutions. 
The various theories can be broken into $3$ groups, with the 
theories in a single group related via $T$-duality. Note that real elliptic
curves (the spacetimes for type IIA orientifold theories)
are generally grouped by their species. However, as we saw in
\cite{Doran:2013sxa}, the type IA and $\widetilde{IA}$ theories are
both defined on species $2$ real elliptic curves but cannot be related
by a $T$-duality. Our geometric description will show that the type
$\widetilde{IA}$ theory should be grouped with the species $0$ real
elliptic curves even though it is species $2$. Since we have not yet
defined a mathematical way to define the different $T$-duality groupings,
we will classify them in terms of their physical theories for now. 

The first group contains the type I theory as well as as the 
type IIA theory on an annulus (known as the type I$'$ or IA theory) and the
type IIB theory on $\bT^2/\bZ_2$ with four fixed points (and four
$O7^+$-planes). The second group contains the type \~I and
type $\widetilde{IA}$ theories as well as the type IIA theory on a Klein
bottle and the type IIB theory with four fixed points corresponding to
$2$ $O7^-$-planes and $2$ $O7^+$-planes. The third group contains the
type I theory without vector structure described in
\cite{Witten:1998-02} (the type I theory with non-trivial $B$-field),
the type IIA theory on a M\"obius strip, and the type IIB theory with
$1$ $O7^-$-plane and $3$ $O7^+$-planes. The fact that the last two of
these theories belong in the same $T$-duality grouping was already
pointed out in \cite{MR1075783}.  Note that each of the $3$ groups
contains one type IIB theory with $4$ fixed points --- such theories
are classified the net $O$-plane charge --- and also contains one type
IIA theory on a 
quotient of the torus by an orientation-reversing involution.  This
would provide two natural ways to
classify the groups, but instead we choose to refer to them as the
type I, type \~I, and ``type I without vector structure'' groups. 

The physical moduli space for string theory on a real elliptic curve
is determined by the complex structure, $\tau$, and the K\"ahler modulus,
$\rho$. The moduli space for the type IIA theories (real elliptic
curves with an antiholomorphic involution) is shown in Figure
\ref{Fig:modulispace}.  
\begin{figure}[ht]
  \centering
  \mbox{\includegraphics[scale=0.5]{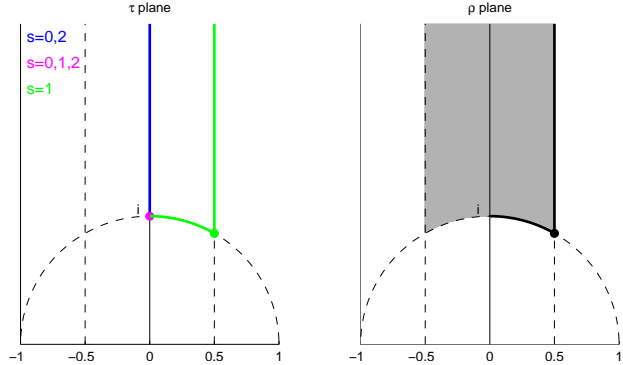}}
  \caption{Physical moduli space of string theory on a real elliptic
    curve with antiholomorphic involution corresponding to the type
    IIA theory.} 
\label{Fig:modulispace}
\end{figure}
This picture appeared already in \cite[Figs.\ 2 and 3]{MR1168626}.
As can be seen from the figure, the complex structure is constrained,
while the K\"ahler modulus is free. After a $T$-duality
transformation, we obtain the type IIB theory with the roles of $\tau$
and $\rho$ reversed. Therefore, the complex structure is free and the
K\"ahler modulus is constrained for holomorphic involutions. The
constraints on $\rho$ in the type IIB theory confirm the known result
that there are two possible values of the $B$-field in the type I theory,
$B=0,\frac{1}{2}$, corresponding to the $2$ vertical legs in the first
factor of Figure \ref{Fig:modulispace}. 

At first glance, the arc $\tau=e^{i\theta}$, with
$\frac{\pi}{3}<\theta<\frac{\pi}{2}$ would seem to imply the $T$-dual
IIB theory would have an unallowable value of $B$, since
$0<\Re{\rho}<\frac{1}{2}$. If we let $u=\sin{\theta}$ then the arc is
described by $\tau=\sqrt{1-u^2}+iu$, with
$\frac{\sqrt{3}}{2} < u < 1$. Performing the $\SL{(2,\bZ)}$
transformations $\tau\mapsto\tau-1$ and then
$\tau\mapsto-\frac{1}{\tau}$ sends $\tau$ to 
\begin{align*}
\widetilde{\tau}&=\frac{1}{1 - \tau}\\ 
&=\frac{1}{1-\sqrt{1-u^2}-iu}\\
&=\frac{1}{2}+i\frac{u}{2(1-\sqrt{1-u^2})}.
\end{align*}
Since $\tau$ and $\widetilde{\tau}$ are related by an $\SL{(2,\bZ)}$
transformation, they describe equivalent elliptic curves. This shows
us that for any real elliptic curve (elliptic curve with
antiholomorphic involution) there is a representative with
$\Re{\tau}=0$ or $\frac{1}{2}$. This matches with the fact that
that the only possible values of the $B$-field ($\Re{\rho}$) for type
IIB theories on elliptic curve orientifolds are $0$ and $\frac{1}{2}$. 

\begin{proposition}
\label{prop:IIAnoB}
The stable $D$-brane charges in type IIA orientifold theories on
elliptic curves do not not depend on the $B$-field. 
\end{proposition}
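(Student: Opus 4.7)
The plan is to exploit the $T$-duality framework developed earlier in this section to reduce the question of $B$-field dependence on the type IIA side to a question of complex-structure dependence on the type IIB side, where independence is manifest.

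First, I would observe that, by the $T$-duality discussion in this section, a type IIA orientifold with moduli $(\tau,\rho)$ is $T$-dual to a type IIB orientifold with moduli $(\rho,\tau)$. Under this duality the $B$-field $\Re\rho$ in the IIA theory is mapped to $\Re\tau'$ in the dual IIB theory, i.e.\ to a piece of the dual complex structure rather than to a $B$-field. Continuous variation of the $B$-field in IIA therefore corresponds to continuous variation of the complex structure in the $T$-dual IIB theory.

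Next, I would appeal to the Moutuou classification of twistings in \eqref{eq:Mout}: the twisting data (sign choice together with Real Dixmier--Douady class) for twisted $KR$-theory lies in the discrete group $\BrR(X,\iota)$. In the $T$-dual IIB theory, both the sign choice (determined combinatorially by the configuration of $O$-plane charges at the four fixed points) and the $B$-field twist (constrained to $\{0,\tfrac12\}$ by the argument preceding the proposition) are discrete invariants, independent of the continuous parameter $\tau_{\text{IIB}}$. The topological type of the underlying Real $2$-torus is likewise determined combinatorially by the involution. Hence the twisted $KR$-groups classifying $D$-brane charges in the IIB theory do not change as $\tau_{\text{IIB}}=\rho_{\text{IIA}}$ varies continuously through the upper half-plane.

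Transporting this back under $T$-duality gives that the stable $D$-brane charges in the original IIA theory are independent of $\rho_{\text{IIA}}$, and in particular of the $B$-field $\Re\rho_{\text{IIA}}$. The main obstacle I foresee is ensuring that the continuous deformation of $\rho_{\text{IIA}}$ stays inside a single $T$-duality grouping, so that the discrete orientifold data on the dual IIB side do not jump; this is handled by Figure \ref{Fig:modulispace}, which shows that $\rho_{\text{IIA}}$ varies freely through the upper half-plane within each IIA orientifold component. A purely mathematical alternative, bypassing $T$-duality, would be to compute $H^2_\iota(E,\cS)$ directly from the ``Real'' exponential sheaf sequence and verify that no continuous parameter contributes, but the $T$-duality route is conceptually cleaner and matches the spirit of this section.
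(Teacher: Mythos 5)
Your proposal is correct, and it is essentially the paper's own \emph{second} argument: the authors likewise observe that $T$-duality converts the IIA $B$-field $\Re\rho$ into the complex structure of the dual IIB orientifold, and that every type IIB elliptic curve orientifold is defined for all complex structures, so the dual charges (hence the original ones) cannot depend on it. What you add --- invoking the discreteness of the twisting data in $\BrR(X,\iota)$ from \eqref{eq:Mout}, and the remark that the deformation must stay within a single $T$-duality grouping so the discrete orientifold data do not jump --- is a reasonable tightening of the step the paper states more loosely (``the $D$-brane charges in the $T$-dual IIB theory are independent of the choice of complex structure''). However, the paper's \emph{primary} argument is more elementary and bypasses $T$-duality entirely: following \cite{Bates:2006}, a non-trivial $B$-field twist amounts to a class $\tilde w_2\in H^2(F;\bZ_2)$ on a component $F$ of the fixed set, which can be non-zero only when $F$ is the whole $2$-torus, i.e., when $\iota\equiv 1$; in a type IIA orientifold the fixed components are circles or empty, so the twist is forced to vanish. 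That direct argument buys independence from any appeal to the physical $T$-duality map, whereas your route has the virtue of explaining the same fact from the moduli-space picture of Figure \ref{Fig:modulispace}. Both are sound.
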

\begin{proof}
This follows immediately the observation in \cite{Bates:2006}
that a non-trivial $B$-field only affects $O$-planes that wrap the
entire elliptic curve. All of the type IIA theories contain either no
$O$-planes or $O$-planes that wrap a $1$-cycle in the elliptic
curve. Alternatively, to put this in purely mathematical terms,
twisting of $KR$-theory by the $B$-field amounts to a $\tilde w_2$
twist in $H^2$ (with $\bZ_2$ coefficients) of a component of the fixed
set of $\iota$, and for this to be non-zero requires $\iota\equiv 1$.

This can also be seen in the physical moduli space for type IIB
$\bT^2$ orientifolds versus the one for type IIA $\bT^2$ orientifolds
(Figure \ref{Fig:modulispace}). For type IIB theories, the K\"ahler
modulus, $\rho$, is constrained while the complex modulus, $\tau$, is
unconstrained. The opposite is true for the type IIA
theories. Changing the $B$-field for a type IIA elliptic curve
orientifold theory changes the complex structure of the $T$-dual IIB
elliptic curve orientifold theory. As can be seen in Figure
\ref{Fig:modulispace} and the list of holomorphic involutions given
at the beginning of Section \ref{sec:holoandantiholo}, all of the type
IIB elliptic curve orientifold theories are well defined for all
complex structures. Therefore the $D$-brane charges in the $T$-dual
IIB theory are independent of the choice of complex structure, and the
stable $D$-brane charges in the original IIA theory must not depend on
the B-field. 
\end{proof}

Proposition \ref{prop:IIAnoB} can be viewed as saying that the stable
$D$-brane charges in the type IIB elliptic curve orientifold theories
do not depend on whether or not you compactify on a rectangular or
a non-rectangular torus. It is important to note, however, that while
the $Dp$-brane \emph{charges} will remain 
unchanged, the actual \emph{sources} could be affected by a non-trivial
$B$-field. This is because a non-trivial $B$-field can affect
$D$-branes that wrap both compact directions
\cite{Bates:2006}. Another way to see this is that $T$-dualizing in a
direction that is not tangential or normal to the direction wrapped by
a brane will affect the resulting brane. We will find in Section
\ref{sec:branes} that the direction of $T$-duality is usually
constrained in the IIB theories to well defined directions relative to
the branes, making the brane content clear in a manner that is
independent of the choice of complex structure (or $B$-field for the
IIA theories). 

Using Proposition \ref{prop:IIAnoB} and the following discussion, we
will generally assume that all type IIA theories have trivial $B$-field, or
equivalently that all IIB theories are compactified on rectangular
tori. This assumption does not affect any of our results. We will
first consider the $2$ groupings containing only type IIB 
theories with trivial $B$-fields together, and then consider the
inclusion of non-trivial $B$-fields. We do this to clarify the
difference between twistings by the $B$-field and twistings by the
sign choice. 


\subsection{$T$-duality for elliptic curve orientifolds with trivial $B$-field}

Two of the three $T$-duality groups only contain type IIB theories
with trivial $B$-field. They are the group containing the type I
theory and the group containing the type \~I theory. These are the two
groups whose IIA theories only exist on rectangular tori. Let us first
consider the group containing the type I theory with trivial
$B$-field.  

\subsubsection{The type I theory}
The type I theory compactified on $\bT^2$ corresponds to the type IIB
orientifold theory compactified on $S^{2,0}\times S^{2,0}$. In
\cite{Doran:2013sxa} we described how the chain of $T$-dualities
starting from this theory can be obtained by compactifying the type
IIB theory on $S^{2,0}$ \cite{Olsen:1999,Bergman:1999} on an
additional circle.  

Beginning with the type IIB theory compactified on
$S^{2,0}\times S^{2,0}$, which is just the type I theory compactified
on $\bT^2\cong S^1\times S^1$, $T$-dualizing a single copy of
$S^{2,0}$ will transform it to 
$S^{1,1}$. Therefore, $T$-dualizing one circle of the type IIB theory
on $S^{2,0}\times S^{2,0}$ (corresponding to the involution $z\mapsto
z$) will give the type IIA theory on either $S^{2,0}\times S^{1,1}$
(corresponding to the involution $z\mapsto\bar{z}$), or $S^{1,1}\times
S^{2,0}$ (corresponding to the involution $z\mapsto -\bar{z}$). This
accounts for all of the species $2$ antiholomorphic involutions (see
Table \ref{Table:class_invo}). 

As can be seen from Table \ref{Table:class_invo}, the involutions
$z\mapsto\pm\bar{z}$ only correspond to $S^{1,1}\times S^{2,0}$ if the
complex modulus is $\tau=i\tau_2$ with $\tau_2\geq 1$. This tells us
that we must have a rectangular torus. Our $2$-torus is also equipped
with a K\"ahler form $J\equiv\sqrt{G}dx\wedge dy$ and the NS-NS
$2$-form $B$-field $B$, which combine to give the K\"ahler modulus
$\rho=\int_{\bT^2}(B+iJ)$. $T$-duality exchanges $\tau$ and $\rho$.
Since $\tau$ is purely imaginary in the type IIA theory, $\rho$ must be
purely imaginary in the $T$-dual theory. Therefore, the type I theory
compactified on a $2$-torus cannot have any $B$-field (the only non-zero
possibility for a $B$-field is $B=\frac{1}{2}$, which gives the type I
theory without vector structure as described in \cite{Witten:1998-02},
and will be discussed later). As always, we only consider the case where the type IIA theory has zero $B$-field so
that the $T$-dual IIB theory is on a rectangular torus as
well.\footnote{Assuming the $B$-field is trivial in the type IIA
  theories does not affect our end results by Proposition \ref{prop:IIAnoB}.}  

After $T$-dualizing one of the two circles in the type I theory we can
$T$-dualize the other circle. This corresponds to $T$-dualizing the
copy of $S^{2,0}$ in the type IIA theory on $S^{1,1}\times S^{2,0}$ or
equivalently, simultaneously $T$-dualizing both circles in the
original type I theory. This gives the type IIB theory on
$S^{1,1}\times S^{1,1}$ which corresponds to the spacetime involution
$z\mapsto -z$. This can be easily seen by composing the involutions
that describe the $2$ individual $T$-dualities, $z\mapsto\bar{z}$ and
$z\mapsto-\bar{z}$. The type IIB theory on $S^{1,1}\times S^{1,1}$ has
$4$ $O7^+$-planes located at the $4$ fixed points of $z\mapsto-z$
which correspond to the $2$-torsion points of the elliptic curve:
$0,\frac{1}{2},\frac{\tau}{2},$ and $\frac{1}{2}+\frac{\tau}{2}$. This
chain of dualities can be neatly displayed as in Figure
\ref{Fig:species2}. At the Gepner point corresponding to $\tau=i$ in
the type IIA theory, there is a rotational symmetry under
multiplication by $i$, so the
involutions $z\mapsto\pm\bar{z}$ are equivalent. This collapses the
horizontal line in 
Figure \ref{Fig:species2}, corresponding to the fact that the torus is
square and there is no difference between the $2$ circles. 

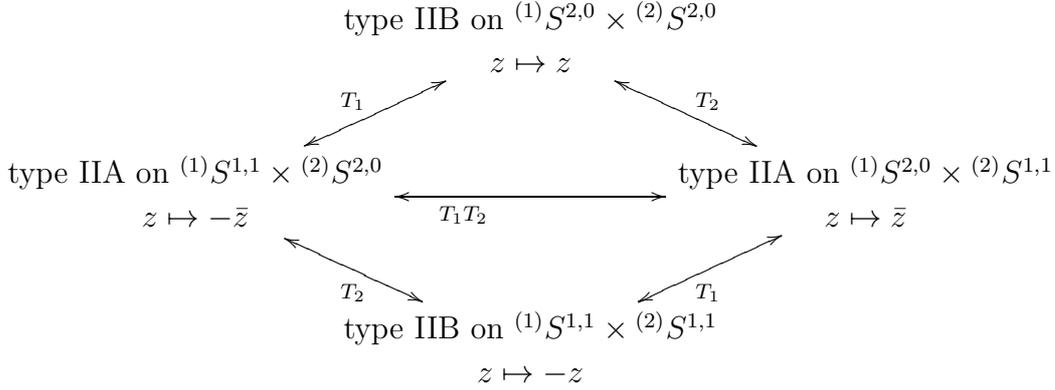
\begin{figure}[ht]
\centering
$\xymatrix@C-4pc{ & \genfrac{}{}{0pt}{0}{\text{type IIB on }{^{(1)}}S^{2,0}\times {^{(2)}}S^{2,0}}{z\mapsto z}\ar@{<->}[dl]_{T_{1}}\ar@{<->}[dr]^{T_{2}} &  \\
 \genfrac{}{}{0pt}{0}{\text{type IIA on }{^{(1)}}S^{1,1}\times {^{(2)}}S^{2,0}}{z\mapsto-\bar{z}}\ar@{<->}[dr]_{T_{2}}\ar@{<->}[rr]_(.4){T_1T_{2}}&  & \genfrac{}{}{0pt}{0}{\text{type IIA on }{^{(1)}}S^{2,0}\times {^{(2)}}S^{1,1}}{z\mapsto\bar{z}}\ar@{<->}[dl]^{T_{1}}\\ 
&  \genfrac{}{}{0pt}{0}{\text{type IIB on }{^{(1)}}S^{1,1}\times {^{(2)}}S^{1,1}}{z\mapsto-z} & }$
\caption{Chain of $T$-dualities connecting the various theories
  related to the type I theory with trivial $B$-field. $T_i$
  represents $T$-duality on the indicated circle.} 
\label{Fig:species2}
\end{figure}

\subsubsection{The type \~I theory}

There are a couple of ways we can compactify the type \~I theory on an
elliptic curve. The first way is to compactify the type \~I theory on
a single circle, and then on another circle with trivial
involution. For a single compact dimension, the type \~I theory is the
type IIB orientifold $(\bR^9\times S^1)/(\iota\cdot\Omega)$ where
$\iota$ is the spacetime 
involution that rotates the compact direction $\pi$ radians. In our
notation, this is the type IIB theory on $\bR^{9,0}\times S^{0,2}$. The
$T$-dual of the type \~I theory is the type $\widetilde{IA}$ theory
\cite[\S7.2]{Gao:2010ava}. The type IA theory contains $2$ 
$O8^+$-planes. The type $\widetilde{IA}$ theory is obtained from the
type IA theory by replacing one of the $O8^+$-planes with an
$O8^-$-plane. Using the notation $S^{p,q}_\alpha$ of \cite{Doran:2013sxa},
where $\alpha$ is the sign choice on the components
of the fixed set, the compactification manifold for the type \~I
theory is $S^{1,1}_{(+,-)}$. 

Let $x$ be the coordinate of the compact direction in the type \~I
theory. Considering the circle as $\bR/\bZ$, we see that $S^{0,2}$ is
the circle mod the involution 
$$x\mapsto x+\frac{1}{2}.$$
Under $T$-duality this becomes the dual circle mod the involution
$$\tilde x\mapsto -\tilde x+\frac{1}{2}.$$
The $2$ fixed points of this involution are located at
$x=\frac{1}{4},\frac{3}{4}$. We see that the $O$-planes are no longer
located at the $2$-torsion points $x=0$ and $x=\frac{1}{2}$, as they are with
the involution $x\mapsto -x$, but have been shifted. Every involution
$x\mapsto -x+\delta$, $\delta\in\bR$, gives $S^{1,1}$ with $2$
$O8^+$-planes except the case $\delta=\pm\frac{1}{2}$. What makes
$\delta=\frac{1}{2}$ unique is that the involution exchanges the $2$-torsion
points. For all other values of $\delta$, the two $2$-torsion points
are mapped to distinct points. The fact that the $2$-torsion points
are exchanged for $\delta=\frac{1}{2}$ corresponds physically to the
fact that the $O$-plane charges corresponding to the $2$-torsion
points can annihilate. However, the locations of the $O$-planes are
shifted, so we end up with an $O^+$-$O^-$-plane pair. This provides a
heuristic way of viewing the need for a twisting corresponding to a
sign choice, as will be discussed further in the following section. In
the language 
of \cite{Keurentjes:2000}, we should consider $S^{0,2}$ as a circle
with a crosscap attached and then we see that the $T$-dual of a
crosscap is an $O^+$-$O^-$ plane pair. 

Compactifying the type \~I theory on
another circle with trivial involution is the type IIB theory on
$\bR^{8,0}\times S^{0,2}\times S^{2,0}$. This corresponds to the
involution $z\mapsto z+\frac{1}{2}$ on $\bC/\Lambda$, 
with $S^{0,2}$ being the circle that is the image of $[0,1]$
and $S^{2,0}$ being the circle that is the image of $\tau\cdot[0,1]$. 

Now if we $T$-dualize the circle that is the image of $[0,1]$, $S^{0,2}$, we get
the type $\widetilde{IA}$ theory compactified on another circle with
trivial involution. In our notation this is the type IIA theory on
$\bR^{8,0}\times S_{(+,-)}^{1,1}\times S^{2,0}$ and corresponds to the
involution $z\mapsto -\bar{z}+\frac{1}{2}$. If we now $T$-dualize the
copy of $S^{2,0}$, we get the type IIB theory on
$S_{(+,-)}^{1,1}\times S_{(+,+)}^{1,1}=(S^{1,1}\times
S^{1,1})_{(+,+,-,-)}$. $(S^{1,1}\times S^{1,1})_{(+,+,-,-)}$ is 
$\bC/\Lambda$ with the involution $z\mapsto -z+\frac{1}{2}$. It has
$4$ fixed points corresponding to $2$ $O7^-$-planes and $2$
$O7^+$-planes. Let us now consider what happens if we perform the
$T$-dualities in the opposite order. 

If we first $T$-dualize the copy of $S^{2,0}$ in the type IIB
theory on $S^{0,2}\times S^{2,0}$, we get the type IIA on
$S^{0,2}\times S^{1,1}_{(+,+)}$ corresponding to the species $0$
antiholomorphic involution $z\mapsto\bar{z}+\frac{1}{2}$. If we now
$T$-dualize the copy of $S^{0,2}$ we will get the type IIB theory on
$S_{(+,-)}^{1,1}\times S_{(+,+)}^{1,1}$. This chain of dualities is shown in
Figure \ref{Fig:species0r}. 

\begin{figure}[h]
\centering
$\xymatrix@C-6pc{ & \genfrac{}{}{0pt}{0}{\text{type IIB on }{^{(1)}}S^{0,2}\times {^{(2)}}S^{2,0}}{z\mapsto z+\frac{1}{2}}\ar@{<->}[dl]_{T_{(1)}}\ar@{<->}[dr]^{T_{(2)}} &  \\
 \genfrac{}{}{0pt}{0}{\text{type IIA on }{^{(1)}}S_{(+,-)}^{1,1}\times {^{(2)}}S^{2,0}}{z\mapsto-\bar{z}+\frac{1}{2}}\ar@{<->}[dr]_{T_{(2)}}&  & \genfrac{}{}{0pt}{0}{\text{type IIA on }{^{(1)}}S^{0,2}\times {^{(2)}}S^{1,1}_{(+,+)}}{z\mapsto\bar{z}+\frac{1}{2}}\ar@{<->}[dl]^{T_{(1)}}\\ 
&  \genfrac{}{}{0pt}{0}{\text{type IIB on }{^{(1)}}S_{(+,-)}^{1,1}\times {^{(2)}}S^{1,1}_{(+,+)}}{z\mapsto-z+\frac{1}{2}} & }$
\caption{Chain of $T$-dualities connecting the various theories
  related to the species $0$ antiholomorphic involution
  $z\mapsto\bar{z}+\frac{1}{2}$. $T_i$ represents $T$-duality on the
  indicated circle.} 
\label{Fig:species0r}
\end{figure}
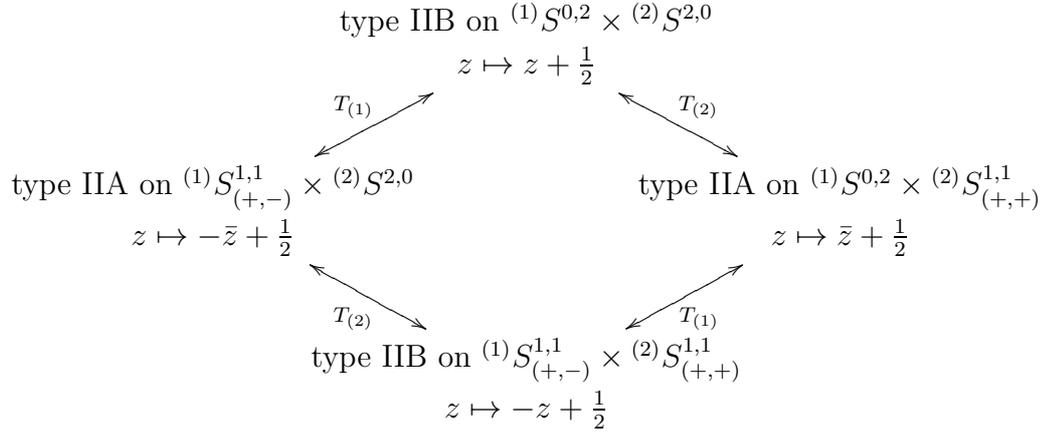

The group of theories related by $T$-duality pictured in Figure
\ref{Fig:species0r} doesn't contain all of the species $0$
antiholomorphic involutions and therefore does not have the symmetry
we saw with the group containing the type I theory with trivial $B$-field (Figure \ref{Fig:species2}). We can
easily obtain a picture containing the species $0$ antiholomorphic map
$z\mapsto-\bar{z}+\frac{\tau}{2}$ by just reversing the roles of
$\tau$ and $1$ in Figure \ref{Fig:species0r} by starting with the type
IIB theory with involution $z\mapsto z+\frac{\tau}{2}$. This, however,
requires multiple groupings and is not as satisfying a picture. 

This can be resolved by taking half shifts in both the real and
imaginary directions simultaneously. By this we mean starting with the
the type IIB theory with involution $z\mapsto
z+\frac{1+\tau}{2}$. This is the type IIB theory compactified on
$S^{0,2}\times S^{0,2}$. Performing a single $T$-duality in different
directions will give the type IIA theory on
$z\mapsto\pm\bar{z}+\frac{1+\tau}{2}$. We are being purposefully vague
about the direction of $T$-duality as it is not as simple as in the
previous cases and we will discuss it further shortly. This
corresponds to the type IIA theory on $S^{0,2}\times S_{(+,-)}^{1,1}$
or $S_{(+,-)}^{1,1}\times S^{0,2}$. 

The involution $z\mapsto\bar{z}+\frac{1+\tau}{2}$ is equivalent to
$z\mapsto\bar{z}+\frac{1}{2}$, while the involution
$z\mapsto-\bar{z}+\frac{1+\tau}{2}$ is equivalent to
$z\mapsto-\bar{z}+\frac{\tau}{2}$. Here, $2$ involutions, $\iota_1$
and $\iota_2$, are equivalent if $(\bC/\Lambda)/\iota_1$ and
$(\bC/\Lambda)/\iota_2$ are dianalytically equivalent as explained in
\cite{MR640091}. This means $\iota_1$ and $\iota_2$ are equivalent if
there exists an analytic automorphism, $\delta$, of $\bC/\Lambda$ such
that 
\begin{equation}
\iota_1=\delta\iota_2\delta^{-1}.
\end{equation}

Finally, $T$-dualizing the other direction gives us the type IIB
theory with involution $z\mapsto-z+\frac{1+\tau}{2}$ corresponding to
$S_{(+,-)}^{1,1}\times S_{(+,-)}^{1,1}=(S^{1,1}\times S^{1,1})_{(+,+,-,-)}$. This is again the type IIB
theory with $4$ fixed points corresponding to $2$ $O^-$-planes and $2$
$O^+$-planes. This chain of dualities is pictured in
Figure \ref{Fig:species0b}. 
                                                                                                                          
\begin{figure}[h]
\centering
$\xymatrix@C-6pc{ & \genfrac{}{}{0pt}{0}{\text{type IIB on }{^{(1)}}S^{0,2}\times {^{(2)}}S^{0,2}}{z\mapsto z+\frac{1+\tau}{2}}\ar@{<->}[dl]_{T_{1}}\ar@{<->}[dr]^{T_{2}} &  \\
 \genfrac{}{}{0pt}{0}{\text{type IIA on }{^{(1)}}S_{(+,-)}^{1,1}\times {^{(2)}}S^{0,2}}{z\mapsto-\bar{z}+\frac{1+\tau}{2}}\ar@{<->}[dr]_{T_{2}}\ar@{<->}[rr]_(.4){T_1T_{2}}&  & \genfrac{}{}{0pt}{0}{\text{type IIA on }{^{(1)}}S^{0,2}\times{^{(2)}}S_{(+,-)}^{1,1}}{z\mapsto\bar{z}+\frac{1+\tau}{2}}\ar@{<->}[dl]^{T_{1}}\\ 
&  \genfrac{}{}{0pt}{0}{\text{type IIB on }{^{(1)}}S^{1,1}_{(+,-)}\times{^{(2)}}S^{1,1}_{(+,-)}}{z\mapsto-z+\frac{1+\tau}{2}} & }$
\caption{Chain of $T$-dualities connecting the various theories related to the species $0$ antiholomorphic involution $z\mapsto\bar{z}+\frac{1+\tau}{2}$.}
\label{Fig:species0b}
\end{figure}
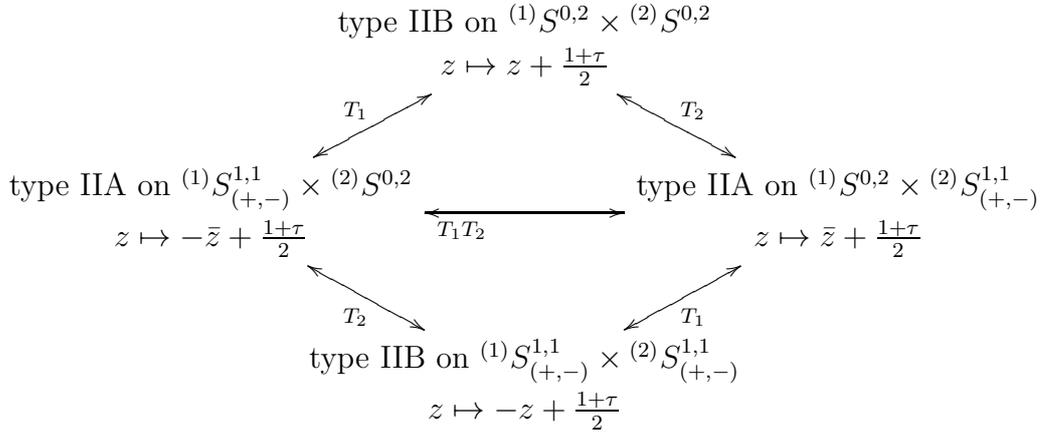

Note how, as opposed to Figure \ref{Fig:species0r}, Figure
\ref{Fig:species0b} is symmetric and both possible species $0$
antiholomorphic involutions occur in a single diagram. Also as in
the $T$-duality group pictured in Figure \ref{Fig:species2}, the
horizontal line in Figure \ref{Fig:species0b} 
collapses at the Gepner point $\tau=i$, signifying that the torus is
square. Since the species $0$ antiholomorphic involutions only exist
for $\tau$ purely imaginary, we see that the $T$-dual type IIB
theories must have trivial $B$-fields. As always, we are assuming that
the type IIA theories have trivial $B$-fields, so that the type IIB
tori are also rectangular. At first glance it might seem that the two
type IIB theories appearing at the top of Figures \ref{Fig:species0r}
and \ref{Fig:species0b} are related by a change of complex structure,
so that the type IIA theories appearing in Figure \ref{Fig:species0r}
differ by a choice of $B$-field from those appearing in Figure
\ref{Fig:species0b}. However, by Proposition \ref{prop:IIAnoB} we know
that the $B$-field does not affect the stable $D$-brane charge in the 
type IIA theories, so the $B$-field could not differentiate the IIA
theories appearing in Figures \ref{Fig:species0r} and
\ref{Fig:species0b}. We will see in Section \ref{sec:symnonsyms0} that
the theories in  Figures \ref{Fig:species0r} and \ref{Fig:species0b}
are related by the direction of $T$-duality. This could be viewed as a
change of complex structure, but as we will show the $2$ complex
structures are related by an $\SL(2,\bZ)$ transformation. Therefore
they are defined on equivalent elliptic curves and their $T$-dual
theories will have the same $B$-fields. Changing the direction of
$T$-duality will, however, alter the sources for the stable brane
charges. This effect was crucial in our understanding the sign choice
as a source of twisting. 

As an example of Proposition \ref{prop:IIAnoB} and to better explain
our assumption that all type 
IIA theories involved have trivial $B$-fields, let us look at the
$T$-duality group containing the type I theory a little more
closely. Proposition \ref{prop:IIAnoB} says that the assumption that
the type IIA theories have trivial $B$-field does 
not affect our results about $O$-plane and $D$-brane charges and their
relationship under $T$-duality. If we included a non-trivial $B$-field
in the type IIA theory on $S^{1,1}\times S^{2,0}$ for example, then it
will still be $T$-dual to the type I theory with trivial $B$-field
compactified on a $2$-torus. The only difference is that the torus
will no longer be rectangular, but this does not affect the stable brane
charges, nor the $O$-plane content (which determines the overall
theory). While the brane content in the type I theory will be the 
same as for the case when the torus is rectangular, since the
direction of $T$-duality is no longer orthogonal to (or in the same
direction as) the direction the branes wrap, the sources in the
$T$-dual theory could be affected. A brane that wraps both compact
directions is usually obtained from a brane that wraps a single
compact direction via $T$-duality by $T$-dualizing the direction
orthogonal to the direction wrapped by the brane. For a
non-rectangular torus, $T$-dualizing one leg can send a brane that
wraps a single cycle to a brane that wraps a different cycle. While
the source might change, the important feature is that the stable
charge remains the same. This will become clearer in Section
\ref{sec:branes} when we discuss the brane content in the various
theories. 

\subsection{$T$-duality for elliptic curve orientifolds with non-trivial $B$-field}

The only possible nonzero value for the $B$-field is
$B=\frac{1}{2}$. Furthermore, this is only a possibility for the type I
theory. The type \~I theory cannot have a non-trivial $B$-field. If it
did, its $T$-dual would be a non-rectangular torus, and as we just saw,
the type \~I theory is $T$-dual to the species $0$ real elliptic curve,
which only exists for rectangular tori. So we are left only to
consider the case of the type I theory with non-trivial $B$-field, or
as it is more commonly referred to, the type I theory without vector
structure \cite{Witten:1998-02}. 

Consider the type I theory without vector structure on a rectangular
torus. Its $T$-dual theory will be a type IIA theory on an elliptic
curve with $\Re{\tau}=\frac{1}{2}$ and an antiholomorphic
involution, and trivial $B$-field.\footnote{As usual the assumption that the type IIA theories have trivial $B$-field does not affect our final result by Proposition \ref{prop:IIAnoB}.} The
involution on the $T$-dual type IIA theory must be a species $1$
anti-holomorphic involution because those are the only possible
antiholomorphic involutions when $\Re{\tau}=\frac{1}{2}$. 

The torus with the species $1$ involution is the only torus orientifold
that cannot be split into the product of $2$ invariant circles 
and is the only truly new case we get from
considering compactifications on $2$ circles versus $1$. As can be
seen from Table \ref{Table:class_invo}, the species $1$
antiholomorphic involutions only exist for $\tau=e^{i\theta}$,
$\frac{\pi}{3}\leq\theta\leq\frac{\pi}{2}$, with involution
$z\mapsto\pm\tau\bar{z}$, or $\tau=\frac{1}{2}+i\tau_2$ with
$\tau_2\geq\frac{\sqrt{3}}{2}$ and involution $z\mapsto\pm\bar{z}$.
As noted previously, every real elliptic curve with $\tau=e^{i\theta}$
is equivalent to a real elliptic curve with
$\tau=\frac{1}{2}+i\tau_2$, and since this is the case we get from the
obvious way of $T$-dualizing the type I theory without vector
structure, it is the case we will consider. We will briefly consider
the case of $\tau=e^{i\theta}$ in the following section to motivate
our understanding of the sign choice as a twisting. 

So far we have seen that the type I theory without vector structure is
$T$-dual to the type IIA 
theory with $\tau=\frac{1}{2}+i\tau_2$, $\rho=iV$ and a species $1$
antiholomorphic involution. Performing another $T$-duality in the
other compact direction gives the type IIB theory with $4$ fixed 
points, but now with $3$ $O^+$-planes and $1$ $O^-$-plane. 

Without having our geometric description of the sign choice, it is
easiest to see this by noting that the species $1$ antiholomorphic
involutions give a M\"obius strip which can viewed as a cylinder with
a cross-cap. $T$-duality transforms the cross-cap into an $O^+$-,
$O^-$-plane pair, while the boundary is transformed to $2$
$O^+$-planes \cite{Keurentjes:2000}. We can put this chain of dualities into a 
diagram similar to the one for the species $0$ and $2$ groups, and the
result is given in Figure \ref{Fig:species1v}. 

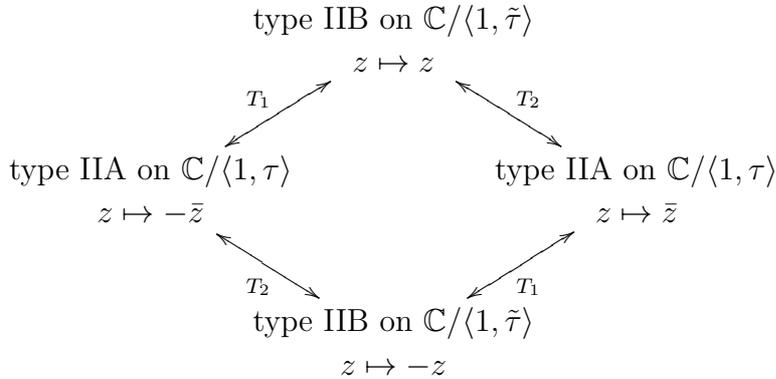
\begin{figure}[h]
\centering
$\xymatrix@C-4pc{ & \genfrac{}{}{0pt}{0}{\text{type IIB on }
\bC/\langle 1,\tilde{\tau}\rangle}{z\mapsto z}\ar@{<->}[dl]_{T_{1}}\ar@{<->}[dr]^{T_{2}} &  \\
 \genfrac{}{}{0pt}{0}{\text{type IIA on }\bC/\langle
   1,\tau\rangle}{z\mapsto-\bar{z}}\ar@{<->}[dr]_{T_{2}}&
 & \genfrac{}{}{0pt}{0}{\text{type IIA on
   }\bC/\langle 1,\tau\rangle }{z\mapsto\bar{z}}\ar@{<->}[dl]^{T_{1}}\\  
&  \genfrac{}{}{0pt}{0}{\text{type IIB on }\bC/\langle
   1,\tilde{\tau}\rangle}{z\mapsto-z} & }$ 
\caption{Chain of $T$-dualities connecting the various theories
  related to the type I theory without vector
  structure. $\tau=\frac{1}{2}+i\tau_2$ and $\tilde{\tau}$ is purely
  imaginary.} 
\label{Fig:species1v}
\end{figure}

Note that the involutions appearing in Figure \ref{Fig:species1v} are
the same as those appearing in the chain of dualities connecting the
group containing the type I theory (Figure \ref{Fig:species2}). This
shows that the two vertical legs in the left-hand side of Figure
\ref{Fig:modulispace} 
both describe the $T$-dualities of the type I theory, with the
$\Re{\tau}=0$ leg corresponding to trivial $B$-field, and the
$\Re{\tau}=\frac{1}{2}$ branch corresponding to $B=\frac{1}{2}$. In
\cite{Bates:2006} (and actually already in \cite{MR1168626})
the authors note that the species obstructs
continuous deformation from the large limit branch $\tau=i\infty$ to
the large limit branch $\tau=\frac{1}{2}+i\infty$, leading to two
disjoint large volume type IIB torus orientifolds. We see that these
two large volume type IIB theories correspond to the type I theory
with the two possible values for the $B$-field. While this shows a
physical relationship between the species $2$ and species $1$ groups,
the mathematical relationship is described by Alling in
\cite{MR640091}. This chain of dualities is considered in
\cite{Keurentjes:2000}.  

We still have not considered the arc associated to species $1$
appearing in Figure \ref{Fig:modulispace}. This corresponds to the
type IIA theory with $\tau=e^{i\theta}$,
$\frac{\pi}{3}<\theta<\frac{\pi}{2}$, and involution
$z\mapsto\pm\tau\bar{z}$. We have noted that these elliptic curves are
equivalent to ones with $\tau=\frac{1}{2}+i\tau_2$. Moreover, under
continuous transformations from the species $1$ antiholomorphic
involutions with $\tau=e^{i\theta}$ to the ones with $\tau=\frac{1}{2}+i\tau_2$
through the Gepner point, the $j$-invariant goes from positive to
negative through $0$ at the Gepner point $\tau=e^\frac{i\pi}{3}$. This
shows us that the 
brane content and dual theories stay the same except for values of the
parameters.  While considering the
type IIA theories with $\tau=\frac{1}{2}+i\tau_2$ is often more
convenient, the case with $\tau=e^{i\theta}$ will be useful later so we
show the chain of $T$-dualities related to this case in Figure
\ref{Fig:species1a}.  
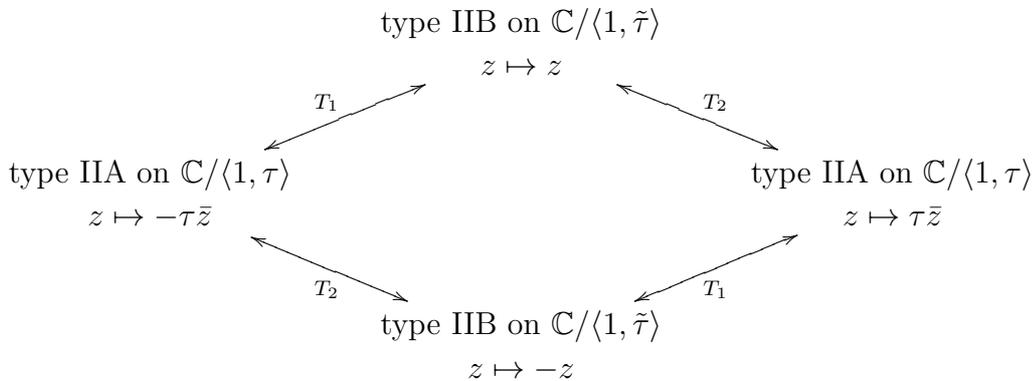
\begin{figure}[h!]
\centering
$\xymatrix{ & \genfrac{}{}{0pt}{0}{\text{type IIB on }\bC/\langle
      1,\tilde{\tau}\rangle}{z\mapsto
      z}\ar@{<->}[dl]_{T_{1}}\ar@{<->}[dr]^{T_{2}}
    &  \\ 
 \genfrac{}{}{0pt}{0}{\text{type IIA on
   }\bC/\langle 1,\tau\rangle}{z\mapsto-\tau\bar{z}}\ar@{<->}[dr]_{T_{2}}&
 & \genfrac{}{}{0pt}{0}{\text{type IIA on
   }\bC/\langle 1,\tau\rangle }{z\mapsto\tau\bar{z}}\ar@{<->}[dl]^{T_{1}}\\  
&  \genfrac{}{}{0pt}{0}{\text{type IIB on }\bC/\langle
   1,\tilde{\tau}\rangle}{z\mapsto-z} & }$ 
\caption{Chain of $T$-dualities connecting the various theories related to the species $1$ antiholomorphic involutions with $\tau=e^{i\theta}$. $\tilde{\tau}$ is purely imaginary.}
\label{Fig:species1a}
\end{figure}

\section{Jacobi functions, real elliptic curves, and $T$-duality}
\label{sec:geoTduality}

The different $T$-duality groupings described in the previous section all
contain spaces that are topologically equivalent, but have distinct
brane content. This is what motivated us to define $KR$-theory with a
sign choice in \cite{Doran:2013sxa}. While there are physical
explanations for the sign choices of the $O$-planes, so far we have seen no
mathematical explanation for why $T$-duality forces this type of
twisting on $KR$-theory. 

The standard twisting of $K$-theory by the $B$-field is usually
interpreted geometrically in terms of the angle the direction being
$T$-dualized makes with the cycles wrapped by branes. In this section
we will briefly mention how the distinction between the chains of
dualities pictured in Figures \ref{Fig:species0r} and
\ref{Fig:species0b}, as well as the $2$ branches of species $1$
antiholomorphic involutions, hints at such an interpretation. However,
we will not delve too far into this interpretation and instead give a
more general description of sign choices by describing canonical
normal forms for the elliptic curves appearing in the $3$ different
$T$-duality groups encoding this information. 

\subsection{A heuristic description of a sign choice as a twisting}
\label{sec:2tordes}
\subsubsection{Different ways of $T$-dualizing the type \~I theory}
\label{sec:symnonsyms0}
Let us first consider the distinction between the two different
$T$-duality groups containing the type \~I theory we discussed in
Section \ref{sec:Tduality} (Figures \ref{Fig:species0r} and
\ref{Fig:species0b}). When we discussed the chain of $T$-dualities
pictured in Figures \ref{Fig:species0r} and \ref{Fig:species0b} we
were purposefully vague about the directions of $T$-duality. For the
type IIB theory with involution $z\mapsto z+\frac{1}{2}$ (type 
IIB on $S^{0,2}\times S^{2,0}$) the circle given by the projection of $[0,1]$ is
equivariant and corresponds to the factor of $S^{0,2}$; the circle corresponding to the projection of $\tau\cdot[0,1]$, however, is not equivariant. The circle $\tau\cdot[0,1]$ is sent by the involution to the parallel circle
$\tau\cdot[0,1]+\frac{1}{2}$. By $T$-dualizing $S^{0,2}$ we get the type IIA theory on
$S_{(+,-)}^{1,1}\times S^{2,0}$. Now, the circle $[0,1]$
corresponds to $S_{(+,-)}^{1,1}$ and is still equivariant. Also, the
perpendicular circles, $\tau\cdot[0,1]+\frac{1}{4}$ and $\tau\cdot[0,1]+\frac{3}{4}$ are
also equivariant, making it clear what it means to $T$-dualize in the
$\tau$ direction. 

For the type IIB theory on $S^{0,2}\times S^{0,2}$ the equivariant circles
are the diagonal and anti-diagonal. In the $T$-dual theory
$S^{0,2}\times S_{(+,-)}^{1,1}$, the involution $z\mapsto\bar{z}+\frac{1+\tau}{2}$ exchanges the diagonal and
anti-diagonal. This shows that branes that wrap the diagonal and
anti-diagonal are not independent, and more importantly, branes
wrapping the real and imaginary axis are not 
independent. It is still well defined to $T$-dualize in the real and
imaginary directions, since the circles that go through the fixed
points in the type IIB theory with involution $z\mapsto
-z+\frac{1+\tau}{2}$ are all equivariant. This, combined with the
dependence between wrappings of the real and imaginary directions,
shows that branes that wrap $S^{0,2}$ in the non-symmetric case should
now wrap the diagonal (which is still a copy of $S^{0,2}$) and
continue to $T$-dualize in the real and imaginary directions. 

From the above discussion we are motivated to look at the equivalent
elliptic curve with $\tilde{\tau}=1+\tau$. $S^{0,2}\times S^{0,2}$
corresponds to a torus with legs $1$ and $it$, and involution
$z\mapsto z+\frac{1+it}{2}$. We could instead consider the equivalent
real elliptic curve given by a torus with legs $1$ and $1+it$, and the
same involution $z\mapsto z+\frac{1+it}{2}$. As pictured in Figure
\ref{Fig:species0twist} this decomposes $S^{0,2}\times S^{0,2}$ into
$S^{2,0}\times S^{0,2}$, where the copy of $S^{2,0}$ is generated by
$1$ and the copy of $S^{0,2}$ is generated by $1+it$. It might at
first seem unsatisfactory that the copy of $S^{2,0}$ is not
equivariant, but recall that for the involution $z\mapsto
z+\frac{1}{2}$ the generator of $S^{2,0}$, $\tau\cdot[0,1]$, is also not
equivariant but gets equated with the circle $\tau\cdot[0,1]+\frac{1}{2}$. In
the current situation, as pictured on the right-hand side of Figure
\ref{Fig:species0twist}, the circle $[0,1]$ is equated with the circle
$[0,1]+\frac{\tilde{\tau}}{2}=[0,1]+\frac{1+it}{2}$. 

\begin{figure}[h]
\centering
\begin{tabular}{c c c}
\begin{tikzpicture}[baseline]
\draw[black] (-1,-1) -- (1,-1);
\draw[black] (-1,1) -- (1,1);
\draw[blue] (-1,-1) -- (-1,1) [xshift=2pt] (-1,-1) -- (-1,1);
\draw[blue] (1,-1) -- (1,1) [xshift=2pt] (1,-1) -- (1,1);
\draw[red,dashed,very thick] (-1,-1) -- (1,1);
\end{tikzpicture}& $\;\;\simeq$ &
\begin{tikzpicture}[baseline]
\draw[green,decorate,decoration={snake,amplitude=.4mm,segment length=2mm}] (-1,-1) -- (1,-1);
\draw[black] (1,1) -- (3,1);
\draw[red,very thick] (1,-1) -- (3,1);
\draw[red,very thick] (-1,-1) -- (1,1);
\draw[blue,dashed] (1,-1) -- (1,1) [xshift=2pt] (1,-1) -- (1,1);
\draw[green,decorate,decoration={snake,amplitude=.4mm,segment length=2mm}] (0,0) -- (2,0);
\end{tikzpicture}
\end{tabular}
\caption{$S^{0,2}\times S^{0,2}$, corresponding to the involution $z\mapsto z+\frac{1+\tau}{2}$, on the left can be viewed as $S^{2,0}\times S^{0,2}$ on the right where the $2$ green squiggly lines are copies of $S^{2,0}$ that get exchanged under $z\mapsto z+\frac{1+\tau}{2}$.}
\label{Fig:species0twist}
\end{figure}
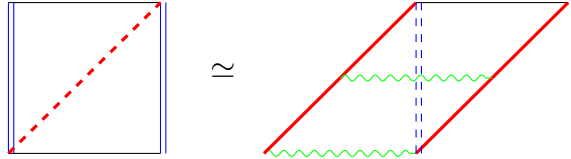

The non-symmetric species $0$ case can be obtained from the symmetric case by $T$-dualizing in a direction that isn't normal to the cycle the branes wrap. We see that the additional source of twisting beyond the
traditional $B$-field, the sign choice, is also related to the
direction of $T$-duality. The 
$B$-field is related to the angle between legs in the $T$-dual theory. A
trivial $B$-field corresponds to a rectangular torus in the $T$-dual
theory. In our present case all of the tori involved are rectangular (up to equivalence)
and thus all of the $B$-fields are trivial. From Figure
\ref{Fig:species0twist} it might appear as though the IIA theories
appearing in Figure \ref{Fig:species0b} will have non-trivial
$B$-field. However, the $2$ elliptic curves in Figure
\ref{Fig:species0twist} are equivalent. In fact they are defined by
the same lattice, with the only difference being the choice of
generators for the lattice. While the distinction between the chain of
dualities in Figures \ref{Fig:species0r} and \ref{Fig:species0b} can't
be explained by the $B$-field, we do expect a 
$B$-field-like twisting when we $T$-dualize a theory
containing branes that wrap a cycle that makes a non-right angle with
the direction of $T$-duality. This twisting is clearly accounted for
by the sign choice. 

Furthermore, note that there are $3$ different $T$-duality groupings
involving the type \~I theory on a rectangular torus. They correspond
to the $2$ asymmetric cases $z\mapsto z+\frac{1}{2}$ and $z\mapsto
z+\frac{\tau}{2}$, and the symmetric case $z\mapsto
z+\frac{1+\tau}{2}$. Both asymmetric groupings contain one species $0$
type IIA theory and one species $2$ type IIA theory twisted by a
non-trivial sign choice. The species $0$ type IIA theory is equivalent
to one of the $2$ species $0$ type IIA theories appearing in the
symmetric grouping. This shows us that the difference between the chain
of dualities pictured in Figure \ref{Fig:species0r} and the one
pictured in Figure \ref{Fig:species0b} is in the direction we
$T$-dualize. This choice of direction decides if we need to include a
twisting in one of the type IIA theories or if the twisting only
appears in the type IIB theory with $4$ fixed points. 

Perhaps more telling is the fact that, in all possible
compactifications of the type \~I theory on a $2$-torus, there was only
a single equivariant circle. After $T$-dualizing in the direction of
the equivariant circle, a new direction becomes equivariant, making it
well-defined to $T$-dualize in a second direction. But what if we want
to perform the $T$-dualities in the opposite order? Sticking with the
case pictured in Figure \ref{Fig:species0r} for concreteness, if we
want to $T$-dualize the $\tau$ direction first, then we must consider
pairs of branes that wrap the $2$ cycles $\tau\cdot[0,1]$ and $\tau\cdot[0,1]+\frac{1}{2}$
which will correspond to a single brane that wraps one of the new
equivariant cycles in $S^{1,1}_{(+,-)}\times S^{2,0}$ twice. 

$S_{(+,-)}^{1,1}\times S^{2,0}$ and
$S_{(+,-)}^{1,1}\times S^{0,2}$ are related via the annihilation of
$O^+$- and $O^-$-planes. Twisting by the sign choice is related to a
topological obstruction to the annihilation of the $O^+$- and
$O^-$-planes appearing in the type IIA theory on
$S_{(+,-)}^{1,1}\times S^{2,0}$. As with the case of the type \~I
theory compactified on a circle, we can determine the $O$-plane
content by looking at the $2$-torsion points. 

Under the involution
$z\mapsto z+\frac{1}{2}$ the $2$ torsion points transform as 
\begin{equation}
\xymatrix@R-1.3pc{0\ar@{<->}[r]& \frac{1}{2}\\
\frac{\tau}{2}\ar@{<->}[r]& \frac{1}{2}+\frac{\tau}{2}.}
\label{eq:2tortran}
\end{equation}
As was the case with single compact direction, the exchange of
$2$-torsion points corresponds to the fact that the $O$-planes
associated with them can annihilate. However, the locations of the
$O$-planes are shifted. The $T$-dual theory is the type IIA theory
with involution $z\mapsto-\bar{z}+\frac{1}{2}$. If the involution were
$z\mapsto-\bar{z}$ (the $T$-dual of $z\mapsto z$), then there would be
$2$ $O8^+$-planes wrapping the cycles $\tau$ (going through the
$2$-torsion points $0$ and $\frac{\tau}{2}$) and $\tau+\frac{1}{2}$
(going through the $2$-torsion points $\frac{1}{2}$ and
$\frac{\tau+1}{2}$). But for $z\mapsto z+\frac{1}{2}$, the $O$-planes
in the $T$-dual theory must have opposite charge since the $2$-torsion
points transform as in equation \eqref{eq:2tortran}. They do not
annihilate each other since they are shifted from the $2$ torsion
points and wrap the fixed circles $\tau\cdot[0,1]+\frac{1}{4}$ and
$\tau\cdot[0,1]+\frac{3}{4}$. 

For the type \~I theory, the action on the $2$-torsion points is
completely determined by the involution, which Atiyah's $KR$-theory is also
sensitive to. $T$-duality exchanges the complex and K\"ahler
moduli. Therefore the action of the original involution on the
$2$-torsion is no longer contained in the information of the new
$T$-dual involution, and $KR$-theory cannot pick this up. This is why
we need to add it in as an additional datum in the form of a
twisting. Before giving a more general description, let us do a similar
analysis for the different branches of the species $1$
antiholomorphic involutions. 

\subsubsection{Different ways of $T$-dualizing the type I theory
  without vector structure} 

In Section \ref{sec:Tduality} we saw that there were $2$ branches to
the species $1$ antiholomorphic involutions: one with
$\tau=e^{i\theta}$, $\frac{\pi}{3}\leq\theta\leq\frac{\pi}{2}$, and
the other with $\tau=\frac{1}{2}+i\tau_2$. We saw that every elliptic
curve with $\tau=e^{i\theta}$,
$\frac{\pi}{3}\leq\theta\leq\frac{\pi}{2}$, was equivalent to an
elliptic curve with $\tau=\frac{1}{2}+i\tau_2$. Furthermore, we saw
that after choosing this representative we could differentiate between
the $2$ branches by the sign of the $j$-invariant. While the
representative with $\tau=\frac{1}{2}+i\tau_2$ for all species $1$
real elliptic curves was useful for understanding the $B$-field in
$T$-dual theories, considering the differences in Figures
\ref{Fig:species1v} and \ref{Fig:species1a} is useful for
understanding the sign choice. 

As before we could determine the signs of the $O$-planes in the
$T$-dual theories by looking at the action of the antiholomorphic
involutions on the $2$-torsion points. Let us first consider the
$2$-torsion points in the type IIA theory with $\tau=e^{i\theta}$ with
involution $z\mapsto\tau\bar{z}$. Under this involution, 
the $2$-torsion points $0$ and $\frac{1}{2}+\frac{1}{2}e^{i\theta}$ 
are fixed while $\frac{1}{2}$ and $\frac{1}{2}e^{i\theta}$ are
exchanged. This shows us that in the $T$-dual type IIB  
theory with $4$ fixed points there is an $O^+$- $O^-$-plane pair
corresponding to the $2$-torsion points that were exchanged and 
there are two more $O^+$-planes corresponding to the two fixed $2$-torsion
points. There are $2$ possible independent $T$-dualities we could have
performed starting with the type IIA theory with
$\tau=e^{i\theta}$. The other $T$-duality would have taken us to a
space where the fixed set has a single component. Here the only
supersymmetric (physically significant) possibility is giving the
component the positive sign choice, corresponding to an orthogonal
structure on the Chan-Paton bundle. 

Let us now consider the $2$-torsion points under the action of the
species $1$ antiholomorphic involutions for
$\tau=\frac{1}{2}+i\tau_2$. Under the involutions $z\mapsto\pm\bar{z}$
the $2$-torsion points $0$ and $\frac{1}{2}$ are fixed, while
$\frac{\tau}{2}$ and $\frac{1}{2}+\frac{\tau}{2}$ are
exchanged, giving the same $O$-plane charge content as for the case of
$\tau=e^{i\theta}$. 

Here we can again view the need to include the extra twisting of a
sign choice in terms of the angle between the direction of $T$-duality
and the equivariant circles. The only 
equivariant circles for the species $1$ antiholomorphic involutions,
$z\mapsto\pm\tau\bar{z}$ are the diagonal, $S_D$, and anti-diagonal,
$S_A$. $S_D$ is the fixed circle for the involution
$z\mapsto\tau\bar{z}$ and $S_A$ is the fixed circle for the involution
$z\mapsto-\tau\bar{z}$. Therefore, the $O$-plane wraps either $S_D$ or
$S_A$ making an angle $\frac{\pi}{6}<\frac{\theta}{2}<\frac{\pi}{4}$
with the directions we are $T$-dualizing in, $1$ and $\tau$. For the
type IIA theory with $\tau=\frac{1}{2}+i\tau_2$ and involution
$z\mapsto\pm\bar{z}$ the equivariant circles are 
parallel to the real and imaginary axes, making non-orthogonal angles
with $\tau$.  

Note here that the fact that $\tau$ is not normal to the real axis is
the source of the $B$-field, but the angle between the equivariant
circle parallel to the imaginary axis and $\tau$ is the source for
this additional twisting of the sign choice. Now let us consider a
more general description that does not require analyzing each case
separately.

\subsection{Normal forms for real elliptic curves and a geometric
  description of the sign choice} 

As we just saw, we were able to determine the $O$-plane charges in a
couple of specific examples by looking at the action of the
antiholomorphic involutions on the $2$-torsion points of the elliptic
curve. We could follow a similar argument for each possible
case. However, this is not a good way to describe the sign choice. An
immediate disadvantage is that we would need to repeat the analysis
numerous times just to cover all of the cases in the $T$-duality grouping
containing the type \~I theory. Moreover, the description depends
on a choice of zero, which is unsatisfactory.\footnote{From the point
  of view of the physics, the group structure on the elliptic curve is
  not natural, since it requires fixing a distinguished basepoint in
  spacetime, violating Mach's principle, and we should therefore just keep the
  structure of a principal homogeneous space over the Jacobian.} 
We can use the
observation about the $2$-torsion points, however, to help us
determine a more general description. 

We will begin with the type IIA theories since these are defined on
real elliptic curves\footnote{Here 
  we take a more general definition of real elliptic curve, including both
  classical and non-classical elliptic curves in the sense of
  \cite[\S0.10]{MR640091}.} which have a canonical description in
terms of a defining equation 
for the field of meromorphic functions on the elliptic curve,
$E=\bR(x,y)$. The defining equations for elliptic curves are usually
given in terms of Weierstrass' elliptic function, $\wp$. This is not
the best way to describe real elliptic curves. While $E=\bR(\wp,\wp')$
for the species $2$ and $1$ real elliptic curves, this 
is not true for the species $0$ real elliptic curves since $\wp$ does
not have period $\frac{1}{2}$. Therefore $\wp\not\in E$
\cite{MR640091}. 

As noted by  Whittaker and Watson in \cite[Ch.\ XX]{MR1424469}, it is
easiest to use 
elliptic functions of order $2$ when proving general theorems about
elliptic functions, due to the behavior of their singularities. There
are two classes of order $2$ elliptic functions. The first class
contains order $2$ elliptic functions with a single double pole in
each fundamental domain. $\wp$ is in this class, and the fact that it
has a single 
pole is what prevents it from working in the species $0$ case. The
second class contains functions with $2$ simple poles whose residues
sum to zero in each cell. Clearly we would like to give the defining
equation for the species $0$ real elliptic curves in terms of elliptic
functions in class $2$, so that the shift of $\frac{1}{2}$ can
exchange the $2$ poles (accounting for the exchange of the $2$-torsion
points). 

As explained in \cite{MR640091}, the defining equation of the species
$0$ real elliptic curves can be written in terms of the standard
Jacobian elliptic function \emph{sinam}
(\emph{sinus amplitudinis}), denoted $\sn$. Note that if we let
the quarter period of $\sn$ be $K=\frac{1}{4}$ and the half-period of
$\sn$ be $K'=\frac{ti}{2}$ so that $\tau'=\frac{K'}{K}=2ti$, then we
can write $\sn$ in terms of theta functions as 
$$\sn{(u)}=\frac{\theta_0(0)\theta_1(2u)}{2\theta_1'(0)\theta_0(2u)}.$$
This makes sense since then $\sn$ has zeros at the points of $\Lambda$
and $\Lambda+\frac{1}{2}$, and poles at the points of
$\Lambda+\frac{ti}{2}$ and $\Lambda+\frac{1+it}{2}$. Note that $\sn$
has the same periods, $(1,ti)$, as the elliptic curve with $\tau=ti$
that the species $0$ involution is defined on. 

An immediate benefit of using the Jacobian elliptic functions instead
of the theta functions is that the theta functions are only periodic,
while the Jacobian elliptic functions are doubly periodic. Now
following \cite{MR640091}, we describe the defining equations for the
species $0$ real elliptic curves. 

Consider the real degree $4$ polynomial
\begin{equation}
L_{u,v,w}(x,k)\equiv (-1)^u(1-(-1)^vx^2)(1-(-1)^wk^2x^2),
\end{equation}
with $u,v,w\in\{0,1\}$, $0<k\leq 1$, and if $v=w$ then
$k<1$. Equations of the form 
$$y^2=L_{u,v,w}(x,k)$$
are said to be in generalized Legendre form.

The defining equation for the species $0$ real elliptic curves
can be put generalized Legendre form as 
\begin{equation}
\label{eq:deqs0}
(i\sn'(z))^2=L_{1,1,1}(i\sn(z),k),
\end{equation}
where as usual, $k$ is the Legendre modulus
$$k=\frac{\theta_2^2}{\theta_3^2},$$
where $\theta_i=\theta_i(0)$. Letting $x=i\sn(z)$ and $y=i\sn'(z)$,
\eqref{eq:deqs0} is
\begin{equation}
\label{eq:s0dq}
y^2=-(1+x^2)(1+k^2x^2).
\end{equation}

Before discussing how this relates to $T$-duality and sign choices, we
note that the species $2$ and species $1$ defining equations can also be put
in generalized Legendre form. Even though we can give a cubic defining
equation in terms of $\wp$ for the species $2$ and species $1$ real elliptic
curves, it will be more useful to use the quartic Legendre form both
for comparison to the species $0$ case and for general $T$-duality
analysis as well. 

The defining equation for the species $2$ antiholomorphic involutions
can be put in the generalized Legendre form 
\begin{equation}
\label{eq:deqs2}
(\sn'(z))^2=L_{0,0,0}(\sn(z),k).
\end{equation}
Letting $x=\sn(z)$ and $y=\sn'(z)$ this is
\begin{equation}
\label{eq:s2dq}
y^2=(1-x^2)(1-k^2x^2).
\end{equation}

The species $1$ real elliptic curves do not have purely imaginary
$\tau$, so we need to define $\sn$ in terms of different quarter- and
half-periods. For species $1$, let the quarter-period of $\sn$ in one
direction be $\frac{1}{4}$ while the half-period in the other
direction is $K'=\frac{\tau}{2}=\frac{1}{4}+\frac{i\tau_2}{2}$, so
$\tau'=\frac{K'}{K}=1+2i\tau_2$. This way $\sn$ has the same periods as
the elliptic curve the species $1$ involution is defined on,
$(1,\frac{1}{2}+i\tau_2)$. In this case the Legendre modulus, $k$, is
purely imaginary as described in \cite{MR640091}. We can put the
defining equation for the species $1$ real elliptic curves in the
generalized Legendre form as 
\begin{equation}
\label{eq:deqs1}
(\sn'(z))^2=L_{0,0,1}(\sn(z),-ik).
\end{equation}
Letting $x=\sn$ and $y=\sn'$ this is
\begin{equation}
\label{eq:s1dq}
y^2=(1-x^2)(1-k^2x^2).
\end{equation}
It might at first appear that equations \eqref{eq:s2dq} and
\eqref{eq:s1dq} are the same, but for equation \eqref{eq:s2dq},
$k^2>0$, while for equation \eqref{eq:s1dq}, $k^2<0$. 

Both the species $0$ and species $2$ antiholomorphic involutions only
exist on elliptic curves with purely imaginary complex
structure. Therefore, equations \eqref{eq:s0dq} and \eqref{eq:s2dq}
describe rectangular tori (as is clear by the choice of the period)
and it should be possible to perform both species $0$ and species $2$
involutions on either one. This naturally leads to the question of why
equation \eqref{eq:s0dq} is associated with species $0$ while equation
\eqref{eq:s2dq} is associated with species $2$. The answer (which turns
out to be a crucial ingredient in understanding $T$-duality and sign
choice) is determined by the effect of the different involutions on
$x$ and $y$.  

As we will see, equation \eqref{eq:s2dq} is associated with species $2$
because it is the elliptic curve where the meromorphic functions
$x$ and $y$ on $E$ (the generators of the function field, which we
will simply call the meromorphic coordinates) transform
in the standard way $(x,y)\mapsto(\bar{x},\bar{y})$ under the standard
species $2$ involution $z\mapsto\bar{z}$, and not for any other
involutions, while equation \eqref{eq:s0dq} is associated with species
$0$ since it is the elliptic curve for which the meromorphic
coordinates of $E$
transform in the standard way $(x,y)\mapsto(\bar{x},\bar{y})$ for the
species $0$ involution $z\mapsto\bar{z}+\frac{1}{2}$, and not for
any other involutions. In this way, equations \eqref{eq:deqs0} and
\eqref{eq:deqs2} can be thought of as the canonical normal forms for the
species $0$ and species $2$ real elliptic curves, respectively. While
the normal form for the species $1$ involutions is distinguished by
the complex structure (seen here by the quarter periods), we could
also view it as the canonical transformation for the species $1$
involutions since only under the standard involution $z\mapsto\bar{z}$
do the meromorphic coordinates of $E$ transform in the standard way
$(x,y)\mapsto(\bar{x},\bar{y})$. 

Now that we have a canonical choice of a normal form for each species
of antiholomorphic involution, we can give a general geometric
description of the $T$-duality relationships we saw in Section
\ref{sec:Tduality}. We will immediately see each $T$-duality grouping
should be classified by the generalized Legendre form of the elliptic
curve the involutions are defined on. 

At the Gepner point $\tau=i$ there exist antiholomorphic involutions
for all $3$ species. This fact allows one to find some non-trivial
identities for the elliptic functions by performing involutions of a
certain species on the elliptic curve not canonically associated with
that species. 

\subsubsection{The $T$-duality group defined by $y^2=(1-x^2)(1-k^2x^2)$}

As we saw above, the elliptic curve defined by 
$$y^2=L_{0,0,0}(x,k),$$
with $x=\sn(z)$ and $y=\sn'(z)$ has purely imaginary complex
structure parameter $\tau$. Therefore equation \eqref{eq:deqs2} has to be
associated to 
either the $T$-duality group containing the type I theory with trivial
$B$-field or the one containing the type \~I theory. To determine
which, we need to find the involution that sends $(\sn(z),\sn'(z))$ to
$(\overline{\sn(z)},\overline{\sn'(z)})$. 

Using the addition formulas for $\sn(z)$ and $\sn'(z)=\cn(z)\dn(z)$, as well as
the fact that $\sn$ is odd while $\cn$ and $\dn$ are even, it can be
easily shown that under complex conjugation 
\begin{equation}
\label{eq:snconj}
\sn(\pm\bar{z})=\pm\overline{\sn(z)}\;\;\;\;\text{and}\;\;\;\;\sn'(\pm\bar{z})=\overline{\sn'(z)}.
\end{equation}
Having shown that $z\mapsto\bar{z}$ sends $(x,y)$ to
$(\bar{x},\bar{y})$ we see that the elliptic curve defined by equation
\eqref{eq:s0dq} should be associated to the $T$-duality grouping that
contains $z\mapsto\bar{z}$ on a rectangular torus, i.e., to the grouping
containing the type I theory (Figure \ref{Fig:species2}). 

We should view all of the theories appearing in Figure
\ref{Fig:species2} as being defined on the elliptic curve  
$$y^2=(1-x^2)(1-k^2x^2),$$
differing only in the involution defining their orientifold (real)
structure and the Legendre modulus, which depends on $\tau$. With this viewpoint we can describe the chain of dualities
pictured in Figure \ref{Fig:species2} in terms of the induced action
on $x$ and $y$ alone as pictured in Figure \ref{Fig:species2gen}. 
\begin{figure}[h]
\centering
$\xymatrix{ & {(x,y)\mapsto (x,y)}\ar@{<->}[dl]_{T_{1}}\ar@{<->}[dr]^{T_{2}} &  \\
 {(x,y)\mapsto(-\bar{x},\bar{y})}\ar@{<->}[dr]_{T_{2}}&  &
 {(x,y)\mapsto(\bar{x},\bar{y})}\ar@{<->}[dl]^{T_{1}}\\ 
&  {(x,y)\mapsto(-x,y)} & }$
\caption{Chain of $T$-dualities connecting the various theories
  related to the type I theory with trivial $B$-field (top of the
  diagram). All theories are defined on the elliptic curve
  $y^2=(1-x^2)(1-k^2x^2)$. Holomorphic involutions correspond to type
  IIB theories and antiholomorphic involutions correspond to type IIA
  theories.}  
\label{Fig:species2gen}
\end{figure}
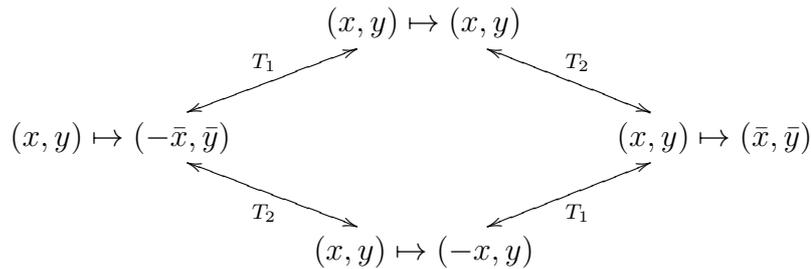

The only information we haven't yet specified is the sign choice. We
saw in Section \ref{sec:2tordes} that this information was obtained from
the action on the $2$-torsion points. This was, however, an unsatisfactory description, since as is clear from Figure \ref{Fig:species0twist} the description depended on distinguishing between equivalent elliptic curves. This same information is
contained in a \emph{canonical way} in the zeros of the
Jacobi Legendre normal form. 

The zeros of equation \eqref{eq:s2dq} are $x=\sn(z)=\pm 1$ and
$\pm\frac{1}{k}$. The fact that all of the zeros are real tells us
that all of the components of the fixed sets of any of the involutions
must have an orthogonal structure, or a positive sign choice. The fact
that the zeros are real means that they are fixed under conjugation,
corresponding to the fact that all of the $2$-torsion points are fixed in this
$T$-duality grouping, making the link between the action on the
$2$-torsion points and the type of zeros of $y$ explicit. 

Note that the zeros of $\sn'(z)$ occur at $z=\frac{1}{4}$,
$\frac{3}{4}$, $\frac{1}{4}+\frac{ti}{2}$ and
$\frac{3}{4}+\frac{ti}{2}$. Furthermore, $\sn$ is real and distinct at
all $4$ of those values. In particular, 
\begin{align}
\sn\left(\frac{1}{4}\right)&=1\\
\sn\left(\frac{3}{4}\right)&=-1\nonumber\\
\sn\left(\frac{1}{4}+\frac{ti}{2}\right)&=\frac{1}{k}\nonumber\\
\sn\left(\frac{1}{4}+\frac{ti}{2}\right)&=-\frac{1}{k}.\nonumber
\end{align}

Therefore we see that the $T$-duality grouping
containing the type I theory is associated to the normal form
$(\sn')^2=L_{0,0,0}(\sn,k)$ and the involutions pictured in Figure
\ref{Fig:species2gen}. We should note that all the theories are defined by the same normal form. $T$-duality changes the value of $\tau$, and hence the Legendre modulus. $T$-duality leaves the canonical normal form invariant, but changes the Legendre modulus and defining periods. Now let us turn our attention to the group
containing the type \~I theory. 

\subsubsection{The $T$-duality group defined by $y^2=-(1+x^2)(1+k^2x^2)$}

We have already seen that the elliptic curve defined by equation
\eqref{eq:deqs0} should be associated with a species $0$ involution, and
so must be associated with one of three possible versions of the
$T$-duality grouping that contains the type \~I theory. Based on the choice of $x$ and $y$, it should be associated with the type \~I theory
defined by a half-period shift in the real direction. Given our
definition of $\sn$ ($K=\frac{1}{4}$, $K'=\frac{it}{2}$), 
\begin{align}
\label{eq:realhalfshift}
\sn(z+\frac{1}{2})=\sn(z+2K)&=-\sn(z),\\
\sn'(z+2K)&=-\sn'(z).\nonumber 
\end{align}
Therefore under $z\mapsto\bar{z}+\frac{1}{2}$
\begin{align}
(x,y)=(i\sn(z),i\sn'(z))\mapsto &\left(i\sn\left(\bar{z}+\frac{1}{2}\right),
i\sn'\left(\bar{z}+\frac{1}{2}\right)\right)\\
&=(-i\sn(\bar{z}),-i\sn'(\bar{z}))\nonumber\\
&=(-i\,\overline{\sn(z)},-i\,\overline{\sn'(z)})\nonumber\\
&=(\bar{x},\bar{y}),
\end{align}
showing that the elliptic curve defined by equation \eqref{eq:deqs0}
should be associated with the $T$-duality grouping pictured in Figure
\ref{Fig:species0r}. Again, we can view this $T$-duality group in
terms of the induced action on $x$ and $y$ alone, as pictured in
Figure \ref{Fig:species0rNf}.
\begin{figure}[h]
\centering
$\xymatrix{ & {(x,y)\mapsto (-x,-y)}\ar@{<->}[dl]_{T_{1}}\ar@{<->}[dr]^{T_{2}} &  \\
 {(x,y)\mapsto(-\bar{x},\bar{y})}\ar@{<->}[dr]_{T_{2}}&  & {(x,y)\mapsto(\bar{x},\bar{y})}\ar@{<->}[dl]^{T_{1}}\\ 
&  {(x,y)\mapsto(x,-y)} & }$
\caption{Chain of $T$-dualities connecting the various theories
  related to the type \~I theory on $S^{0,2}\times S^{2,0}$ (top of
  the diagram). All theories are defined on the elliptic curve
  $y^2=-(1+x^2)(1+k^2x^2)$. Holomorphic involutions correspond to type
  IIB theories and antiholomorphic involutions correspond to type IIA
  theories.}  
\label{Fig:species0rNf}
\end{figure}

As before, we can determine the sign choice from the zeros of equation
\eqref{eq:deqs0}. The zeros are $x=\pm i$ and $\pm\frac{i}{k}$. Since
$x=i\sn(z)$ we see that the zeros occur at the same places as for
equation \eqref{eq:deqs2}: $z=\frac{1}{4}$, $\frac{3}{4}$,
$\frac{1}{4}+\frac{ti}{2}$ and $\frac{3}{4}+\frac{ti}{2}$. However,
now instead of the all of the zeros being real, they are imaginary and
come in complex conjugate pairs. This tells us that the charges of
the $O$-planes associated to these points must have opposite charge,
corresponding in our previous language to the fact that the
corresponding $2$-torsion points are exchanged. Let us first consider the
theory at the bottom of Figure \ref{Fig:species0rNf}, the type IIB
orientifold with with $4$ fixed points. The fixed points are at the
zeros of $i\sn'(z)$. The fact that $i\sn(\frac{1}{4})$ and
$i\sn(\frac{3}{4})$ are complex conjugates means that the $O$-planes
located there are opposite in sign. We can define our signs so that the sign
of the $O$-plane at each fixed point is the same as the sign of $\sn$
evaluated at that point. 

Now let us consider the antiholomorphic involutions, or type IIA
theories. The theory corresponding to $(x,y)\mapsto(\bar{x},\bar{y})$
is fixed point free since
$(i\sn(z),i\sn'(z))=(-i\overline{\sn(z)},-i\overline{\sn'(z)})$ has no
solutions. The type IIA theory with involution 
$(x,y)\mapsto(-\bar{x},\bar{y})$ in Figure \ref{Fig:species0rNf} has
$2$ $O$-planes. One wraps $\tau+\frac{1}{4}$ and the other wraps
$\tau+\frac{3}{4}$. Parametrize the 
cycle through $\tau+\frac{1}{4}$ by $s\in\left[0,1\right)$ as
$z=\frac{1}{4}+its$. Then it is easy to see $\sn$ is real on
$\tau+\frac{1}{4}$ and $\tau+\frac{3}{4}$ and has opposite sign on the
two cycles. Therefore $x=i\sn$ evaluated on the two cycles are
complex conjugates of each other, showing that the $2$ $O$-planes
should have opposite sign. This can be seen immediately by noting
one of the cycles goes through the two zeros of $y$ with positive
sign, while the other goes through the two zeros with a negative sign.  

Before considering the symmetric variant of the $T$-duality group
containing the type \~I theory (Figure \ref{Fig:species0b}), let's
consider a half-shift in the imaginary direction alone. That is,
consider the involution $z\mapsto z+\frac{\tau}{2}$ where $\tau=it$. Given our
definition of $\sn$, 
\begin{align}
\sn\left(z+\frac{it}{2}\right)=\sn(z+K')&=\frac{1}{k\sn(z)},\\
\sn'(z+K')&=\frac{-\sn'(z)}{k(\sn(z))^2}
\end{align}
Therefore, under $z\mapsto z+\frac{\tau}{2}$, $(x,y)\mapsto
(-\frac{1}{kx},\frac{y}{kx^2})$. Note that this in fact an
automorphism of the elliptic curve $y^2=-(1+x^2)(1+k^2x^2)$. In many
ways it is more interesting than the other automorphisms we have
encountered so far (conjugation, and multiplication by $-1$). However,
for our current purposes, it is unsatisfying that a shift in the imaginary
direction is distinguished. This is because we made made a choice of
preferred direction by choosing to use $\sn$ to define the elliptic
curve. 

As noted, $\sn$ has poles at $\Lambda+\frac{ti}{2}$ and
$\Lambda+\frac{1+it}{2}$. We chose $\sn$ because its $2$ poles where
exchanged by a shift by $\frac{1}{2}$. If we wanted to perform a
similar analysis for $z\mapsto z+\frac{\tau}{2}$ we would need to use
a Jacobi function with poles that are exchanged by a shift by
$\frac{\tau}{2}$. This leads us immediately to $\jsc$.

This is equivalent to exchanging the roles of the real and imaginary axes, since
\begin{equation}
i\jsc(z,k')=\sn(iz,k),
\end{equation}
where $(k')^2+k^2=1$. Letting $K'=\frac{\tau}{4}=\frac{it}{4}$ and
$K=\frac{1}{2}$ we see that the real elliptic curve with involution
$z\mapsto -\bar{z}+\frac{\tau}{2}$ is described by 
\begin{equation}
(\jsc'(z,k))^2=L_{1,1,1}(\jsc(z,k),k').
\end{equation}
Letting $y=\jsc'(z)$ and $x=\jsc(z)$ this becomes
$$y^2=(1+x^2)(1+k'^2x^2).$$
With this definition, we see that the $T$-duality group containing the
type \~I theory defined by a half-period shift in the imaginary
direction takes the same form as the $T$-duality group depicted in
Figure \ref{Fig:species0rNf}. Again the zeros take the same form but
with the roles of the real and imaginary axes reversed. That is the
zeros occur at $\frac{it}{4}$, $\frac{1}{2}+\frac{it}{4}$,
$\frac{3it}{4}$, and $\frac{1}{2}+\frac{3it}{4}$, and they come in two
pairs of complex conjugates. Note that we could also view the
canonical normal form for the type I $T$-duality group in terms of
$\jsc$ as  
$$(i\jsc')^2=L_{0,0,0}(i\jsc,k).$$
In this way we can view the difference between the species $0$ and
species $2$ antiholomorphic involutions as exchanging the roles of the
imaginary and real axes. 

Now returning to the symmetric case, we are tempted to use $\sd$ with
$K'=\frac{\tau}{4}$ and $K=\frac{1}{4}$, so that the poles are
exchanged by a shift along the diagonal. However this does not quite
work out, as will become clearer when we discuss the $T$-duality group
containing the type I theory without vector structure. The problem is
that $\sd$ satisfies the differential equation 
$$(\sd')^2=(1-k'^2\sd^2)(1+k^2\sd^2),$$
making it clear that there are $2$ real zeros and $2$ imaginary
zeros. This is not the correct form we expect for determining the sign
choice. This is rectified by instead using
$K'=\frac{1}{4}+\frac{it}{4}$ and $K=\frac{1}{4}$, so that
$\tau'=\frac{1}{2}+\frac{it}{2}$. Then following the same argument for
the species $1$ antiholomorphic involution from \cite{MR640091}, $k$ is
purely imaginary, while $k'=\sqrt{1-k^2}$ is real. Therefore the
zeros have the desired form if we let $x=i\sd(z)$ and
$y=i\sd'(z)$. While this can be done, we see that the 
asymmetric $T$-dual group containing the type \~I theory is more
natural. This is because we have to make a choice of direction no
matter which case we consider.

\subsubsection{The $T$-duality group defined by $y^2=(1-x^2)(1-k^2x^2)$, $k^2<0$}

As described above the elliptic curve $y^2=(1-x^2)(1-k^2x^2)$, with
$k^2<0$, $y=\sn'$ and $x=\sn$ is associated to the $T$-duality group
containing the type I theory without vector structure. As usual we can
describe the entire $T$-duality group (Figure \ref{Fig:species1v}) in
terms of the action on $x$ and $y$. Since $x$ and $y$ have the same
definitions as for the $T$-duality group containing the type I theory,
$x$ and $y$ will transform in the same way. However, the normal form
now has two real zeros and $2$ complex zeros which are complex
conjugates of each other, which determines the sign choice of
$(+,+,+,-)$ for the only non-trivial case appearing in this grouping. 
\begin{figure}[h]
\centering
$\xymatrix{ & {(x,y)\mapsto (x,y)}\ar@{<->}[dl]_{T_{1}}\ar@{<->}[dr]^{T_{2}} &  \\
 {(x,y)\mapsto(-\bar{x},\bar{y})}\ar@{<->}[dr]_{T_{2}}&  & {(x,y)\mapsto(\bar{x},\bar{y})}\ar@{<->}[dl]^{T_{1}}\\ 
&  {(x,y)\mapsto(-x,y)} & }$
\caption{Chain of $T$-dualities connecting the various theories
  related to the type I theory with non-trivial $B$-field (top of the
  diagram). All theories are defined on the elliptic curve
  $y^2=(1-x^2)(1-k^2x^2)$, $k^2<0$. Holomorphic involutions correspond to type
  IIB theories and antiholomorphic involutions correspond to type IIA
  theories.}  
\label{Fig:species1NF}
\end{figure}
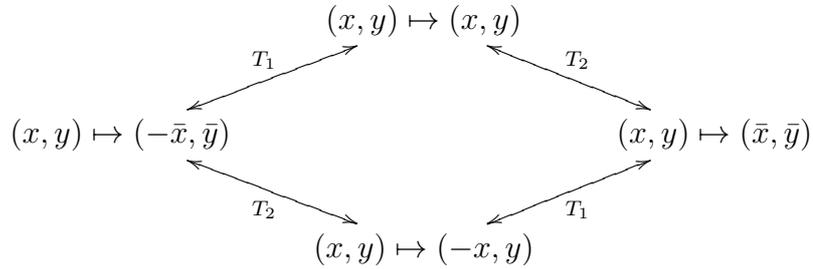
The distinction between the $T$-duality groups containing the type I
theory and type I theory without vector structure can be distinguished
by whether $k$ is real or imaginary. With this in mind and using the
identity 
$$\sn(z,ik)=k_1'\sd(z/k_1',k_1),$$
where $k_1=\frac{k}{\sqrt{1+k^2}}$ and $k_1k_1'=\frac{k}{1+k^2}$, we see that the elliptic curve can be written as 
$$y^2=(1-k_1'^2x^2)(1+k_1^2x^2),$$
making the form of the zeros clearer.

Now that we have given a geometric description for all of the possible
$T$-duality groups on an elliptic curve and the relevant sign choices,
we can describe the brane content in all of the various theories.

\section{D-brane content in the various orientifold theories}
\label{sec:branes}

Let's begin by reviewing all the twisted $KR$ groups for the
elliptic curve orientifolds. The results are given in Table
\ref{KRtable}, with the $T$-duality groupings color-coded.
We will want to analyze this table to determine the
D-brane content in each theory, and how the D-branes transform under
$T$-duality.   Note that this table only includes the
\emph{topological} type of each involution, and doesn't include
information on the complex structure. When the fixed set of the
involution is disconnected, the charges of the various $O$-planes are
indicated. 

\begin{table}[ht]
\begin{center}
\renewcommand{\arraystretch}{1.1}
\begin{tabular}{>{\columncolor{light-blue}}ll>{\columncolor{yellow}}l
>{\columncolor{cy}}l}
\rowcolor{green}
{\textbf{\small Type}} & \parbox{1in}{\strut\textbf{\small Fixed Set\\ and Twisting}} & 
{\textbf{Real Space}} & {\textbf{$KR$ Groups}}\\
{IIB (I)} & {$T^2$}&{$S^{2,0}\times
  S^{2,0}$}&\cellcolor{yel}{$KO^*(T^2)$}\\
{IIB (I no vec.)} & {$T^2$ with $w_2$}&{$S^{2,0}\times
  S^{2,0}$}&\cellcolor{light-red}{$KO^{*-1}\oplus KO^{*-1}\oplus K^*$}\\
{IIB (\~I)} & {$\emptyset$}&{$S^{2,0}\times
  S^{0,2}$}&{$KSC^*\oplus KSC^{*-1}$}\\ 
{IIB} & {$\{++++\}$}
  &{$S^{1,1}\times S^{1,1}$}&\cellcolor{yel}{$KO^{*+2}(T^2)$}\\
{IIB} & {$\{+++-\}$}&{$S^{1,1}\times 
  S^{1,1}$}&\cellcolor{light-red}{$KO^{*+1}\oplus KO^{*+1}\oplus K^{*}$}\\   
{IIB} & {$\{++--\}$}&{$S^{1,1}\times 
  S^{1,1}$}&{$KSC^{*+2}\oplus KSC^{*+1}$}\\
{IIA (species 2)} & {$S^1\amalg
  S^1$}&{$S^{1,1}\times S^{2,0}$}&\cellcolor{yel}{$KO^{*+1}(T^2)$}\\  
{IIA} & {$S^1_+\amalg
  S^1_-$}&{$S^{1,1}\times
  S^{2,0}$}&{$KSC^{*+1}\oplus KSC^{*}$}\\
{IIA (species 0)} &
{$\emptyset$}&{$S^{1,1}\times S^{0,2}$}
&{$KSC^{*+1}\oplus KSC^*$}\\     
{IIA (species 1)} &  {$S^1$}&{not a
  product}&\cellcolor{light-red}{$KO^*\oplus KO^*\oplus K^{*-1}$}\\ 
\end{tabular}
\end{center}
\caption{Summary of the twisted $KR$ groups for all the
elliptic curve orientifolds, with the T-duality groupings color-coded}
\label{KRtable}
\end{table}

The first thing one notes about the table is that the torsion-free
part of the $KR$-groups is the same in all cases, except for a degree
shift which is accounted for by T-duality. This is perhaps clearer in
Table \ref{Table:charges}, obtained from the $KR$ calculations via
equation \eqref{eqn:relvabs}. This table will be explained more fully below
when we discuss the specific brane content.  For example, the
torsion-free part of $KO^*$ is $\bZ$ in degrees $0$ mod $4$, so
the torsion-free part of $KO^*(T^2)$, which classifies the D-brane
charges in the type I theory with trivial B-field, is $\bZ$ in
degrees $0$ and $2$ mod $4$, and $\bZ^2$ in degrees $1$ mod $4$.
This is the same as the torsion-free part of the first column in Table
\ref{table:twistedKO}, which classifies the D-brane
charges in the type I theory with no vector structure, and also the
same as the torsion-free part of $KSC^*\oplus KSC^{*-1}$, which
classifies D-brane charges in the species $0$ case.

\begin{table}[ht]
\centering
\noindent\makebox[\textwidth]{\begin{tabular}{|| c || c | c | c ||}
\hline
$T$-duality group & Type I & Type I no vector & Type 
\~I\rule{0pt}{14pt}\\ \hline
{\small Real zeros in normal form}& $4$ & $2$ & $0$ \\ \hline\hline
{$Dp$-brane} & \multicolumn{3}{ c ||}{Charge} \\ 
\hline\hline
$D7$ & $\bZ\oplus\bZ_2^3$ & $\bZ\oplus\bZ_2^2$ & $\bZ\oplus\bZ_2$  \\ \hline
$D6$ & $\bZ_2^3$ & $\bZ_2^2$ & $\bZ_2$ \\ \hline
$D5$ & $\bZ\oplus\bZ_2$ & $\bZ$ & $\bZ$ \\ \hline
$D4$ & $\bZ^2$ & $\bZ^2$ & $\bZ^2$\\ \hline
$D3$ & $\bZ$ & $\bZ$ & $\bZ\oplus\bZ_2$\\ \hline
$D2$ & $0$ & $0$ & $\bZ_2$ \\ \hline
$D1$ & $\bZ$ & $\bZ$ & $\bZ$ \\ \hline
$D0$ & $\bZ^2\oplus\bZ_2$ & $\bZ^2$ & $\bZ^2$ \\ \hline
$D(-1)$ & $\bZ\oplus\bZ_2^3$ & $\bZ\oplus\bZ_2^2$ & $\bZ\oplus\bZ_2$  \\ \hline
\end{tabular}}
\smallskip
\caption{$D$-brane charges in all three groups.}
\label{Table:charges}
\end{table}

The explanation for this is that the \emph{torsion-free part} of the
twisted $KR$-groups classifies the BPS D-branes.  These are insensitive
to the $O$-plane charges, since the BPS planes are stable near both
$O^+$-planes and $O^-$-planes. Since each $T$-duality grouping
contains a IIB theory with four $O$-planes, which differ from each
other only in the $O$-plane charges, the BPS spectrum must be the same
in all cases.

Alternatively, one can argue that the BPS spectrum, being a
torsion-free phenomenon, does not depend on any twistings, either in
$H^2(T^2, \bZ_2)$ (this type of twisting distinguishes the type I
theory without vector structure from the usual type I theory) or
depending on a sign choice (since $KO$ and $KSp$ agree up to
torsion). We now look at the sources of the different brane charges in
all three $T$-duality groups. 

\subsection{The $T$-duality group defined on $y^2=(1-x^2)(1-k^2x^2)$}
\label{sec:Tduality2}

As described in the previous section, the $T$-duality group defined on
the elliptic curve $y^2=(1-x^2)(1-k^2x^2)$ contains the type I theory 
compactified on $\bT^2$ with trivial $B$-field. Before looking at this
case, it is useful to 
review the case of compactifying on a single circle. There are many
good sources for the $KR$-theory of single circle and its relation to
string theory \cite{Olsen:1999,Bergman:1999}.  

The type I theory compactified on a circle corresponds to the type IIB
theory compactified on $S^{2,0}$ and modded out by the action of
$\Omega$. As usual we will not explicitly state that we are modding out by $\Omega$ each time, since we will always be modding out by the action of $\Omega$. 

$Dp$-brane charges in the type I theory compactified on a circle are classified by
\begin{equation}
KR(S^{9-p,0}\times S^{2,0}, S^{2,0})\cong KO^{p-8}\oplus KO^{p-9},
\end{equation}
where $KO^{-j}=KO^{-j}(\pt)$. The second factor on the right-hand side
corresponds to $Dp$-brane charge coming from unwrapped branes and the
first factor corresponds to the charge contribution from wrapped
branes. The complete brane content is given in Table
\ref{Table:species2_S1}. 

Since the type IA theory is obtained from the type I theory
compactified on a circle by a $T$-duality, the relevant $KR$-theory is
shifted in index by $1$. Therefore, $Dp$-brane charges in the type IA
theory are classified by 
\begin{equation}
KR^{-1}(S^{9-p,0}\times S^{1,1}, S^{1,1})\cong KO^{p-9}\oplus KO^{p-8},
\end{equation}
where the second factor on the right-hand side corresponds to
$Dp$-brane charge coming from unwrapped branes and the first factor
corresponds to the charge contribution from wrapped branes. The
complete brane content is given in Table \ref{Table:species2_S1}. The
fact that $T$-duality exchanges wrapped and unwrapped branes is
described by the exchanged roles for $KO^{p-8}$ and $KO^{p-9}$ in the two
theories. 

\begin{table}[h]
\noindent\makebox[\textwidth]{\begin{tabular}{|| c || c | c | c | c | c | c | c | c | c | c || m{2.25cm} | m{2.25cm} ||}
\hline
$Dp$-brane & $D8$ & $D7$ & $D6$ & $D5$ & $D4$ & $D3$ & $D2$ & $D1$ & $D0$ & $D(-1)$ & type I on $S^1$ & type IIA on $S^{1,1}$ \\
\hline\hline
$KO^{p-8}$ & $\bZ$ & $\bZ_2$ & $\bZ_2$ & $0$ & $\bZ$ & $0$ & $0$ & $0$ & $\bZ$ & $\bZ_2$ & $(p+1)$-brane wrapping $S^{2,0}$  & unwrapped $p$-brane\\ \hline
$KO^{p-9}$ & $\bZ_2$ & $\bZ_2$ & $0$ & $\bZ$ & $0$ & $0$ & $0$ & $\bZ$ & $\bZ_2$ & $\bZ_2$ & unwrapped $p$-brane & $(p+1)$-brane wrapping $S^{1,1}$  \\ \hline
\end{tabular}}
\smallskip
\caption{$D$-brane charges in the type I theory compactified on a circle and the type IA theory.}
\label{Table:species2_S1}
\end{table}

$D0$-brane charge in the type I theory receives an integral
contribution from a wrapped BPS $D1$-brane and a $\bZ_2$ contribution
from an unwrapped non-BPS 
$D0$-brane. Under $T$-duality the wrapped $D1$-brane gets mapped to an
unwrapped $D0$-brane in the type IA theory and the $D0$-brane gets
mapped to a wrapped non-BPS $D1$-brane. However, the non-BPS branes
are not stable at all points of the moduli space, so we cannot extend
this argument to the entire moduli space. 

To see how this isomorphism is explained physically when the non-BPS
branes are unstable, let us look at the unwrapped $D0$-brane in the
type I theory, following \cite{Bergman:1999}. The spectrum of open
strings beginning and ending on the $D0$-brane is tachyon free in $10$
dimensions. However, when we compactify on a circle, the ground state
with winding number $1$ will have a classical mass squared given by 
$$m^2=-\frac{1}{2}+R^2,$$
in units with $\alpha'=1$. It will therefore be tachyonic if the
radius of the compactification circle is $R<\frac{1}{\sqrt{2}}$. In
this situation, the $D0$-brane will decay into a $D1$-$\overline{D1}$
pair that wrap the $S^1$. The tachyon must have anti-periodic boundary
conditions so that above the critical radius it will condense into a
stable kink (the $D0$-brane). This requires turning on a $\bZ_2$
Wilson line on either the $D1$-brane or $\overline{D1}$-brane. The
$\bZ_2$ charge of the unwrapped $D0$-brane corresponds to a $\bZ_2$
valued Wilson line in its decayed configuration. The same argument
shows that the $\bZ_2$ charge of the unwrapped $D7$- and $D8$-branes
correspond to $\bZ_2$ valued Wilson lines on their decay
configurations, wrapped $D8$-$D8$ and $D9$-$\overline{D9}$ pairs
respectively. For the $D(-1)$- and wrapped $D0$-brane you have to
compare the instanton action since they are instantonic. 

Under $T$-duality the unwrapped non-BPS $Dp$-branes with $p=0,7,8$ get
mapped to wrapped $D(p+1)$-branes. Since $T$-duality inverts the
radius, these develop a tachyon and become unstable when the $T$-dual
radius $\tilde{R}>\sqrt{2}$. For such radii the wrapped $D(p+1)$-branes
decay into unwrapped $Dp$-$\overline{Dp}$ systems constrained to the
$O8^+$-planes. The non-trivial $\bZ_2$ Wilson line in the type I theory
corresponds to the brane and anti-brane being on different
$O$-planes in the type IA theory. When a $\bZ_2$ charged
wrapped $D(p+1)$-brane decays in the type IA theory, its charge then
corresponds to the $\bZ_2$ choice of which $O$-plane the $Dp$-brane is
located on. 

In the region of stability of the type IA theory ($\tilde{R}<\sqrt{2}$) there
would, at first glance, seem to be more $\bZ_2$ charges than predicted
by $K$-theory. Given the above discussion we would expect the
$D0$-brane in the type IA theory to get a $\bZ_2$ charge contribution
from the choice of which $O$-plane to locate an unwrapped $D0$-brane
and another $\bZ_2$ charge contribution coming from a wrapped
$D1$-brane (since we are in the region of stability). However,
$K$-theory predicts that there should be only one source of $\bZ_2$
$D0$-brane charge. To understand this, consider a stuck $D0$-brane
(half of a $D0$-brane) at one $O$-plane and a wrapped $D1$-brane. This
has the same conserved charges as a stuck $D0$-brane at the other
$O$-plane and will decay into the latter configuration. In general a
stuck $Dp$-brane at one $O$-plane will be transferred to a stuck
$Dp$-brane at the other $O$-plane by a wrapped non-BPS
$D(p+1)$-brane. This is described in \cite{Hyakutake:2000} as a
non-BPS brane stretched between two $O$-planes switching the type of
$O$-plane between an $Op$- and $\widetilde{Op}$-plane (an
$\widetilde{Op}$-plane can be interpreted as an $Op$-plane with a
stuck $Dp$-brane). The above brane transfer operation shows that the
$\bZ_2$ $D0$-brane charge coming from the wrapped $D1$-brane and the
contribution from the choice of which $O$-plane the unwrapped
$D0$-brane is located at are not distinct sources of charge and that
the $K$-theory prediction that there is only one distinct source of
$\bZ_2$ $D0$-brane charge is correct. We have seen that the charge
spectrum remains unchanged in and out of the region of stability for
the non-BPS branes and that $K$-theory accurately classifies the
charges for the entire moduli space. 

Now let us return to the situation of interest, where we compactify
$2$ dimensions. The $Dp$-brane charges in the type IIB theory on $S^{2,0}\times S^{2,0}$ are classified by 
\begin{equation}
KR(S^{8-p,0}\times S^{2,0}\times S^{2,0},S^{2,0}\times S^{2,0})\cong KO^{p-7}\oplus 2KO^{p-8}\oplus KO^{p-9}.
\end{equation}
The $KO^{p-7}$ term corresponds to $Dp$-brane charge coming from
$D(p+2)$-branes wrapping the entire $\bT^2$. The $2$ copies of
$KO^{p-8}$ correspond to $D(p+1)$-branes wrapping the $2$ different
circles of $\bT^2$ and the $KO^{p-9}$ term corresponds to unwrapped
$Dp$-branes. The complete brane content is given in Table
\ref{Table:species2_charges}.  

The $Dp$-brane charges in the type IIA theory on $S^{2,0}\times S^{1,1}$ are classified by 
\begin{equation}
KR^{-1}(S^{8-p,0}\times S^{2,0}\times S^{1,1},S^{2,0}\times S^{1,1})\cong KO^{p-7}\oplus 2KO^{p-8}\oplus KO^{p-9}.
\end{equation}
Now, after performing a $T$-duality from the previous theory, the
$KO^{p-7}$ term corresponds to $Dp$-brane charge coming from
$D(p+1)$-branes wrapping $S^{2,0}$. One of the copies of
$KO^{p-8}$ now corresponds to a $D(p+2)$-branes wrapping $S^{2,0}\times S^{1,1}$, while the other corresponds to unwrapped
$Dp$-branes. The $KO^{p-9}$ term corresponds to $D(p+1)$-branes
wrapping $S^{1,1}$. 

The $Dp$-brane charges in the type IIB theory on $S^{1,1}\times S^{1,1}$ are classified by 
\begin{equation}
KR^{-2}(S^{8-p,0}\times S^{1,1}\times S^{1,1},S^{1,1}\times S^{1,1})\cong KO^{p-7}\oplus 2KO^{p-8}\oplus KO^{p-9}.
\end{equation}
Note the shift in index by $2$ from the relevant $KR$-theory for 
type I on $\bT^2$. Performing $2$ $T$-dualities shifts the index by
$2$. This fact is often overlooked when describing ordinary type
IIA/type IIB $T$-dualities on smooth manifolds. If we perform $2$
$T$-duality transformations on the type IIB theory on $S^1\times S^1$
we get back the type IIB theory on $S^1\times S^1$. $D$-brane charges
in the original theory are classified by $K(\bT^2)$; in the dual
theory they are classified $K^{-2}(\bT^2)$.  $K^{-2}(\bT^2)\cong
K(\bT^2)$ by Bott periodicity, but it is important to keep track of
the index shift for determining the dimensions of the branes
contributing the various charges. In our present discussion it is even
more important because the relevant $KR$-theory has period $8$ and not
$2$. For the type IIB theory on $S^{1,1}\times S^{1,1}$, the
$KO^{p-7}$ term corresponds to $Dp$-brane charge coming from unwrapped
$Dp$-branes. The $2$ copies of $KO^{p-8}$ correspond to
$D(p+1)$-branes wrapping the different copies of $S^{1,1}$ and the
$KO^{p-9}$ term corresponds to $D(p+2)$-branes wrapping $S^{1,1}\times
S^{1,1}$. 

It is pointed out in \cite{Dabholkar:1996} that performing $2$
$T$-dualities from the type I theory does not just lead to reflection
of both compact directions, but should also be combined with the
action of $(-1)^{F_L}$, where $F_L$ is the left-moving spacetime
fermion number. As described in \cite{Witten:1998}, $D$-branes in
orientifolds of the type $X/(\iota\cdot\Omega\cdot(-1)^{F_L})$ are
classified by $KR_\pm$. Using the definition for $KR_\pm$ given in
\cite{Bergman:2001-0105}, $KR_\pm(X)\cong KR(X\times\bR^{2,0})$, we see
that 
\begin{equation}
\label{eqn:KRpm}
KR_\pm(S^{1,1}\times S^{1,1})\cong KR^{-2}(S^{1,1}\times S^{1,1}).
\end{equation}
In this way, we can write our $T$-duality relationship between the
type IIB theory on $S^{2,0}\times S^{2,0}$ and on $S^{1,1}\times
S^{1,1}$ entirely in terms of $KR_\pm$. 
\begin{eqnarray}
KR_\pm(S^{1,1}\times S^{1,1}) &\cong& KR^2_\pm(S^{2,0}\times S^{2,0}) \nonumber\\
&\cong&KH_\pm^{-2}(S^{2,0}\times S^{2,0}),
\end{eqnarray}
where the last line was included for the sake of completeness and to
point out that depending on the variant of $KR$-theory we choose we
can make the degree change go in either direction, but it will always
be a change of $2$. It is also interesting to note that in this
example we were able to avoid concerning ourselves with $KR_\pm$ and
the presence of the $(-1)^{F_L}$ action by taking the appropriate
degree shift for $KR$. $KR_\pm$ added no new information beyond using
the correct degree for $KR$-theory. 

\begin{table}[ht]
\centering
\noindent\makebox[\textwidth]{\begin{tabular}{|| c || c | c | c | c | c | c | c | c | c || m{2.25cm} | m{2.25cm} | m{2.25cm} ||}
\hline
$Dp$-brane & $D7$ & $D6$ & $D5$ & $D4$ & $D3$ & $D2$ & $D1$ & $D0$ & $D(-1)$ & type I on $S^{2,0}_{(1)}\times S^{2,0}_{(2)}$ & type IIA on $S_{(1)}^{2,0}\times S_{(2)}^{1,1}$ & type IIB on $S_{(1)}^{1,1}\times S_{(2)}^{1,1}$\\
\hline\hline
$KO^{p-7}(\pt)$ & $\bZ$ & $\bZ_2$ & $\bZ_2$ & $0$ & $\bZ$ & $0$ & $0$ & $0$ & $\bZ$ & $(p+2)$-brane wrapping $\bT^2$  & $(p+1)$-brane wrapping $S_{(1)}^{2,0}$ & unwrapped $p$-brane\\ \hline
$KO^{p-8}(\pt)$ & $\bZ_2$ & $\bZ_2$ & $0$ & $\bZ$ & $0$ & $0$ & $0$ & $\bZ$ & $\bZ_2$ & $(p+1)$-brane wrapping $S_{(1)}^{2,0}$ & $(p+2)$-brane wrapping $S_{(1)}^{2,0}\times S_{(2)}^{1,1}$ & $(p+1)$-brane wrapping $S_{(2)}^{1,1}$ \\ \hline
$KO^{p-8}(\pt)$ & $\bZ_2$ & $\bZ_2$ & $0$ & $\bZ$ & $0$ & $0$ & $0$ & $\bZ$ & $\bZ_2$ & $(p+1)$-brane wrapping $S_{(2)}^{2,0}$ & unwrapped $p$-brane & $(p+1)$-brane wrapping $S_{(1)}^{1,1}$ \\ \hline
$KO^{p-9}(\pt)$ & $\bZ_2$ & $0$ & $\bZ$ & $0$ & $0$ & $0$ & $\bZ$ & $\bZ_2$ & $\bZ_2$ & unwrapped $p$-brane & $(p+1)$-brane wrapping $S_{(2)}^{1,1}$ & $(p+2)$-brane wrapping $S_{(1)}^{1,1}\times S_{(2)}^{1,1}$ \\ \hline
\end{tabular}}
\smallskip
\caption{$D$-brane charges for all of the theories related to the type I theory with trivial $B$-field on $\bR^{8}\times\bT^2$ by $T$-duality.}
\label{Table:species2_charges}
\end{table}

Again, this description for wrapped and unwrapped branes is not valid
in the entire moduli space. Non-BPS $D$-branes are not stable for all
possible radii of the compact dimensions. For determining the non-BPS
brane relations under $2$ $T$-dualities between the $2$ type IIB
theories, it is possible to follow a similar argument as for
compactification on a single circle, but higher order brane
transfer operations need to be accounted for. Note that even in this
case, the regions of stability will not be clear. When performing a
single $T$-duality to the type IIA theory, the situation becomes more
ambiguous. 

It is unclear if the stability conditions on a single circle can be
taken individually for the $2$ circles we now have, or if there is
some mixing. Additionally, the effect on the branes will depend on the
circle we $T$-dualize. As an example, consider the non-BPS $D0$-brane
in the type I theory compactified on $\bT^2$. For compactification on
a single circle, we saw that the $D0$-brane will decay into a
$D1$-$\overline{D1}$ pair wrapping the circle when the radius becomes
too small. An immediate question that arises when going to $2$ compact
dimensions is: if the radius of one circle gets too small will the
$D0$-brane decay or if the other circle has a large enough radius will
it be stable? Another possibility is that when the volume gets too
small, the $D0$-brane will decay into a $D1$-$\overline{D1}$ pair
wrapping the diagonal. Once we decide which circle the decay
$D1$-$\overline{D1}$ pair wraps, we have to consider which circle we're
$T$-dualizing. If we $T$-dualize the circle the $D1$-$\overline{D1}$
pair wrap, then they will map to an unwrapped $D0$-$\overline{D0}$ pair
located at the different $O$-planes. This corresponds to a situation
where the non-BPS $D1$-brane in the type IIA theory decays into
$D0$-$\overline{D0}$ pair. If, however, we $T$-dualize the circle
orthogonal to the $D1$-$\overline{D1}$ pair instead, they will map to a
wrapped $D2$-$\overline{D2}$ pair in the type IIA theory.\footnote{For the
  sake of completeness, we note that if the $D1$-$\overline{D1}$ pair
  wraps the diagonal, $T$-dualizing either leg will lead to a wrapped
  $D2$-$\overline{D2}$ pair with $B$-field.} In this case the non-BPS
$D1$-brane in the type IIA theory decays into a wrapped
$D2$-$\overline{D2}$ pair. 

Most likely all three of these objects: the non-BPS $D1$-brane,
the $D0$-$\overline{D0}$ pair, and the $D2$-$\overline{D2}$ pair, are all
stable in different regions of the moduli space, while in regions
where more than one is stable, brane transfer operations likely show
that they are not distinct sources of $D0$-brane charge. This does,
however, illustrate an important limitation of the $K$-theoretic
description of brane charges. The $K$-theory can only tell us there is
a stable source of non-BPS $D0$-brane charge. It can not determine
what that source is, let alone its regions of stability. To determine
this information, one would have to do a full boundary state
analysis. Our gained knowledge from the $K$-theoretic analysis does
greatly constrain what boundary states must be looked at, showing the
benefit of performing the $K$-theoretic analysis first. 

\subsection{The $T$-duality group defined on $y^2=-(1+x^2)(1+k^2x^2)$}

The series of $T$-dualities involving the elliptic curve
$y^2=-(1+x^2)(1+k^2x^2)$ follows a pattern very similar to the
previous case, but 
involves the type \~I and $\widetilde{IA}$ theories. Therefore, we
will review these two theories and their relation to one another
first. The full brane content is given in Table \ref{Table:charges_It}.

$Dp$-brane charges in the type \~I theory are classified by 
\begin{equation}
KR(S^{9-p,0}\times S^{0,2},S^{0,2})\cong KSC^{p-8}.
\end{equation}
$KSC$ doesn't split into pieces from wrapped and unwrapped
branes as in the 
species $2$ case. The authors of \cite{Bergman:1999} were still able
to determine which 
charges come from wrapped and unwrapped branes using what we know
about $T$-duality, the type IA theory, and $O8^\pm$-planes. We will
follow their argument here.  

As described in \cite{Doran:2013sxa}, $Dp$-brane charges in the type
$\widetilde{IA}$ theory are classified by 
$$KR^{-1}_{(+,-)}(\bR^{9-p,0}\times S^{1,1}, S^{1,1})=KSC^{p-8}.$$ 
(This will become clearer as we
explore the stability of $D$-branes near the $O8^+$ and
$O8^-$-planes.) We saw in the previous section that unwrapped
$Dp$-branes near an $O8^+$-plane are classified by $KO^{p-8}$ (see
Table \ref{Table:species2_S1}). 

Conversely, $O8^-$-planes are quantized with symplectic gauge bundles,
classified by $KSp(X)=KO^{-4}(X)$. Therefore, unwrapped $Dp$-brane
charges near the $O8^-$-plane are classified by $KSp^{p-8}$. See Table
\ref{Table:charges_O8pm} for a list of unwrapped $Dp$-brane charges
near $O8^\pm$-planes in a type IIA orientifold. 

\begin{table}[ht]
\centering
\noindent\makebox[\textwidth]{\begin{tabular}{|| c || c | c | c | c | c | c | c | c | c | c ||}
\hline
$Dp$-brane & $D8$ & $D7$ & $D6$ & $D5$ & $D4$ & $D3$ & $D2$ & $D1$ & $D0$ & $D(-1)$\\
\hline
$KO^{p-8}$ & $\bZ$ & $\bZ_2$& $\bZ_2$& $0$& $\bZ$& $0$& $0$& $0$& $\bZ$& $\bZ_2$\\ \hline
$KSp^{p-8}$ & $\bZ$& $0$& $0$& $0$& $\bZ$& $\bZ_2$& $\bZ_2$& $0$& $\bZ$& $0$\\ \hline
\end{tabular}}
\smallskip
\caption{Unwrapped $D$-brane charges near the $O8^+$- and $O8^-$-planes in a type IIA orientifold.}
\label{Table:charges_O8pm}
\end{table}

Let us first consider BPS branes. Table \ref{Table:charges_It} shows
BPS $D8$-branes, but tadpole cancellation in the type $\widetilde{IA}$
theory will require the net $D8$-brane charge to be zero. Table
\ref{Table:charges_It} shows that there are unwrapped BPS
$Dp$-branes for $p=0,4$. The $\bZ$ contribution to these charges
coming from both $KO$ and $KSp$ (see Table \ref{Table:charges_O8pm})
are equated. This is because $2$ half 
$D0$-branes on the $O8^+$-plane form a $D0$-brane in the bulk which
can then be interpreted as a $D0$-brane on the $O8^-$-plane. Similarly a $D4$-brane on the $O8^+$-plane can be considered as $2$
half $D4$-branes on the $O8^-$-plane. Now half $Dp$-branes can only
live on one of the $O8$-planes and it no longer makes sense to have a
brane transfer operation. Therefore there is no longer the $\bZ_2$
charge contribution coming from a choice of $O8$-plane that we saw in
the type IIA theory on $S^{1,1}$. It is important to note that the BPS
branes are stable near both $O^-$-planes and $O^+$-planes as is
apparent from the fact that there are integral contributions coming
from both the $KO$ and $KSp$ terms. 

We will now consider the unwrapped non-BPS $Dp$-branes. Unlike the
BPS case, we cannot just look locally at stable branes near the
different $O8$-planes, but need to take into account global
aspects. Table \ref{Table:charges_O8pm} correctly predicts the $\bZ_2$
charge contribution coming from unwrapped non-BPS $Dp$-branes for
$p=-1,3,7$, but it also seems to predict $\bZ_2$ charge contributions
from unwrapped non-BPS $Dp$-branes for $p=2,6$ that don't appear in
Table \ref{Table:charges_It}. This is because the $D6$-brane
($D2$-brane) is stable at the $O8^+$-plane ($O8^-$-plane), but
unstable near the $O8^-$-plane ($O8^+$-plane), so not globally
stable. Table \ref{Table:charges_It} lists only those charges that are
globally stable. Let us look a little closer at why the $D6$-brane is
not globally stable. The $D6$-brane can be viewed as $D6$-brane
together with its mirror $\overline{D6}$-brane. Near the $O8^+$-plane
the orientifold action projects out the tachyon in the system. Near
the $O8^-$-plane the projection is different and the tachyon is not
removed. 

We will now look at the wrapped brane charges following
\cite{Bergman:1999}, which determines the wrapped brane charges in the
type $\widetilde{IA}$ by considering unwrapped brane charges in the
type \~{I} theory. After going through their description, we will go
back and see how we can follow an argument similar to the one we used for the
unwrapped branes, by using the appropriately shifted $KO$ and $KSp$
groups. Wrapped $D(p+1)$-branes in type $\widetilde{IA}$ theory
correspond to unwrapped $Dp$-branes in the type \~{I} theory, so we
will consider unwrapped branes in the type \~{I} theory. Since
$D$-branes in the type \~{I} theory must obey the symmetry we modded
the type IIB theory on $\bR^9\times S^1$ out by (which includes a
rotation of $S^1$ by $\pi$ radians), we must equate an unwrapped
$Dp$-brane with another $Dp$-brane at the opposite point on the circle
for $p=1,5$ and a $\overline{Dp}$-brane for $p=-1,3,7$. The $D1$- and
$D5$-brane configurations are stable and contribute the BPS $D1$- and
$D5$-brane charges appearing in table \ref{Table:charges_It}. They
correspond to $D2$- and $D6$-branes in the type $\widetilde{IA}$
theory that wrap the compact dimension twice respectively. For
$p=-1,3,7$ the $Dp$-$\overline{Dp}$ systems give stable non-BPS
states. To see that these states carry $\bZ_2$ charge, consider a
system consisting of two such states. While each individual
$Dp$-$\overline{Dp}$ pair at opposite points of the circle is stable, the $Dp$-brane from one state can
annihilate with the $\overline{Dp}$-brane from the other state and
\emph{vice versa}. This would seem to imply $2$ sources of $\bZ_2$ charge in
the type \~{I} theory; one from the unwrapped $Dp$-branes with
$p=-1,3,7$ just described and the other from wrapped $D(p+1)$-branes
$p=-1,3,7$ corresponding to the unwrapped $Dp$-branes with $p=-1,3,7$
in the type $\widetilde{IA}$ theory via $T$-duality. The $K$-theory,
however, predicts that there should only be one source of $\bZ_2$
$Dp$-brane charge for $p=-1,3,7$. This is because the two different
types of states (wrapped and unwrapped branes) are stable in different
regions of the moduli space. In the type \~{I} theory the unwrapped
$Dp$-$\overline{Dp}$ pair are stable for large $R$, while the wrapped
$D(p+1)$-brane is stable for small $R$. 

\begin{table}[h]
\centering
\noindent\makebox[\textwidth]{\begin{tabular}{|| c || c || c | c | c ||}
\hline
$Dp$-brane & $KSC^{p-8}$ & Region of Stability & Type \~I & Type $\widetilde{IA}$\\
\hline\hline
$D8$ & $\bZ$ & stable for all radii & wrapped $D9$-brane  & unwrapped $D8$-brane \\ \hline
\multirow{2}{*}{$D7$} & \multirow{2}{*}{$\bZ_2$}& $R_{\tilde{I}}<\frac{1}{\sqrt{2}}, R_{\widetilde{IA}}>\sqrt{2}$  & wrapped $D8$-brane & unwrapped $D7$-brane \\ \cline{3-5}
& & $R_{\tilde{I}}>\frac{1}{\sqrt{2}}, R_{\widetilde{IA}}<\sqrt{2}$ & unwrapped $D7$-brane & wrapped $D8$-brane\\ \hline
$D6$ & $0$& &  & \\ \hline
$D5$ & $\bZ$& stable for all radii & unwrapped $D5$-brane  & doubly wrapped $D6$-brane \\ \hline
$D4$ & $\bZ$& stable for all radii  & wrapped $D5$-brane  & unwrapped $D4$-brane \\ \hline
\multirow{2}{*}{$D3$} & \multirow{2}{*}{$\bZ_2$}& $R_{\tilde{I}}<\frac{1}{\sqrt{2}}, R_{\widetilde{IA}}>\sqrt{2}$  & wrapped $D4$-brane & unwrapped $D3$-brane \\ \cline{3-5}
& & $R_{\tilde{I}}>\frac{1}{\sqrt{2}}, R_{\widetilde{IA}}<\sqrt{2}$ & unwrapped $D3$-brane & wrapped $D4$-brane\\ \hline
$D2$ & $0$&  & & \\ \hline
$D1$ & $\bZ$& stable for all radii  & unwrapped $D1$-brane  & doubly wrapped $D2$-brane \\ \hline
$D0$ & $\bZ$& stable for all radii  & wrapped $D1$-brane  & unwrapped $D0$-brane \\ \hline
\multirow{2}{*}{$D(-1)$} & \multirow{2}{*}{$\bZ_2$}& $R_{\tilde{I}}<\frac{1}{\sqrt{2}}, R_{\widetilde{IA}}>\sqrt{2}$  & wrapped $D0$-brane & unwrapped $D(-1)$-brane \\ \cline{3-5}
& & $R_{\tilde{I}}>\frac{1}{\sqrt{2}}, R_{\widetilde{IA}}<\sqrt{2}$ & unwrapped $D(-1)$-brane & wrapped $D0$-brane\\ \hline
\end{tabular}}
\smallskip
\caption{$D$-brane charges in the type \~I and type $\widetilde{IA}$ theories.}
\label{Table:charges_It}
\end{table}

We could also have determined the wrapped branes in the type $\widetilde{IA}$
theory by looking at the appropriate $K$-theory in the vicinity of the
$O$-planes. We saw in the previous subsection that wrapped $Dp$-brane
charge in the type IA theory is classified by $KO^{p-9}$, so this will
classify wrapped $Dp$-brane charges near the $O8^+$-plane. Near the
$O8^-$-plane the orthogonal bundle is replaced with a symplectic
bundle, so the wrapped $Dp$-brane charge will similarly be classified
by $KSp^{p-9}$.  
\begin{table}[htb]
\centering
\noindent\makebox[\textwidth]{\begin{tabular}{|| c || c | c | c | c | c | c | c | c | c | c ||}
\hline
$Dp$-brane & $D8$ & $D7$ & $D6$ & $D5$ & $D4$ & $D3$ & $D2$ & $D1$ & $D0$ & $D(-1)$\\
\hline
$KO^{p-9}$ & $\bZ_2$& $\bZ_2$& $0$& $\bZ$& $0$& $0$& $0$& $\bZ$& $\bZ_2$& $\bZ_2$\\ \hline
$KSp^{p-9}$ & $0$& $0$& $0$& $\bZ$& $\bZ_2$& $\bZ_2$& $0$& $\bZ$& $0$& $0$\\ \hline
\end{tabular}}
\smallskip
\caption{Wrapped $D$-brane charges near the $O8^+$- and $O8^-$-planes in type IIA orientifolds.}
\label{Table:charges_O8pmw}
\end{table}
As can be seen from table \ref{Table:charges_O8pmw}, this correctly
accounts for the BPS $D5$- and $D1$-brane charge coming from wrapped
$D6$- and $D2$-branes respectively. It also correctly predicts the
non-BPS $Dp$-brane charge contribution from wrapped
$D(p+1)$-branes for $p=-1,3,7$. It would also seem to imply the
existence of non-BPS $Dp$-brane charge for $p=0,4,8$ coming from wrapped
$D(p+1)$-branes. Just as in the unwrapped case, the non-BPS wrapped
$D5$-brane, for example, will be stable near the $O8^+$-plane, but not
globally stable. This example shows how $KR$-theory picks up all
global aspects of stable $D$-brane charges on the orientifold, though
the information about wrapped and unwrapped branes is sometimes
obscured. We were able to gain that information by looking at the
appropriate $K$-theory that classifies charges locally near each $O$-plane
and then comparing it to the $KR$-theory to see which locally stable
states are globally stable. In fact, with the hindsight of knowing
that non-BPS $Dp$-branes come in pairs (i.e., an unwrapped non-BPS $Dp$
brane will decay into a wrapped non-BPS  $D(p+1)$-brane for certain
radii), we can determine the $Dp$-brane charges by comparing the
wrapped and unwrapped spectrum. 

We see by comparing the first and second lines of Table
\ref{Table:charges_O8pm} with the first and second lines of Table
\ref{Table:charges_O8pmw} that the stable non-BPS $D$-branes are those
that have stable charges in the first line of Table
\ref{Table:charges_O8pm} and the first line of Table
\ref{Table:charges_O8pmw}, or in the second line of Table
\ref{Table:charges_O8pm} and the second line of Table
\ref{Table:charges_O8pmw}. This corresponds to the fact that both the
wrapped and unwrapped brane must be stable since they contribute to
the charge in different regions of the moduli space. 

Once all of the relevant $K$-theories are known, much of the $D$-brane
content can be determined from the long exact sequence 
\begin{equation}
\cdots\to KSC^{-n-1}(X)\to K^{-n}(X)\to KO^{-n}(X)\oplus
KSp^{-n}(X)\to KSC^{-n}(X)\to\cdots, 
\end{equation}
as suggested in \cite{Olsen:1999}.

As an example, consider the segment starting with $K^{-1}(\pt)$
\begin{equation}
\label{eqn:KSC_LES1}
\xymatrix{0\ar[r] & KO^{-1}\oplus KSp^{-1}\ar[r]\ar@{=}[d]&
  KSC^{-1}\ar^(.6)\alpha[r]\ar@{=}[d]& K\ar^(.35)\beta[r]\ar@{=}[d]&
  KO\oplus KSp\ar[r]\ar@{=}[d]& KSC\ar[r]\ar@{=}[d]& 0\\ 
0\ar[r] & \bZ_2\oplus 0\ar[r] & \bZ_2\ar[r] & \bZ\ar[r] &
\bZ\oplus\bZ\ar[r] & \bZ\ar[r] & 0.} 
\end{equation}
We see immediately that the $2$-torsion in $KSC^{-1}$ comes from
$KO^{-1}$. $KO^{-1}(\pt)$ corresponds to the $\bZ_2$ charge coming
from an unwrapped $D7$-brane or an unwrapped
$D(-1)$-brane in the type IIA theory. We can also see immediately that
$\alpha=0$ in equation \eqref{eqn:KSC_LES1}. This shows that $KSC(\pt)$
is $KO(\pt)\oplus KSp(\pt)$ modulo the relation equating the generators
of $KO$ and $KSp$. This corresponds to the BPS $D$-brane
charge coming from an unwrapped $D8$-brane, unwrapped $D4$-brane, and
unwrapped $D0$-brane. 

Again, this method cannot tell us anything about regions of stability,
or really anything about the sources. We were able to determine the
sources in this situation because of previous knowledge about the
relationship between the physical sources near $O8^\pm$-planes and
$KO$ and $KSp$. 

Now that we've reviewed the type \~I and $\widetilde{IA}$ theories we
can easily obtain the species $0$ cases we are interested in by
compactifying on another circle. 

It is easiest to first discuss the brane content for a half shift in
only one direction, as is pictured in Figure \ref{Fig:species0r}. This
is because Figure \ref{Fig:species0r} corresponds to compactifying the
type \~I and type $\widetilde{IA}$ theories on another
circle. $Dp$-brane charges in the type IIB on $S^{0,2}\times S^{2,0}$
are classified by 
\begin{equation}
KR(S^{8-p}\times S^{0,2}\times S^{2,0}, S^{0,2}\times S^{2,0})\cong KSC^{p-7}\oplus KSC^{p-8}.
\end{equation}

We can determine the $D$-brane content by compactifying the type \~I
theory on a copy of $S^{2,0}$. Now $Dp$-branes in the type \~I theory
can wrap $S^{2,0}$ and we see that $KSC^{p-8}$ classifies branes in
the type \~I theory that do not wrap $S^{2,0}$, while $KSC^{p-7}$
classifies branes from the type \~I theory that now wrap
$S^{2,0}$. For example, $D7$-brane charge is classified by $KSC\oplus
KSC^{-1}$. The $\bZ$ charge coming from $KSC$ corresponds to the
integral $D8$-brane charge from the type \~I theory now wrapping
$S^{2,0}$. Since the BPS $D8$-brane charge in the type \~I theory came
from a $D9$-brane wrapping $S^{0,2}$, wrapping it additionally on
$S^{2,0}$ shows that the BPS $D7$-brane charge comes from a $D9$-brane
wrapping the entire compact space. The $\bZ_2$ charge coming from
$KSC^{-1}$ corresponds to the $D7$-brane charge in the type \~I theory
that does not wrap $S^{2,0}$. For these branes there are stability
conditions (not present with BPS branes) that cannot be determined by
the $K$-theory analysis. 

We saw that for the type \~I theory, the non-BPS $D7$ brane charge
corresponds to an unwrapped $D7$-brane for large radius and a wrapped
$D8$-brane for small radius (see Table \ref{Table:charges_It}). For
one compactification circle, the stability of the $D7$-brane required
a large radius because in the covering circle the $D7$-brane is a
$D7$-$\overline{D7}$ pair located at antipodal points of the
circle. This argument continues to make sense when we compactify on an
additional circle; however, it is unclear how the stability of the
unwrapped $D7$-brane or wrapped $D8$-brane will depend on the radius
of $S^{2,0}$. Determining the non-BPS brane stability only
in terms of the size of the underlying type \~I theory,\footnote{This
  corresponds to assuming the radius of $S^{2,0}$ is large.} the full
brane content is given in Table \ref{Table:charges_species0r}. Determining
the full stability conditions for the non-BPS branes would again
require doing a full boundary state analysis. The brane content for
the other theories involved can be determined via $T$-duality and is also
shown in Table \ref{Table:charges_species0r}. 

\begin{table}[pht]
\centering
\noindent\scalebox{0.75}{\makebox[\textwidth]{
\begin{tabular}{|| c || c || m{2.25cm} | m{2.25cm} | m{2.25cm} |
    m{2.25cm} | m{2.25cm} ||} 

\hline
\multirow{2}{*}{$Dp$-brane} & $KSC^{p-7}(\pt)$ & Region of & Type IIB
on & Type IIA on & type IIA on &Type IIB on \\ 
\cline{2-2}
 & $KSC^{p-8}(\pt)$ & Stability  & $S_{R_1}^{0,2}\times S_{R_2}^{2,0}$  &
$S_{(+,-)}^{1,1}\times S^{2,0}$ &$S^{0,2}\times S^{1,1}$  &
$S_{(+,-)}^{1,1}\times S^{1,1}$ \\ 
\hline\hline

\multirow{3}{*}{$D7$} & $\bZ$ & stable for all radii & wrapped
$D9$-brane  & $D8$-brane wrapping $S^{2,0}$ & $D8$-brane wrapping
$S^{0,2}$ & unwrapped $D7$-brane \\ \cline{2-7}\cline{2-7} 
&\multirow{2}{*}{$\bZ_2$}& $R_1<\frac{1}{\sqrt{2}}$  & $D8$-brane
wrapping $S^{0,2}$ & unwrapped $D7$-brane  & wrapped $D9$-brane &
$D8$-brane wrapping $S^{1,1}$\\ \cline{3-7} 
& & $R_1>\frac{1}{\sqrt{2}}$ & unwrapped $D7$-brane & $D8$-brane
wrapping $S_{(+,-)}^{1,1}$ & $D8$-brane wrapping $S^{1,1}$ & wrapped
$D9$-brane \\  
\hline\hline

\multirow{3}{*}{$D6$} & \multirow{2}{*}{$\bZ_2$}&
$R_1<\frac{1}{\sqrt{2}}$  & wrapped $D8$-brane & $D7$-brane wrapping
$S^{2,0}$  & $D7$-brane wrapping $S^{0,2}$ & unwrapped
$D6$-brane\\ \cline{3-7} 
& & $R_1>\frac{1}{\sqrt{2}}$ & $D7$-brane wrapping $S^{2,0}$ & wrapped
$D8$-brane & unwrapped $D6$-brane & $D7$-brane wrapping
$S_{(+,-)}^{1,1}$ \\ \cline{2-7}\cline{2-7} 
& $0$ &   &  &  &  \\ 
\hline\hline

\multirow{2}{*}{$D5$} & $0$ & & & & &\\ \cline{2-7}\cline{2-7}
& $\bZ$& stable for all radii & unwrapped $D5$-brane  & $D6$-brane wrapping $S_{(+,-)}^{1,1}$ twice & $D6$-brane wrapping $S^{1,1}$  & wrapped $D7$-brane \\ 
\hline\hline

\multirow{2}{*}{$D4$} & $\bZ$ & stable for all radii & $D5$-brane
wrapping $S^{2,0}$ & doubly wrapped $D6$-brane & unwrapped $D4$-brane
& $D5$-brane wrapping $S_{(+,-)}^{1,1}$\\ \cline{2-7}\cline{2-7} 
& $\bZ$& stable for all radii & $D5$-brane wrapping $S^{0,2}$  &
unwrapped $D4$-brane & wrapped $D6$-brane & $D5$-brane wrapping
$S^{1,1}$ \\  
\hline\hline

\multirow{3}{*}{$D3$} & $\bZ$ & stable for all radii & wrapped
$D5$-brane  & $D4$-brane wrapping $S^{2,0}$ & $D4$-brane wrapping
$S^{0,2}$ & unwrapped $D3$-brane \\ \cline{2-7}\cline{2-7} 
&\multirow{2}{*}{$\bZ_2$}& $R_1<\frac{1}{\sqrt{2}}$  & $D4$-brane
wrapping $S^{0,2}$ & unwrapped $D3$-brane  & wrapped $D5$-brane &
$D4$-brane wrapping $S^{1,1}$\\ \cline{3-7} 
& & $R_1>\frac{1}{\sqrt{2}}$ & unwrapped $D3$-brane & $D4$-brane
wrapping $S_{(+,-)}^{1,1}$ & $D4$-brane wrapping $S^{1,1}$ & wrapped
$D5$-brane \\  
\hline\hline

\multirow{3}{*}{$D2$} & \multirow{2}{*}{$\bZ_2$}&
$R_1<\frac{1}{\sqrt{2}}$  & wrapped $D4$-brane & $D3$-brane wrapping
$S^{2,0}$  & $D3$-brane wrapping $S^{0,2}$  & unwrapped $D2$-brane
\\ \cline{3-7} 
& & $R_1>\frac{1}{\sqrt{2}}$ & $D3$-brane wrapping $S^{2,0}$ & wrapped
$D4$-brane & unwrapped $D2$-brane & $D3$-brane wrapping
$S_{(+,-)}^{1,1}$ \\ \cline{2-7}\cline{2-7} 
& $0$ &   &  &  &  \\ 
\hline\hline

\multirow{2}{*}{$D1$} & $0$ & & & & &\\ \cline{2-7}\cline{2-7}
& $\bZ$& stable for all radii & unwrapped $D1$-brane  & $D2$-brane
wrapping $S_{(+,-)}^{1,1}$ &$D2$-brane wrapping $S^{1,1}$ & wrapped
$D3$-brane \\  
\hline\hline

\multirow{2}{*}{$D0$} & $\bZ$ & stable for all radii &  $D1$-brane
wrapping $S^{2,0}$ & wrapped $D2$-brane & unwrapped $D0$-brane &
$D1$-brane wrapping $S_{(+,-)}^{1,1}$  \\ \cline{2-7}\cline{2-7} 
& $\bZ$& stable for all radii & $D1$-brane wrapping $S^{0,2}$  &
unwrapped $D0$-brane & wrapped $D2$-brane & $D1$-brane wrapping
$S^{1,1}$\\  
\hline\hline

\multirow{3}{*}{$D(-1)$} & $\bZ$ & stable for all radii & wrapped
$D1$-brane  & $D0$-brane wrapping $S^{2,0}$ & $D0$-brane wrapping
$S^{0,2}$ & unwrapped $D(-1)$-brane \\ \cline{2-7}\cline{2-7} 
&\multirow{2}{*}{$\bZ_2$}& $R_1<\frac{1}{\sqrt{2}}$  & $D0$-brane
wrapping $S^{0,2}$ & unwrapped $D(-1)$-brane  & wrapped $D1$-bane &
$D0$-brane wrapping $S^{1,1}$\\ \cline{3-7} 
& & $R_1>\frac{1}{\sqrt{2}}$ & unwrapped $D(-1)$-brane & $D0$-brane
wrapping $S^{0,2}$ & $D0$-brane wrapping $S^{1,1}$ & wrapped
$D1$-brane \\  
\hline\hline

\end{tabular}}}
\smallskip
\caption{$D$-brane charges in the type \~I and type $\widetilde{IA}$
  theories assuming $R_2$ large.} 
\label{Table:charges_species0r}
\end{table}

The $Dp$-brane charges in the type IIA theory compactified on
$S^{0,2}\times S^{1,1}$ are classified by 
\begin{equation}
KR^{-1}(S^{8-p}\times S^{0,2}\times S^{1,1}, S^{0,2}\times
S^{1,1})\cong KSC^{p-7}\oplus KSC^{p-8}. 
\end{equation}
Here, $KSC^{p-7}$ classifies brane that don't wrap $S^{1,1}$ and
$KSC^{p-8}$ classifies branes that do wrap $S^{1,1}$, since this
theory is obtained from the IIB theory on $S^{0,2}\times S^{2,0}$ by
$T$-dualizing $S^{2,0}$. The complete brane content is listed in Table
\ref{Table:charges_species0r}. 

For the type IIA theory on $S_{(+,-)}^{1,1}\times S^{2,0}$ there had been no
description of the brane content in terms of the $KR$-theory of the
topological compactification space, $S^{1,1}\times S^{2,0}$. This led
us to define $KR$-theory with a sign choice in
\cite{Doran:2013sxa}. $Dp$-brane charges are classified by 
\begin{equation}
KR^{-1}_{(+,-)}(S^{8-p}\times S^{2,0}\times S^{1,1}, S^{2,0}\times
S^{1,1})\cong KSC^{p-7}\oplus KSC^{p-8}. 
\end{equation}
Here $KSC^{p-7}$ classifies branes that wrap $S^{2,0}$,
$KSC^{p-8}$ classifies branes that don't wrap $S^{2,0}$, and the
branes that wrap $S_{(+,-)}^{1,1}$ are the same as those that wrap
$S^{1,1}$ in the type IIA theory on $S^{0,2}\times S^{1,1}$ by $2$ $T$-dualities. 

Finally, $Dp$-branes in the type
IIB theory on $S_{(+,-)}^{1,1}\times S_{(+,+)}^{1,1}$ are classified by
\begin{equation}
KR^{-2}_{(+,+,-,-)}(S^{8-p}\times S^{1,1}\times S^{1,1}, S^{1,1}\times
S^{1,1})\cong KSC^{p-7}\oplus KSC^{p-8}, 
\end{equation}
with $KSC^{p-7}$ corresponding to branes
that don't wrap $S^{1,1}$ and $KSC^{p-8}$ corresponding to branes that
do. 

Now let's turn our attention to the case where we shift both the real
and imaginary directions by a half (Figure \ref{Fig:species0b}). The $2$
type IIA theories occurring in Figure \ref{Fig:species0b} are
$S_{(+,-)}^{1,1}\times S^{0,2}$ and $S^{0,2}\times S_{(+,-)}^{1,1}$,
which are dianalytically equivalent to $S^{1,1}\times S^{0,2}$ and
$S^{0,2}\times S^{1,1}$, respectively. Therefore the $Dp$-brane charges
are classified by $KR^{-1}(S^{8-p}\times S^{0,2}\times S^{1,1},
S^{0,2}\times S^{1,1})$. The type IIB theory with $4$ fixed points is
the same as before, but we have introduced a new ambiguity for the
type IIB theory with no fixed points. 

As an example, consider the non-BPS $D7$ charge in the type IIB theory
with no fixed points. We saw that when we shifted in one direction, the
source for this charge was a $D8$-brane wrapping $S^{0,2}$ (at least in
some region of the moduli space). $S^{0,2}\times S^{0,2}$ is
topologically equivalent to $S^{0,2}\times S^{2,0}$, so we would
expect the $D$-brane content to be the same. There are now $2$ copies
of $S^{0,2}$, however, so it is no longer immediately clear which one the
$D8$-brane should wrap. This is related to determining what direction
we should $T$-dualize in as was discussed earlier.  

As noted, $Dp$-brane charges in the type IIA theory on
$S_{(+,-)}^{1,1}\times S^{0,2}$ are classified by
\[
KR^{-1}(S^{8-p}\times S^{0,2}\times S^{1,1}, S^{0,2}\times S^{1,1}),
\]
which is the same as for the non-symmetric case. For the non-symmetric
case, the double $T$-duality between the $2$ type IIA theories related
$2$ different theories. For the symmetric case, it relates the same
theory. 

If $R_1$ and $R_2$ are both large (or both small large) then the $2$
IIA theories in the symmetric case will be in the same regions of
stability for the non-BPS branes. For concreteness, consider the case
where $R_1$ and $R_2$ are both large. The brane content under this
assumption for the non-symmetric case is given in Table
\ref{Table:charges_species0r}. In this region the non-BPS brane charge
comes from an unwrapped $D7$-brane in the type IIB theory on
$S^{0,2}\times S^{0,2}$. In the $2$ T-dual IIA theories this comes
from a $D8$-brane wrapping the copy of $S^{1,1}_{(+,-)}$ in
$S^{1,1}_{(+,-)}\times S^{0,2}$ or $S^{0,2}\times S^{1,1}_{(+,-)}$. In
both cases $S^{1,1}_{(+,-)}$ has a small radius and $S^{0,2}$ has a
large radius, so the IIA theories are truly symmetric. If $R_1$ is
small and $R_2$ large (or vice versa), then the two IIA theories are in
different regions of the moduli space. So if we start with 
the type IIA theory on $S_{(+,-)}^{1,1}\times S^{0,2}$ where the torus
has small volume, the double $T$-dual will give the type IIA theory on
$S^{0,2}\times S_{(+,-)}^{1,1}$ where the torus has large
volume. According to Table \ref{Table:charges_species0r} we would
expect the non-BPS $D7$-charge to be given by a wrapped $D9$-brane in
the type IIA theory on $S^{1,1}_{(+,-)}\times S^{0,2}$ when both
compact directions have small radii. Under a double $T$-duality we
would expect the non-BPS $D7$-brane charge to come from an unwrapped
$D7$-brane in the type IIA theory on $S^{0,2}\times S^{1,1}_{(+,-)}$
with large volume. It is reasonable to expect the unwrapped $D7$-brane
to be stable for large volume based on what we know about type IIA
circle orientifolds, but without performing a full boundary state
analysis we 
cannot be sure how the stability conditions for non-BPS branes on
$S^{0,2}$ and $S_{(+,-)}^{1,1}$ combine in our current case. 

We cannot extend the results in Table \ref{Table:charges_species0r} to
the symmetric case following the prescription described above for
$D6$- and $D2$-brane charge. Let us consider the case of $D6$-brane
charge. For the type IIA theory on $S^{0,2}\times S^{1,1}_{(+,-)}$, we
would expect the non-BPS $D6$-brane charge to come from an unwrapped
$D6$-brane when $S^{0,2}$ has a large radius and $S^{1,1}_{(+,-)}$ has
a small radius, by comparison to the non-symmetric case. This, however,
does not make sense. Under $2$ $T$-dualities this would map to a
$D8$-brane wrapping $S^{1,1}_{(+,-)}\times S^{0,2}$, where again
$S^{0,2}$ has a large radius and $S^{1,1}_{(+,-)}$ has a small
radius. This would imply that both unwrapped $D6$-branes and wrapped
$D8$-branes are stable (and dependent) sources of non-BPS $D6$-brane
charge in this region of the moduli space of $S^{0,2}\times
S^{1,1}_{(+,-)}$. We could then expect the wrapped $D8$-brane to be
stable in $S^{0,2}\times S^{1,1}$. By $T$-duality this would imply a
stable unwrapped $D6$-brane in the type IIA theory on
$S^{1,1}_{(+,-)}\times S^{2,0}$. We know this cannot be possible since
$D6$-branes are unstable near $O^-$-planes in type IIA theories. The
problem is seen more easily by noting that if an unwrapped $D6$-brane
was stable in the type IIA theory on $S^{0,2}\times S^{1,1}_{(+,-)}$,
then under $T$-duality there would be a stable $D7$-brane wrapping
$S^{0,2}$ in the type IIB theory on $S^{0,2}\times S^{0,2}$, which is
not possible. The problem for both the $D2$ and $D6$ charges is that
in the non-symmetric case there is a region where the charge comes
from a $D(p+1)$-brane, $p=2,6$, wrapping $S^{2,0}$ which is not stable
wrapping $S^{0,2}$ (see Table \ref{Table:charges_It}). 

One possible solution to this is simply to say that the only source
for non-BPS $D6$-brane charge in the type IIB theory of $S^{0,2}\times
S^{0,2}$ is a wrapped $D8$-brane, but there are several unsatisfactory
consequences of this. This would preclude the possibility of a stable
$D7$-brane wrapping $S^{1,1}_{(+,-)}$ in the type IIB theory with $4$
fixed points and assume the unwrapped $D6$-brane is stable everywhere
in the moduli space. We would expect the unwrapped $D6$-brane in the
type IIB theory on $S^{1,1}_{(+,-)}\times S^{1,1}_{(+,-)}$ to be
unstable for small volume and a $D7$-brane wrapping $S^{1,1}_{(+,-)}$
to be stable there (more on this below). Furthermore, we know there is
a copy of $S^{2,0}$ in $S^{0,2}\times S^{0,2}$ from Figure
\ref{Fig:species0twist}.  

As another possible resolution to the sources of $D6$-brane charge,
consider the theory with involution $z\mapsto
-\bar{z}+\frac{1+\tau}{2}$ with $\tau=it$, the type IIA theory on
$S^{1,1}_{(+,-)}\times S^{0,2}$. We mentioned earlier that in the $T$-dual
theory $S^{0,2}\times S^{0,2}$, branes that wrap $S^{0,2}$ should wrap
the diagonal since it is equivariant. The diagonal is no longer
equivariant in $S^{1,1}_{(+,-)}\times S^{0,2}$ --- it is exchanged with
the antidiagonal.  Instead we should consider pairs of branes that
wrap the equivariant copies of $S^{0,2}$ pictured in Figure
\ref{Fig:1wrapspecies0sym}. The reason we need to consider pairs
should become apparent momentarily. Note that the red circle and green
circle each wrap the imaginary direction corresponding to $S^{0,2}$,
but do not wrap the real direction corresponding to
$S^{1,1}_{(+,-)}$. It is hard to see what happens to this pair of
branes under $T$-duality, but notice that we can decompose them as the
diagonal and antidiagonal. 

\begin{figure}[h]
\centering
\begin{tikzpicture}[baseline]
\draw[black] (-1,-1) -- (1,-1);
\draw[black] (-1,1) -- (1,1);
\draw[black] (-1,-1) -- (-1,1);
\draw[black] (1,-1) -- (1,1);
\draw[red, very thick] (-1,-1) -- (0,0);
\draw[green] (0,0) -- (1,-1);
\draw[red, very thick] (-1,1) -- (0,0);
\draw[green] (0,0) -- (1,1);
\draw[fill,red] (-1,-1) circle[radius=0.0625];
\draw[fill,red] (-1,1) circle[radius=0.0625];
\draw[fill,red] (1,-1) circle[radius=0.0625];
\draw[fill,red] (1,1) circle[radius=0.0625];
\draw[fill,blue] (-0.05,-0.05) rectangle +(0.1,0.1);
\end{tikzpicture}
\caption{The red thick and green thinner lines show equivariant copies
  of $S^{0,2}$ in $S^{1,1}_{(+,-)}\times S^{0,2}$, which intersect at
  the blue square and red circles (note that all red circles are
  equated).} 
\label{Fig:1wrapspecies0sym}
\end{figure}
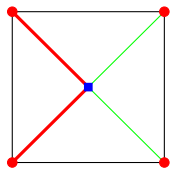

If we then $T$-dualize in the imaginary direction to get
$S^{0,2}\times S^{0,2}$, branes that wrap the diagonal and
anti-diagonal will map to branes that wrap the real direction; see
Figure \ref{Fig:S1insymspecies0}. 

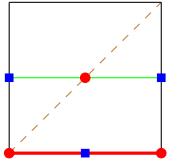
\begin{figure}[h]
\centering
\begin{tikzpicture}[baseline]
\draw[red, very thick] (-1,-1) -- (1,-1);
\draw[brown,dashed] (-1,-1) -- (1,1);
\draw[black] (-1,1) -- (1,1);
\draw[black] (-1,-1) -- (-1,1);
\draw[black] (1,-1) -- (1,1);
\draw[green] (-1,0) -- (1,0);
\draw[fill,red] (-1,-1) circle[radius=0.0625];
\draw[fill,red] (1,-1) circle[radius=0.0625];
\draw[fill,red] (0,0) circle[radius=0.0625];
\draw[fill,blue] (-1.05,-0.05) rectangle +(0.1,0.1);
\draw[fill,blue] (0.95,-0.05) rectangle +(0.1,0.1);
\draw[fill,blue] (-0.05,-1.05) rectangle +(0.1,0.1);
\end{tikzpicture}
\caption{The red and green lines show a copy $S^{2,0}$ in
  $S^{0,2}\times S^{0,2}$, relative to $\widetilde{\tau}=\tau+1$,
  shown by the brown dashed line.} 
\label{Fig:S1insymspecies0}
\end{figure}

As described in the discussion of Figure \ref{Fig:species0twist}, the
pair of red and green lines in Figure \ref{Fig:S1insymspecies0}
($T$-dual to the pair of red and green lines in Figure
\ref{Fig:1wrapspecies0sym}) together define a copy of $S^{2,0}$ in
$S^{0,2}\times S^{0,2}$ relative to the equivalent complex modulus
$\tilde{\tau}=\tau+1$. To describe the unwrapped $D6$- and wrapped
$D8$-branes that appear in the non-symmetric case, we rely on the
previous observation that the non-symmetric case can be obtained from
the symmetric case by instead $T$-dualizing in the $\tilde{\tau}$
direction. Note that when $T$-dualizing along the diagonal, a brane
that wraps to the diagonal will map to an unwrapped brane, while a
brane that wraps the antidiagonal will map to a wrapped brane. While
we cannot give the sources for all of the non-BPS charges, the
$K$-theory analysis greatly constrains what boundary states need to be
considered. 

Finally, we make one last note about the $D$-branes in the type IIB
theory with involution $z\mapsto -z+\frac{1+\tau}{2}$. This is the
same as the type IIB theory with involution $z\mapsto-z+\frac{1}{2}$
(which we considered previously), with the only difference being which
$2$-torsion points are exchanged. The exchange of $2$-torsion points
corresponds to an $O^+$-$O^-$-plane pair, so the only difference
between the two theories is the relative location of the $O^+$- and
$O^-$-planes. Therefore we can easily convert our previous discussion
of $D$-brane content. For example, in a certain region of the moduli
space we found there was a stable $D8$-brane wrapping $S^{1,1}$. In
general this corresponds to a $D8$-brane stretched between the $2$
$O^+$-planes. 

\subsection{The $T$-duality group defined on $y^2=(1-x^2)(1-k^2x^2)$, $k^2<0$}

Letting $M$ be any of the species $1$ real elliptic curves, $Dp$-brane
charges in the type IIA theory with species $1$ are classified by 
\begin{equation}
KR^{-1}(S^{8-p,0}\times M, M)\cong KR^{p-8}(M).
\end{equation}
The calculation of these $KR$-groups is given in Section
\ref{sec:KRcalcs} and results in terms of $D$-brane charges are given
in Table \ref{Table:charges}.

The $KR$-groups do not split into wrapped and unwrapped terms as in
the previous $2$ cases. Before discussing what we can determine about
the sources, let us briefly discuss the charge classifications in the
type IIB theories. $Dp$-brane charges in the type I theory without
vector structure live in $KO^{p-7}(T^2,\widetilde w_2)$, where
$\widetilde w_2\in H^2(T^2,\bZ_2)$ is non-zero (see the 
first column in Table \ref{table:twistedKO}). The $Dp$-brane
charges in the type IIB theory with $3$ $O^+$-planes and $1$
$O^-$-plane are classified by $KR_{(+,+,+,-)}^{-2}(S^{8-p,0}\times 
S^{1,1}\times S^{1,1},S^{1,1}\times S^{1,1})$. (See the third column
in Table \ref{table:twistedKO}.\,\footnote{One might expect the need to
  add an additional twisting due to the $B$-field, but as already
  noted, non-trivial $B$-fields do not affect $O$-planes that do not
  wrap the compact directions. The affect of the non-trivial $B$-field
  is already encoded in the sign choice.}) 

In our calculation of $KR^{-j}(M)$ in \cite{Doran:2013sxa}
we used the exact sequence: 
\begin{equation}
\label{eq:KRbycutting}
\cdots \to KO^j \xrightarrow{\rho} 
KO^{j-1}\oplus KO^j\to \tKR^j(M) \to KO^{j+1} \xrightarrow{\rho} KO^{j} \oplus
KO^{j+1} \to \cdots.
\end{equation}
The connecting maps $\rho$ are given by cup product with a class in
$KO^{-1}\cong \bZ_2$ (into the first summand), which turned out to be
non-zero (see Section \ref{sec:KRspecies1}), and a class in
$KO^0\cong \bZ$ (into the second summand), which turned out to be
zero. Note that if the connecting map were trivial then we would
obtain the short exact sequence 
\begin{equation}
0\to KO^{j-1}\oplus KO^j\to \tKR^j(M) \to KO^{j+1}\to 0.
\end{equation}
This would give
\begin{equation}
\tKR^j(M)\cong KO^{j-1}\oplus KO^j\oplus KO^{j+1},
\end{equation}
or
\begin{equation}
KR^j(M)\cong KO^{j-1}\oplus 2KO^j\oplus KO^{j+1},
\end{equation}
since $KR^j(M)\cong\tKR^j(M)\oplus KO^j(M)$. This is just the
$KR$-theory for the type I theory with trivial $B$-field. So
mathematically, we see that the difference in the brane classification
for the type I theory with non-trivial $B$-field from that with
trivial $B$-field comes from the non-triviality of the connecting maps
$\rho$ in equation \eqref{eq:KRbycutting}, and thus must be related to
the twisting (which is $2$-torsion). 

Now let us return to the $D$-brane sources. A lot of information can
be gained by looking at the brane charges for the three groups side by
side; see Table \ref{Table:charges}. 

As noted previously, the BPS spectrum is the same for all three
groups. As an example, consider the BPS $D7$-brane charge. As with all
the other cases, in the type I theory with $B=\frac{1}{2}$ this
corresponds to a wrapped $D9$-brane. In the type IIA theories it
corresponds to a $D8$-brane wrapping the fixed circle, and in the type
IIB theory with $4$ fixed points it corresponds to an unwrapped
$D7$-brane. As before, $2$ half $D7$-branes located at the
$O7^-$-plane can form a $D7$-brane in the bulk, which can be explained
as a $D7$-brane at one of the $O^+$-planes, showing why the BPS
spectrum is unchanged. Note that the only cases where there could be any
possible confusion are the values of $p$ for which there are $2$
sources of BPS charge. This happens for $D4$-
and $D0$-branes. In both cases there are $D(p+1)$-branes wrapping
$1$-cycles in the type IIB theories, where there are $2$ distinct
$1$-cycles to wrap. In the type IIA theories, where there is only
one $1$-cycle that can be wrapped by a BPS brane, we have a wrapped
$D(p+2)$-brane and an unwrapped $Dp$-brane. 

Determining the non-BPS sources is more complicated, but we can draw
some conclusions by comparing the three groups that still need to be
verified by a boundary state analysis. There are only $3$ values of
$p$ for which the $Dp$-brane charge contains torsion; they are
$p=7,6$, and $-1$. 

The $p=7$ and $p=-1$ cases are related by Bott periodicity, so we will
only describe the situation for the non-BPS $D7$-brane charge. Then the
$D(-1)$-brane charge source can be obtained by shifting the degree by
$8$. We will also only describe the situation for the type IIA
theories, since the IIB theory can be obtained following the
$T$-dualities described. There are $3$ sources for non-BPS $D7$-brane
charges in the species $2$ type IIA theories:  an unwrapped
$D7$-brane, a wrapped $D9$-brane, and a $D8$-brane wrapping
$S^{1,1}$. For species $0$ there is one source of non-BPS
$D7$-brane charge. This can correspond to a $D9$-, $D8$-, or
$D7$-brane depending on where in the moduli space we are. The
important feature here is that the unwrapped $D7$-brane is related to
the wrapped $D8$-brane based on the radius of $S_{(+,-)}^{1,1}$. As
noted in the calculation of the $KR$-theory for the species $1$ case
in Section \ref{sec:KRcalcs}, the $KR$-theory for $M$ with the fixed
circle removed gives $KSC$, showing that away from the fixed circle
the species $1$ case should contain the species $0$ charges. Let's
first consider the $D8$-brane wrapping $S^{1,1}$, which appears in both
the species $2$ and $0$ groups. It seems safe to assume that this is a
source for non-BPS $D7$-brane charge for the species $1$ group when
one compact direction is small and the other is large, 
for the same reason that it contributed non-BPS charge in the other
cases. The copy of $S^{1,1}$ it wraps in the species $1$ case is the
circle perpendicular to the fixed circle (for $\tau=e^{i\theta}$ this
is the diagonal, $S_D$, or anti-diagonal, $S_A$). $T$-dualizing both
directions will exchange $S_D$ 
and $S_A$, sending the $D8$-brane wrapping the copy of $S^{1,1}$ in one
IIA theory to a $D8$-brane wrapping the copy of $S^{1,1}$ in the
$T$-dual IIA theory, which also has one large compact direction and one
small one. This shows that if the $D8$-brane wrapping 
$S^{1,1}$ is stable for a species $1$ type IIA theory, it must also be stable
for the doubly $T$-dual IIA theory. Now in the species $0$ case the $D8$-brane
wrapping $S^{1,1}$ and wrapped $D9$-brane are stable in different
regions of the moduli space, so it would not make sense to include a
wrapped $D9$-brane and $D8$-brane wrapping $S^{1,1}$ in the same
region of stability. However, if we include an unwrapped $D7$-brane in
the other $T$-dual type IIA theory, we will have a wrapped
$D9$-brane. Therefore all that is left that the second source can be
is a $D8$-brane wrapping $S_{(+,-)}^{1,1}$. For $\tau=e^{i\theta}$
the copy of $S_{(+,-)}^{1,1}$ that is wrapped is parallel to the fixed
circle, but shifted by a half. It is easy to show, following similar
arguments, that it is not possible to construct a consistent situation
where the $D8$-brane wrapping $S^{1,1}$ is not stable, since unwrapped
$D7$-branes and wrapped $D9$-branes are stable in different
regions. Therefore we 
see that in both the large and small volume limit the non-BPS
$D7$-brane sources are a $D8$-brane wrapping $S^{1,1}$ and a
$D8$-brane wrapping $S_{(+,-)}^{1,1}$. The important feature that leads
to this conclusion is that from the previous $2$ cases we saw that a
$D8$-brane is stable whether or not it wraps $S^{1,1}$ or $S^{1,1}_{(+,-)}$,
unlike the other branes involved.

For the non-BPS $D6$-brane charge there is also a unique possibility for
consistent $T$-dual sources. As noted, away from the fixed circle we
would expect the species $1$ real elliptic curve to contain the
sources of non-BPS brane charges from the species $0$ real elliptic
curves. If this source was a $D7$-brane wrapping $S^{0,2}$, after
performing a double $T$-duality we would get back a wrapped
$D7$-brane. This would not leave any room for the second source of
non-BPS charge since the wrapped $D8$-brane and unwrapped $D6$-branes
are stable in different regions. This implies that the $2$ sources of
non-BPS $D6$-brane charge should be an unwrapped $D6$-brane and
a wrapped $D8$-brane. 

As one last interesting note on the non-BPS brane charges, consider
the source for non-BPS $D3$- and $D2$-brane charge that appears in the
species $0$ group, but not in the species $1$ or $2$ groups. The
source in the species $0$ IIA theory with an $O8^+$- and an $O8^-$-plane
is an unwrapped $D3$-brane located at the $O8^-$-plane. This
corresponds to a $D4$-brane stretched between the $2$ $O7^-$-planes in
the $T$-dual type IIB theory with $4$ fixed points. Since the type IIB
theories with $4$ fixed points for the species $1$ and $2$ groups do
not have $2$ $O7^-$-planes, this charge cannot exist in these theories
and does not appear in their $K$-theory spectra.

\section{Conclusion}
\label{sec:concl}
Let us summarize what we have accomplished in this paper.  We have
studied \emph{all} orientifold string theories on space-times of the
form $E\times \bR^{8,0}$, where $E$ is an elliptic curve with
holomorphic or anti-holomorphic involution. These are quite natural
spacetimes to consider since elliptic curves are the only compact
Calabi-Yau manifolds of complex dimension $1$, and only
holomorphic or anti-holomorphic involutions will preserve supersymmetry.
These theories divide into
three groups, and all the theories within each group are related to
one another by sequences of $T$-dualities.  For each theory, there is a
corresponding twist (given by the sign choice on the $O$-planes and/or
the B-field), and the twisted $KR$-theory classifies the
D-brane charges.  We determine not only the charge groups but also the
precise brane content for each theory.  To the best of our knowledge,
the brane content of the type I theory without vector structure was
not previously known.

It is worth pointing out a few key points:
\begin{enumerate}
\item The torsion-free part of the $KR$-groups classifies the BPS
  spectrum and does not depend on the twists.
  Twisting only affects the $2$-torsion in the $KR$-groups, not
  the torsion-free part of the groups.
\item Each $T$-duality grouping includes precisely one IIB theory with
  four $O$-planes.  The signs of these $O$-planes can be read off from
  the Legendre normal form of the corresponding real elliptic curve
  with involution, and are reflected in the uniformization of the
  elliptic curve via Jacobi functions.
\item Each $T$-duality grouping also includes a unique variant of type I
  string theory, or in other words, a IIB theory where the holomorphic
  involution on $E$ is either trivial or free.  Possibilities for this
  theory are the conventional type I theory, the type I theory without
  vector structure, and the type \~I theory.
\item A full stability analysis of the various classes of branes still
  remains to be done in some cases, but what we have done here is a first
  approximation based on understanding of theories compactified on a
  circle. For the ``type I theory without vector structure,'' our
  understanding is already complete.
\item The $T$-duality groupings can be understood from either purely
  mathematical or purely physical points of view. It is quite dramatic
  that the calculations of the twisted $KR$-groups (which is pure
  algebraic topology) and the classifications via Legendre normal
  forms (which is pure algebraic/analytic geometry) both confirm what
  had been conjectured by physicists many years ago.
\end{enumerate}

\bibliographystyle{hplain}
\bibliography{T2}
\end{document}